\definecolor{myurlcolor}{rgb}{0,0,0.7}
\definecolor{myrefcolor}{rgb}{0.1,0,0.9}
\newcommand{\rank}{\operatorname{rank}}
\DeclareMathOperator{\tr}{tr}
\DeclareMathOperator{\XOR}{XOR}
\newtheorem{theorem}{Theorem}
\newtheorem{lemma}{Lemma}
\newtheorem{remark}{Remark}
\newtheorem{corollary}{Corollary}[theorem]
\renewcommand{\eqref}[1]{Eq.~(\ref{#1})} 
\newcommand{\figref}[1]{Fig.~(\ref{#1})} 
\def\app#1#2{%
  \mathrel{%
    \setbox0=\hbox{$#1\sim$}%
    \setbox2=\hbox{%
      \rlap{\hbox{$#1\propto$}}%
      \lower1.1\ht0\box0%
    }%
    \raise0.25\ht2\box2%
  }%
}
\newenvironment{proof}[1][\protect\proofname]{\par
	\normalfont\topsep6\p@\@plus6\p@\relax
	\ivlist
	\itemindent\parindent
	\item[\hskip\labelsep\scshape #1]\ignorespaces
}{%
	\endtrivlist\@endpefalse
}
\providecommand{\proofname}{Proof}
\newtheorem{proposition}{Proposition}
\newcommand{\bra}[1]{\langle #1|}
\newcommand{\ket}[1]{|#1 \rangle}
\newcommand{\braket}[2]{\langle #1 \vert #2 \rangle}
\newcommand{\ketbra}[2]{|#1 \rangle\!\langle #2 |}
\newcommand{\abs}[1]{\left|#1\right|}
\newcommand{\defeq}{\mathrel{\mathop:}=}
\newcommand{\Tr}{\mathrm{Tr}}
\providecommand{\factname}{Fact}
\providecommand{\theoremname}{Theorem}
\providecommand{\claimname}{Claim}
\providecommand{\lemmaname}{Lemma}
\providecommand{\definitionname}{Definition}
\def\bbbone{{\mathchoice {\rm 1\mskip-4mu l} {\rm 1\mskip-4mu l}
{\rm 1\mskip-4.5mu l} {\rm 1\mskip-5mu l}}}
\definecolor{KB}{rgb}{0.4,0.3,0.9}
\definecolor{THc}{rgb}{0.9,0.3,0.2}
\newcommand{\be}{\begin{equation}}
\newcommand{\ee}{\end{equation}}
\newcommand{\ba}{\begin{eqnarray}}
\newcommand{\ea}{\end{eqnarray}}
\newcommand{\pur}{\operatorname{Pur}}
\newcommand{\st}[1]{\ketbra{#1}{#1}}
\newcommand{\stab}{\operatorname{STAB}}
\newcommand{\pstab}{\operatorname{PSTAB}}
\newcommand{\norm}[1]{\left\| #1 \right\| }
\newcommand{\sma}[1]{M^{(NL,\epsilon)}_{RS}(#1)}
\newtheorem{definition}{\protect\definitionname}
\newtheorem{conjecture}{Conjecture}
\newcommand{\CC}[1]{\textcolor{blue}{(CC) #1}}
\newcommand{\GC}[1]{\textcolor{orange}{(GC #1)}}
\begin{document} 

\title{Gravitational backreaction is Magical}

\author{ChunJun Cao}
\affiliation{Department of Physics, Virginia Tech, Blacksburg, Virginia 24061, USA}

\author{Gong Cheng}
\affiliation{Department of Physics, Virginia Tech, Blacksburg, Virginia 24061, USA}

\author{Alioscia Hamma}
\affiliation{Dipartimento di Fisica ‘Ettore Pancini’, Università degli Studi di Napoli Federico II, Via Cintia 80126, Napoli, Italy}
\affiliation{INFN, Sezione di Napoli, Napoli, Italy}
\affiliation{Scuola Superiore Meridionale, Largo S. Marcellino 10, 80138 Napoli, Italy}

\author{Lorenzo Leone}
\affiliation{Dahlem Center for Complex Quantum Systems, Freie Universit\"at Berlin, 14195 Berlin, Germany}

\author{William Munizzi}
\affiliation{Department of Physics, Arizona State University, Tempe, Arizona 85281, USA}

\author{Savatore F.E. Oliviero}
\affiliation{NEST, Scuola Normale Superiore and Istituto Nanoscienze, Consiglio Nazionale delle Ricerche, Piazza dei Cavalieri 7, IT-56126 Pisa, Italy}

\begin{abstract}
    We study the interplay between magic and entanglement in quantum many-body systems. We show that nonlocal magic, which is supported by the quantum correlations is lower bounded by the nonflatness of entanglement spectrum and upper bounded by the amount of entanglement in the system. We then argue that a smoothed version of nonlocal magic bounds the hardness of classical simulations for incompressible states. In conformal field theories, we conjecture that the nonlocal magic should scale linearly with entanglement entropy but sublinearly when an approximation of the state is allowed. We support the conjectures using both analytical arguments based on unitary distillation and numerical data from an Ising CFT. If the CFT has a holographic dual, then we prove that the nonlocal magic vanishes if and only if there is no gravitational backreaction. Furthermore, we show that nonlocal magic is approximately equal to the rate of change of the minimal surface area in response to the change of cosmic brane tension in the bulk. 
\end{abstract}

\maketitle
\flushbottom

\section{Introduction}
\label{sec:intro}
Entanglement is an important quantum resource and an integral part of our understanding of quantum many-body physics and quantum gravity, such as topological order \cite{kitaev_topological_2006,levin_detecting_2006,hamma_bipartite_2005}, nonequilibrium dynamics \cite{hosur_chaos_2016,vonkeyserlingk_operator_2018,nahum_operator_2018,skinner_measurementinduced_2019}
, spacetime \cite{VanRaamsdonk:2010pw}, and black holes \cite{Almheiri_2013,Maldacena_2013}.  In the anti-de~Sitter/conformal field theory (AdS/CFT) correspondence \cite{Maldacena_1999,Witten:1998qj}, entanglement in the CFT is important for emerging spacetime geometry \cite{Czech_Lampros,Czech_2014,Czech_2017,Radon,Bao:2019bib} in the dual gravity theory, e.g. via the Ryu-Takayanagi formula \cite{Ryu_2006,Lewkowycz_2013,Faulkner_2013,Hubeny_2007}. Surprisingly, this connection between geometry and entanglement holds not only for holographic CFTs, but also for more general quantum many-body systems like tensor network toy models, which have been enormously successful in reproducing an analogous Ryu-Takayanagi formula \cite{HarlowRT}, the emergent bulk geometry, and subregion operator reconstruction through quantum error correction \cite{Pastawski_2015,Hayden_2016,Yang_2016,Harris_2018,ABSC,Steinberg_2023}. This is a profound development as it suggests the lessons from holography may also apply beyond the confines of AdS \cite{Jacobson_2016,Cao_2017,Cao_2018}.

However, the entanglement patterns in the tensor network models alone do not capture the full quantum landscape spanned by holography. Despite many recent advances \cite{dong2023holographic,akers2024background,cheng2022random,ABSC,HMERA,Bao:2019}, it is still unclear how gravity can emerge in such models. In particular, neither the holographic stabilizer codes \cite{Pastawski_2015} nor the random tensor networks \cite{Hayden_2016} can fully capture the CFT entanglement spectrum and gravitational backreaction. stabilizer tensor networks also fail to capture power-law correlations, robust multipartite entanglement, and nontrivial area operators \cite{Akers:2019gcv,Hayden:2021gno,nogo}. From a resource-theoretic perspective, what are these tensor network models missing compared to the low energy states in holographic theories? We show in this work that the answer is magic \cite{bravyi_universal_2005,veitch_resource_2014,stabrenyi,bu_stabilizer_2023}, or more precisely, \emph{nonlocal magic}.

Quantumness comes in two layers: entanglement gives the power of building correlations stronger than classical and violates Bell's inequalities while quantum advantage characterizes the hardness of simulating quantum systems on a classical computer. The latter is distinct from entanglement  --- a task involving a highly entangled system is not always hard to simulate classically as it can be achieved purely using Clifford operations that are classically simulable. 
This notion of classical hardness that constitutes the second layer of quantumness is intimately connected to the amount of nonstabilizerness, also known as magic, in the system. Although magic alone cannot generate the intricate patterns of complexity that are crucial for the  complex behavior in a quantum wave function, when used in conjunction with Clifford operations, nonstabilizerness \cite{bravyi_universal_2005} is both necessary and sufficient in realizing (fault-tolerant) universal quantum computation. Therefore, it is the remaining piece needed for quantum advantage and for simulating holographic conformal field theories.

In addition to being an important resource for fault-tolerant quantum computation \cite{bravyi_universal_2005,veitch_resource_2014} and quantum simulation, pioneering work has established magic as an important ingredient for characterizing quantum many-body systems \cite{white_conformal_2021,sarkar_characterization_2020,Liu_2022,tarabunga2023manybody}, such as dynamics \cite{stabrenyi,chaosbymagic,sewell_mana_2022,rattacaso_stabilizer_2023}, quantum phases \cite{leone2023phase,Niroula:2023meg}, quantum circuits~\cite{leone_quantum_2021,oliviero_transitions_2021,bejan2023dynamical},and randomness\cite{vairogs2024extracting}. In the context of holography, \cite{white_conformal_2021,magicising,tarabunga2023manybody} showed that magic is abundant in CFTs and is therefore expected to play an important role for reproducing the correct CFT entanglement spectrum, for generating power-law correlations, for building nontrivial area operators in holographic codes, and for reproducing the correct multipartite entanglement in holographic geometries \cite{Hayden:2021gno}\footnote{Although this is not noted by the authors explicitly, it is clear that holographic states require $O(1/{G}_N)$ tripartite entanglement but cannot be predominantly GHZ-type \cite{Nezami:2016zni}. }.  

There are also many questions surrounding the role played by magic. Empirically, the amount of nonstabilizerness or non-Gaussianity \cite{veitch_resource_2014,campbell_catalysis_2011,leone_nonstabilizerness_2023,Hebenstreit_2019,saxena_quantifying_2022,bu_stabilizer_2023,bu2023discrete,weedbrook_gaussian_2012} present in a quantum process appears to correlate with the hardness of classical simulations \cite{zhang2024unconditional}, e.g. in stabilizer and matchgate simulations \cite{aaronson_improved_2004,gottesman1997stabilizer,Jozsa_2008,hebenstreit_computational_2020,bravyi_improved_2016,bravyi_trading_2016,bravyi_simulation_2019} as well as in Monte Carlo sampling \cite{magicMC}.
However, its precise connection with complexity is yet unclear. While it is proposed 
 \cite{white_conformal_2021,nogo} that the replication of the CFT entanglement spectrum and emergent gravity in AdS/CFT requires magic, the specific mechanism through which magic accomplishes this also remains uncertain.
Furthermore, although the amount of magic present in a system can be illuminating all by itself, it is becoming clear the distribution of magic is equally, if not more, important for understanding  nonequilibrium dynamics and entanglement spectrum \cite{flatness}. For example, the amount of magic is generally expected to scale volumetrically with the number of qubits in quantum many-body systems. The tensor product of nonstabilizer states, CFT ground states, and Haar random states all have a high magic density and volume law magic scaling, and yet their physical properties and their usefulness for quantum computation are completely different. Therefore, a more profound understanding of the interplay between  entanglement and magic will shed new light on the structure of quantum matter, quantum information, and gravity. {\color{black} More specifically, a robust connection between magic and holography will provide important clues for emerging gravity in tensor network toy models and help constrain magic-state resources in the quantum simulation of conformal field theories. Since magic is generally hard to compute in quantum many-body systems, such a connection will also enable the computation or estimation of magic using geometric means by doing bulk gravity calculations.}

In this work, we report multiple advances in respond to the above queries. We define nonlocal magic and offer compelling evidence for how it is connected to the hardness in classically simulating incompressible states. We provide rigorous bounds as well as computable estimates for nonlocal magic in any quantum system and show that it is lower bounded by the antiflatness of the entanglement spectrum and upper bounded by various functions of the R\'enyi entropies. When applied to CFTs, we propose a straightforward relationship between magic, entropy, and antiflatness. For theories with holographic dual, we show that the nonlocal magic controls the amount of gravitational backreaction in response to stress energy, and thus critical for the emergence of gravity. 


\section{Main results}
In this section, we explain the main results of this paper and lay down informally the setup and strategy of this work. Then, in the following sections, we derive them rigorously. A key goal of this paper is to show that  the nonlocal magic is responsible for the nonflat entanglement spectrum in a CFT and for the backreaction in AdS through the AdS-CFT dictionary. \textcolor{black}{We also identify inequalities between nonlocal magic, spectral nonflatness, and entanglement for general quantum systems.}
\begin{figure}
    \centering
    \includegraphics[width=\linewidth]{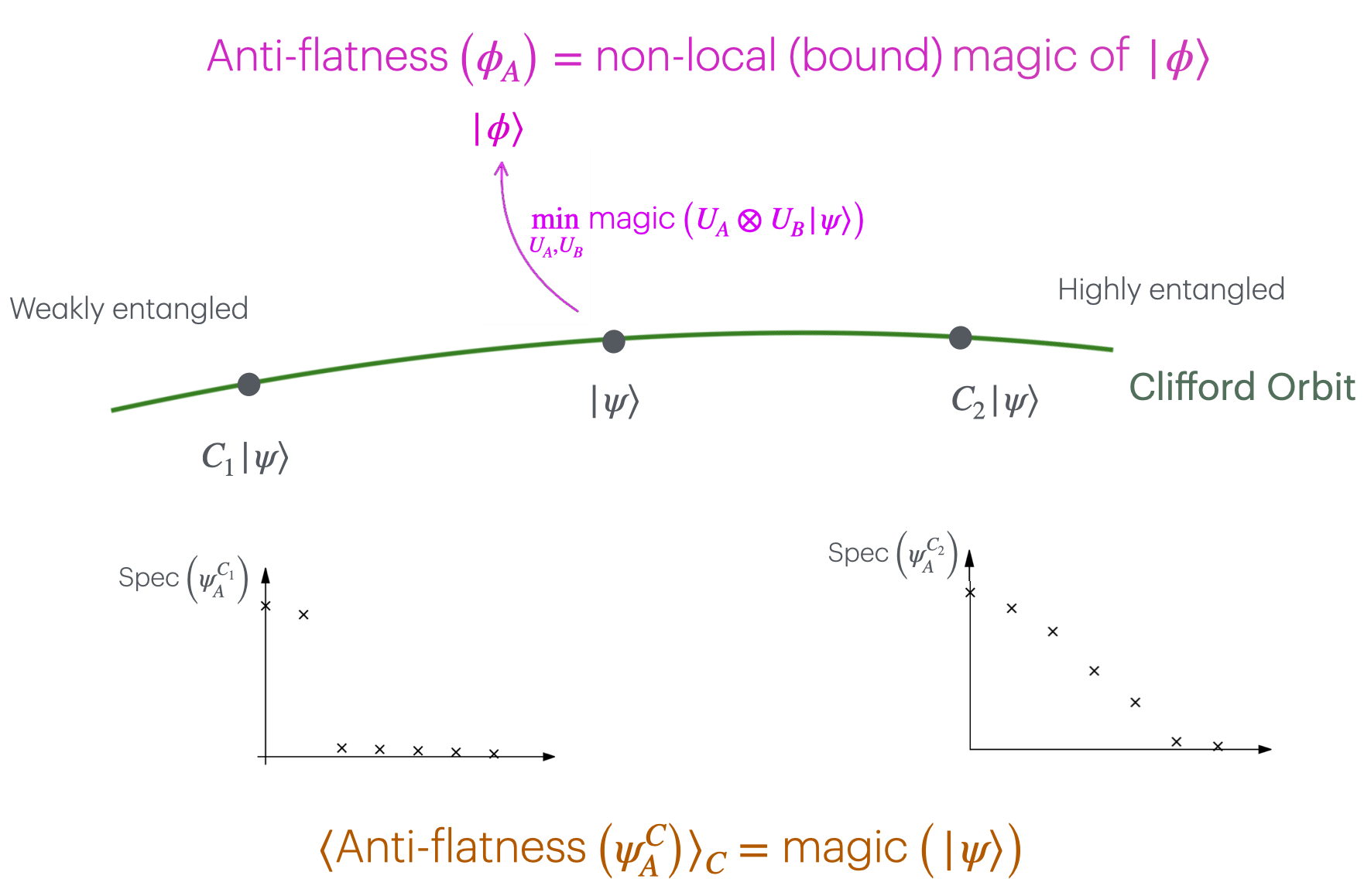}
    \caption{States on the same Clifford orbit have the same magic but most of them are very entangled and the average flatness of their entanglement spectrum is the magic of the full state $|\psi\rangle$. Moreover, the nonlocal (bound) magic in a state is equal to the antiflatness of the magic-reduced state $|\phi\rangle$. This is a manifestation of the entanglement-magic duality. }
    \label{fig:mf2}
\end{figure}

Since the seminal work of Ryu and Takayanagi~\cite{Ryu_2006}, a number of entries have been added to the AdS-CFT dictionary where one can connect quantum information-theoretic quantities on the boundary to geometric quantities in the bulk. Notably, the correspondence can be used to find the holographic dual to functions of the spectrum of a reduced density operator $\psi_A$ in the conformal field theory \cite{Dong:2018lsk}, where $A$ is a subsystem of the CFT. 
The strategy of this work is to find a holographic dual of magic in a state $\psi$  by connecting it to the spectrum of its reduced density operator $\psi_A$.

At first sight, this may seem like an impossible task. There are several reasons, due to the fact that  the way magic relates to  spectral properties is complicated.

First of all,
 magic is generally a property of the full state $\psi$, so how can the spectrum of $\psi_A$ give us information on the magic of the full parent state $\psi$? 
 The connection between spectral properties of the reduced state  $\psi_A$ and the magic of the full state $\psi$ comes from the very remarkable fact that the magic of a state $\psi$ is related to the 
average deviation from the flat spectrum of the spectrum of the reduced density operator $\psi_A$ through the Clifford orbit \cite{flatness,Keeler:2022ajf}. The Clifford orbit preserves the magic, but entangles the system \cite{Keeler:2023xcx,Munizzi:2023ihc,hfhg-2z68}, therefore populating the spectrum of the reduced density operator. In fact, there is no need to take this average, as long as the spectrum of the subsystem density operator possesses an entropy obeying volume law. In this case, its antiflatness is enough to probe the magic of the full state, see Fig.\ref{fig:mf2} for a pictorial representation.

A second difficulty comes from the fact that the above result connecting local antiflatness and global magic comes from a resource theory of magic that considers magic-free only the states that can be purified in stabilizer states, we call such a theory $\stab_0$. This is the resource theory of magic established by the null set of stabilizer entropy \cite{stab0-inprep}.  A consequence of this theory is that there are magicful states from which cannot be distilled pure magic-states by means of Clifford (free) operations. 
We will first therefore first develop this theory by employing as monotones both the trace distance $M_{dist}$ and relative entropy of resource $M_R$. They will both be useful later to establish our results.

The third difficulty is related to CFT as the states in this theory are not hosting volume law for entanglement. In order to exploit the flatness-magic correspondence for a theory that generally has an area-law scaling of entanglement, we must focus on the boundary $\partial A$ between the subregions $A$ and its complement $B$ where most of the entanglement is being mediated. On the Hilbert subspace supported on $\partial A$, the density operator $\psi_{\partial A}$ is well populated and as a consequence, we can compute its magic through the spectrum. This gives rise to the notion of nonlocal magic and its connection to antiflatness is a manifestation of the entanglement-magic duality \cite{iannotti2025entanglementstabilizerentropiesrandom}. This is the magic that cannot be extracted by local unitaries and is bound to the system like bound entanglement \cite{PhysRevLett.80.5239}. 

The main results of this work are grouped in two parts: 1) quantum information-theoretic results that rigorously define nonlocal magic 
 for both the magic measures defined above, namely the trace distance of nonlocal magic $M_{dist}^{(NL)}$ and the relative entropy of nonlocal magic $M_{R}^{(NL)}$ 
 and relate them to spectral quantities. In particular, it {\color{black} is necessary and sufficient for} \textit{antiflatness}   $\mathcal F$ \cite{flatness}, that is, a measure of how much the spectrum of a density operator is far from a flat distribution; and 2) the application of these tools to AdS/CFT by first making precise the relation between entanglement, nonlocal magic and spectral flatness in a CFT. Then for holographic CFTs, we show that {\color{black}  nonlocal magic is necessary for gravitational backreaction.} {\color{black} Leveraging this connection, we provide both a quantitative estimate of the nonlocal magic resource needed to simulate conformal field theories and show that holographic quantities can be leveraged to determine quantum many-body nonlocal magic.}
 
\subsection{Quantum information-theoretic results}

The first result is that, given the bipartition $AB$, for a subsystem $A$ of a quantum state $\psi_{AB}$,  $M_{dist}^{(NL)}$ is lower bounded by the antiflatness $\mathcal F(\psi_A)$ and upper bounded by the entanglement:
\begin{equation}
\begin{split}
		&\mathcal{F}(\psi_{ A})/8 \le M_{dist}^{(NL)}(\psi_{AB})\\ &\le  \sqrt{1-e^{-S_{max}(A)}+e^{S_\infty(A)}\left(1-\frac{e^{\log d \lfloor S_{max}(A)/\log d \rfloor}}{e^{S_{max}(A)}}\right)}
\end{split}
\end{equation}
where $\lfloor\cdot\rfloor$ is the floor function, $S_{max}(A):=\log \rank{\psi_A}$, $S_{\infty}(A)=\lambda_{\text{max}}(\psi_A)$. For this, we assume the total Hilbert space is a tensor product of qubits (or qudits) with uniform local dimension $d$. 

Second, by using the nonlocal magic measured by relative entropy $M^{(NL)}_{RS}$, one can find another relationship between magic in a  quantum state $\psi$ and its entanglement: 
\begin{equation}\label{eq:relstab0}
        S_{max}(A)-S(A) \leq  M^{(NL)}_{RS}(\psi_{AB})\leq \log d \lceil   S_{max}(A)/\log d  \rceil,
\end{equation}
where $\lceil\cdot\rceil$ is the ceiling function. The lower and upper bounds in the above equation are essentially tight for weakly entangled states. 
\cref{eq:relstab0} has also the advantage of allowing one to find good estimates for $M^{(NL)}_{RS}(\psi_{AB})$ in terms of the Schmidt coefficients of $\psi_{AB}$ (see \cref{estimate_prop}). This is important because nonlocal magic is otherwise very difficult to calculate.  Moreover, the relative entropy of magic allows us to define the smoothed (nonlocal) magic {\color{black} for  systems in the continuum, such as a quantum field theory, }as 
\begin{equation}\label{smoothednlmagic}
\sma{\psi_{AB}}\defeq \min_{\Vert\chi-\psi_{AB\Vert}<\epsilon}M_{RS}^{(NL)}(\chi).
\end{equation}
For a pure state therefore, we obtain the bounds 
\begin{equation}
\begin{split}
     S_{max}^{\epsilon}(A)-(1-\epsilon)^{-1}S(A)&\leq \sma{\psi_{AB}}\\
     &\leq \log d \lceil   S^{\epsilon}_{max(A)}/\log d  \rceil,
     \label{eqn:smoothinequality0}
\end{split}
\end{equation}
with the smoothed maximal entropy is defined as $S_{max}^{\epsilon}(A):=\min_{\Vert\chi-\psi_A\Vert<\epsilon} \ln(\rank(\chi))$. Since the lower bound quantifies the compressibility of a state, we show that incompressible states with low entanglement, but high nonlocal magic, can still be difficult to classically simulate.

Finally, for a system of qubits, we use the spectral information $\{\lambda_i\}$ of state $\psi$, along with the magic measure known as stabilizer 2-R\'enyi entropy $\mathcal{M}_2$, to estimate the nonlocal magic $\mathcal{M}_2(\{\lambda_i\})$. This calculation yields a tighter upper bound on nonlocal magic than \cref{eq:relstab0}, stating
\begin{equation}
\mathcal{M}_2^{NL}(\psi_{AB})\leq \min\{2S_2(A),4(S_{max}(A)-S_{1/2}(A))\},
\end{equation}
where $S_{n}(A)$ are the R\'enyi-$n$ entropies of $\psi_A$ and $S_{max}$ is the logarithm of Schmidt rank, which is taken to be an integer power of two. 

For each of the above measures, we show that nonlocal magic vanishes if and only if the entanglement spectrum is flat, see Lemma \ref{lemmaNL}.

\subsection{AdS/CFT results}

We now state the main holographic result of this work. One can use nonlocal magic to derive an RT-like formula for gravitational backreaction, defined as the susceptibility of a backreacted surface area $\mathcal{A}$ with respect to the insertion of a cosmic brane with tension $\mathcal{T}$. The first step is connecting backreaction to spectral quantities.
{\color{black} The first spectral quantity of interest is the capacity of entanglement $C_E$, which is also the variance of the entanglement spectrum with respect to the probability defined by the density operator itself, see Eq.\ref{sigmalogrho}. The main result is
\be
\frac{\partial \mathcal{A}}{\partial \mathcal{T}}=-C_E(\psi)
\ee
we show that $C_E(\psi)$ is a measure of antiflatness. Then, thanks to \cref{def:nonlocfromcapacity} and \cref{th:capacity}, we show that we can connect antiflatness to nonlocal magic. Namely, we know that nonlocal magic $|M_{C_E}^{NL}-C_{E}|\le 1$ and therefore
\be
\frac{\partial \mathcal{A}}{\partial \mathcal{T}}=-M_{C_E}^{NL}(\psi)\pm 1
\ee
the above implies that, in case of extensive nonlocal magic,
\ba
\frac{\partial \mathcal{A}}{\partial \mathcal{T}}\simeq-M_{C_E}^{NL}(\psi).
\ea
Hence the gravitational backreaction {\it is} the nonlocal magic in all the regions responding to the stress energy of the cosmic brane.
}

\textcolor{black}{As we discuss multiple possible measures of magic, qualitatively similar relations can also be obtained for these other measures. First, we connect the backreaction to spectral quantities. By restricting to small subregion $A$ or in the near-flat limit, we have the approximation,}
%
\ba\label{eq:nFoverpurity}
\left.\frac{\partial \mathcal{A}}{\partial \mathcal{T}}\right\vert_{\mathcal{T}=0} \approx -\Big(\frac{4G}{\mathrm{Pur}(\psi_A)}\Big)^2\mathcal{F}(\psi_A),
\ea
\textcolor{black}{which holds when the higher order moments of spectrum (beyond variance) are negligible.}  
 
Together with the relation between antiflatness and non local magic, Theorem \ref{th:magicdist},
we find  
\begin{equation}
\begin{split}
M_{dist}^{(NL)}(\psi_{AB})&\ge \frac{1}{8}\Big(\frac{\mathrm{Pur}(\psi_A)}{4G}\Big)^2\left\vert\frac{\partial \mathcal{A}}{\partial \mathcal{T}}\right\vert_{\mathcal{T}=0} \\
&\ge \frac{1}{8}\Big(\frac{e^{-\mathcal A/4G}}{4G}\Big)^2\left\vert\frac{\partial \mathcal{A}}{\partial \mathcal{T}}\right\vert_{\mathcal{T}=0} \propto \frac{1}{8}\left|\frac{\partial e^{-2\mathcal A/4G}}{\partial \mathcal{T}} \right|_{\mathcal{T}=0} 
\end{split}
\end{equation}
the left-hand side is the magic in the CFT side, and the right-hand side (RHS) of the above equation is a measure of the backreaction in AdS. As we prove in  \cref{section:brane}, the above equation also implies that backreaction is nonzero only if nonlocal magic is non vanishing.  

Further exploiting the structure of entanglement in CFT (see \cref{eqn:NLMflatness}), we can also obtain a simpler relation that holds more generally \textcolor{black}{without constraint on the spectrum:} 
%
\ba\label{branemagicNL0}
\left\vert\frac{\partial \mathcal{A}}{\partial \mathcal{T}}\right\vert_{\mathcal{T}=0} 
\approx \frac{(4G)^2}{\kappa} \mathcal{M}_2^{NL}(\psi_{AB})
\ea
which shows a more direct relation between gravitational backreaction and nonlocal magic based on the stabilizer 2-R\'enyi entropy for some constant $\kappa$.

Now for more general CFTs that need not have holographic duals, the above relations continue to hold with suitable substitutions of $\mathcal{T}\rightarrow (n-1)/4Gn$ and $\mathcal{A}/4G\rightarrow \tilde{S}_n$ where $\tilde{S}_n$ is a function of R\'enyi entropy defined by Ref.~\cite{Dong:2018lsk}. 
{\color{black} Provided that an approximate unitary distillation argument holds, we provide compelling analytical arguments that the amount of nonlocal magic is proportional to the an additive antiflatness measure in the system. We further show that the exact nonlocal magic in the CFT scales as $S(A)$ whereas the smoothed nonlocal magic scales as $\sqrt{S(A)}$. We also backup these claims with numerical evidence in an Ising CFT. Since the UV regulated non-stabilizer resource cost in an infinite dimensional system like a quantum field theory can remain finite, the approximation also permits a quadratic reduction compared to the na\"ive expectation in a quantum simulations.}
We then conjecture that such relations hold for general CFTs and apply this conjecture to evaluate magic for selected examples in holographic CFT using \cref{eqn:smoothinequality0}. Specifically, we do so for the static thermofield double state, and for nonequilibrium dynamics after local and global quantum quenches. We also examine the magic evolution in a time-evolved wormhole geometry described by a thermal field double state.
\section{NONLOCAL Magic}

\subsection{Magic measures}\label{sec:magicmeasure}
In this section, we introduce several measures of magic that will be central to supporting the claims in this paper. In order to properly establish a magic-state-resource theory, it is essential that we define an initial null set for such a resource theory. To achieve this purpose, we introduce three null sets, which we label as $\pstab$,  $\stab_0$, and $\stab$. Then we derive the free operations on such sets. 

Additionally, we must introduce several useful concepts: the Pauli group, the Clifford group, and the set of stabilizer quantum states. Consider the Hilbert space of single-qudit $\mathcal{H} = \mathbb{C}^d $, on which we define the following Pauli operators
\begin{equation}
	X\ket{i}=\ket{i+1} \quad Z\ket{j}=\omega^j\ket{j}, 
	\label{eq:XZ}
\end{equation}
where $\omega\equiv \exp(2 i \pi/d)$. The selection of operators in \eqref{eq:XZ} likewise defines the qudit computational basis $\{\ket{i}\}_i^d$.

The Pauli group $\tilde{\mathcal{P}}$ is defined as follows
\begin{equation}
	\tilde{\mathcal{P}}\equiv \langle \tilde{\omega}\bbbone, X,Z \rangle 
	\label{eq:Pauligroup}
\end{equation}
where $\langle \cdot \rangle$ labels the set generated by $\{\tilde{\omega}\bbbone, X,Z\}$, and $\tilde{\omega}=\omega$ for $d$ odd, and $\tilde{\omega}=\exp[i\pi /d] $ for $d$ even. When the number of qudits is $n$, the Pauli group $\tilde{\mathcal{P}}_n$ is defined as the $n$-fold tensor product of the single qudit Pauli group $\tilde{\mathcal{P}}$. 

The Clifford group $\mathcal{C}(d^n)$ is defined as the normalizer of the Pauli group, meaning that for any $U\in\mathcal{C}(d^n)$ we have $U^{\dagger}\tilde{\mathcal{P}}_n U\equiv\tilde{\mathcal{P}}_n$. The group $\mathcal{C}(d^n)$ is a multiplicative matrix group. For qubits $d=2$, it  can be generated by the Hadamard, phase, and controlled-Z quantum gates
\begin{equation}\label{CliffordGates}
   \operatorname{H}\equiv \frac{1}{\sqrt{2}}\begin{bmatrix}1&1\\1&-1\end{bmatrix}, \quad \operatorname{P}\equiv \begin{bmatrix}1&0\\0&i\end{bmatrix}, \quad     \operatorname{CZ} \equiv \begin{bmatrix}
            1 & 0 & 0 & 0\\
            0 & 1 & 0 & 0\\
	    0 & 0 & 1 & 0\\
	    0 & 0 & 0 & -1
            \end{bmatrix}.
\end{equation}
For general $d$ the generators are~\cite{jafarzadeh_randomized_2020} the controlled-$Z$ $\operatorname{CZ}$, the quantum Fourier transform $\operatorname{F}$ and the phase gate $\operatorname{P}$, whose action of the $d$-computational basis is
\begin{equation}\label{cliffordqutitgates}
\begin{split}
&\operatorname{CZ}\ket{ii^{\prime}}:=\omega^{ii^{\prime}}\ket{ii^{\prime}} \quad \operatorname{F}\ket{i}:=\frac{1}{\sqrt{d}}\sum_{i\in\mathbb{Z}_d}\omega^{ii^{\prime}}\ket{i^{\prime}} \\
&\operatorname{P}\ket{i}:=\omega^{s(s+\phi_d)/2}\ket{s}
\end{split}
\end{equation}
where $\phi_d=1$ if $d$ is odd, $0$ otherwise. 
Notably, circuits composed of the Clifford gates in \eqref{CliffordGates} can be efficiently simulated on a classical computer \cite{gottesman_heisenberg_1998,aaronson_improved_2004}.

At this point, one can define the notion of stabilizer states for pure states. We first say that a pure state $\ket{\phi}$ is stabilized by $P\in \tilde{\mathcal{P}}_n$ if
$P\ket{\phi}=\ket{\phi}$. Then we define the pure stabilizer states as the set
\begin{equation}
\pstab^{(n)}:=\{ \ket{\phi}\bra{\phi}=\frac{1}{|G|}\sum_{P\in G} P | G\subset \tilde{\mathcal{P}}_n, \; G \,\mbox{abelian} \}
\end{equation}
with the cardinality of $G$ being $|G|=d^n$ and $G$ is a group of commuting Pauli operators. Notice that $\pstab^{(n)}$ is the orbit through the Clifford group of any computational basis state for $n$ qudits, i.e., $\pstab^{(n)}= \{ C\ket{i_1\ldots i_n}| C\in\mathcal{C}(d^n)\}$. The notion of pure stabilizer states conveys the fact of a set of resources that is closed under Clifford operations.

For mixed-states, the most primitive notion of stabilizer states is that of $\stab_0$ in Ref.~\cite{stabrenyi}, defined as the set of states $\sigma=1/{d^n}\sum_{P\in G} P$, where $G$ is a group of commuting Pauli operators (see Ref.~\cite{nielsen_quantum_2000}). In Ref.~\cite{stabrenyi}, $\stab_0^{(n)}$ is introduced as the set of states for which the stabilizer entropy (SE) is zero and  SE is a good monotone for  $\pstab^{(n)}$, see Ref.~\cite{Leone_2024_monotonicity}.
From a more foundational perspective, $\stab_0^{(n)}$ is the set of states that can be purified in $\pstab^{(n)}$ and they can only yield trivial probability distributions, see \cite{stab0-inprep}

When one allows for general probabilities distributions we obtain the convex hull of $\pstab^{(n)}$, namely
 $\stab^{(n)}:=\{\sigma|\sigma=\sum_i p_i \ket{\phi_i}\bra{\phi_i},~ |\phi_i\rangle\in \mathrm{PSTAB}^{(n)}\}$.  Note that $\stab_0^{(n)}\subset\stab^{(n)}$. 
 
The next step in the definition of our measures of magic is to define the free operations of $\stab^{(n)}$ and $\stab_0^{(n)}$. For $\stab^{(n)}$ the free operations are given in Ref.~\cite{veitch_resource_2014}, and we list them here for the sake of completeness: 
\begin{enumerate}
	\item Clifford unitaries. $\rho\rightarrow U\rho U^{\dagger}$ with $U\in \mathcal{C}(d^n)$.
	\item Composition with stabilizer states, $\rho \rightarrow \rho \otimes \sigma $ with $\sigma $ a stabilizer state.
	\item Computational basis measurement on the first qudit, $\rho \rightarrow (\st{i}\otimes \bbbone_{n-1}) \rho (\st{i}\otimes \bbbone_{n-1} )/\Tr(\rho \st{i}\otimes \bbbone_{n-1})$ with probability $\Tr(\rho \st{i}\otimes \bbbone_{n-1})$
	\item Partial trace of the first qudit, $\rho \rightarrow \Tr_{1}(\rho)$
	\item The above operations conditioned on the outcomes of measurements or classical randomness. 
\end{enumerate}
It is straightforward to show that operations $(1)-(4)$ also apply to $\stab_0^{(n)}$ (see~\cref{app:stab0invariance}). However, it is important to note that stabilizer operations conditioned on measurements or classical randomness do not belong to the set of free operations for $\stab_0^{(n)}$. This is an important feature of the $\stab_0^{(n)}$ resource theory as it counts nonflat probabilities as resources. It is the key element to use deviation from flatness as the resource that connects magic in CFT to geometry in AdS.

Given the notion of null sets and free operations, one can then proceed to introduce suitable measures of magic. Let us start by defining the trace distance of magic:
\begin{definition}[Trace distance of magic$_0$]\label{def:mod_tr_dist_mag}
	The trace distance of magic$_0$ of a state $\psi$ is given by:
	\begin{equation}
	M_{\text{dist}}(\psi):=\min_{\sigma\in\stab_0^{(n)}}\frac{1}{2}\left\|\psi-\sigma  \right\|_{1}
  \end{equation} 
\end{definition}
\begin{proposition}
	The trace distance of magic satisfies the following properties: 
 \begin{enumerate}
\item Faithfulness: $M_{\text dist}(\rho)=0$ if and only if $\rho$ is a stabilizer state. 
\item Monotonicity: for all completely positive trace-preserving channels $\xi$ preserving $\stab_0^{(n)}$,  $M_{\text dist}(\xi(\rho))\le M_{\text dist}(\rho)$
\item Subadditivity: $M_{\text dist}(\rho_1\otimes \rho_2)\le M_{\text dist}(\rho_1)+ M_{\text dist}(\rho_2)$
\end{enumerate}
\end{proposition}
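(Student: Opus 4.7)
The plan is to establish each of the three properties by reducing them to standard facts about the trace norm and by exploiting the structural assumptions baked into the definitions of the null set $\stab_0^{(n)}$ and of the admissible channels. I would treat the three items in the order listed, since each uses an ingredient slightly more delicate than the previous.

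For faithfulness, I would simply observe that the trace distance $\frac12\|\psi-\sigma\|_1$ is a bona fide metric on the set of density operators, so $\frac12\|\psi-\sigma\|_1=0$ iff $\psi=\sigma$. Since $M_{\text{dist}}(\psi)$ is the minimum of a nonnegative quantity, $M_{\text{dist}}(\psi)=0$ iff there exists $\sigma\in\stab_0^{(n)}$ with $\psi=\sigma$, which is exactly the statement $\psi\in\stab_0^{(n)}$. (Compactness of $\stab_0^{(n)}$ guarantees the minimum is attained.)

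For monotonicity, I would invoke the contractivity of the trace norm under CPTP maps, i.e.\ $\|\xi(A)\|_1\le\|A\|_1$ for any Hermitian $A$ and any CPTP $\xi$. Let $\sigma^{\star}$ be a minimizer attaining $M_{\text{dist}}(\psi)$. By the hypothesis that $\xi$ preserves $\stab_0^{(n)}$, one has $\xi(\sigma^{\star})\in\stab_0^{(n)}$, so
\begin{equation}
M_{\text{dist}}(\xi(\psi))\;\le\;\tfrac12\bigl\|\xi(\psi)-\xi(\sigma^{\star})\bigr\|_1\;\le\;\tfrac12\bigl\|\psi-\sigma^{\star}\bigr\|_1\;=\;M_{\text{dist}}(\psi).
\end{equation}
The only subtle point is confirming that the four elementary operations listed before the proposition do in fact preserve $\stab_0^{(n)}$; I would simply cite the verification in \cref{app:stab0invariance} referenced earlier in the text.

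For subadditivity, I would take minimizers $\sigma_1^{\star},\sigma_2^{\star}$ for $\rho_1,\rho_2$ respectively. The key structural fact I need is that $\sigma_1^{\star}\otimes\sigma_2^{\star}\in\stab_0^{(n_1+n_2)}$, which follows directly from the definition of $\stab_0$: if $\sigma_i=\frac{1}{d^{n_i}}\sum_{P\in G_i}P$ for abelian Pauli subgroups $G_i$, then $\sigma_1\otimes\sigma_2=\frac{1}{d^{n_1+n_2}}\sum_{P\in G_1\times G_2}P$, and $G_1\times G_2$ is again an abelian subgroup of the joint Pauli group. Then the triangle inequality and multiplicativity $\|A\otimes B\|_1=\|A\|_1\|B\|_1$, together with $\|\rho_i\|_1=\|\sigma_i^{\star}\|_1=1$, give
\begin{equation}
\tfrac12\|\rho_1\otimes\rho_2-\sigma_1^{\star}\otimes\sigma_2^{\star}\|_1\;\le\;\tfrac12\|\rho_1-\sigma_1^{\star}\|_1+\tfrac12\|\sigma_1^{\star}\otimes(\rho_2-\sigma_2^{\star})\|_1\;=\;M_{\text{dist}}(\rho_1)+M_{\text{dist}}(\rho_2),
\end{equation}
after inserting $\sigma_1^{\star}\otimes\rho_2$ as the intermediate term.

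The main obstacle is really just the verification that the null set $\stab_0^{(n)}$ is closed under the operations we implicitly use, namely tensor products (for subadditivity) and the class of channels admitted in item 2 (for monotonicity). Everything else is a textbook manipulation of the trace norm, so the bulk of the work lies in being careful about the scope of the resource theory rather than in any nontrivial estimate.
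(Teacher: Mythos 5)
Your proof is correct and follows essentially the same route as the paper: faithfulness from the metric property of the trace distance, monotonicity from trace-norm contractivity under CPTP maps plus closure of $\stab_0^{(n)}$ under the free operations, and subadditivity from the triangle inequality with a tensor-product stabilizer as the intermediate reference state. If anything, your monotonicity argument (push a minimizer $\sigma^{\star}$ through $\xi$ and use that $\xi(\sigma^{\star})\in\stab_0^{(n')}$) is the cleaner standard form of the paper's slightly more convoluted chain of minimizations, which invokes an inclusion between $\stab_0^{(n')}$ and $\xi(\stab_0^{(n)})$ that your version does not need.
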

\begin{proof} 
	\begin{enumerate}
		\item By definition $M_{dist}(\psi)=0$ if and only $\psi\in\stab_0^{(n)}$, and so $\psi$ is a stabilizer state.
  \item The monotonicity descends from the monotonicity of the trace distance under trace-preserving CP maps. Because given a map $\xi:\stab_0^{(n)}\mapsto\stab_0^{(n^\prime)}$ we have
  \begin{align}
    M_{\text dist}(\xi(\rho))&=\min_{\sigma\in\stab_0^{(n^\prime)}}\frac{1}{2}\norm{\xi(\rho)-\sigma}_1\\&=\min_{\sigma\in\stab_0^{(n^\prime)}}\frac{1}{2}\norm{\xi(\rho-\sigma)}_1\\
    &\le\min_{\sigma\in\xi(\stab_0^{(n)})}\frac{1}{2}\norm{\xi(\rho-\sigma)}_1\\
    &\le\min_{\sigma\in\stab_0^{(n)}}\frac{1}{2}\norm{\rho-\sigma}_1=M_{\text dist}(\rho)
  \end{align}
where we used that $\stab_0^{(n^\prime)}\subseteq \xi(\stab_0^{(n)})$, the proof of the last statement is straightforward. One must observe that since $\xi$ is expressed in terms of stabilizer operations, the only operations that reduce the dimension are partial traces. Therefore, it is evident that since states in $\stab_0^{(n)}$ are mapped to stabilizer states in $\stab_0^{(n^\prime)}$ after a partial trace, the statement must hold true because there are more states whose partial trace returns the same state. 
\item Subadditivity:
\begin{equation}
\begin{split}
    M_{\text{dist}}(\rho_L)&= M_{\text dist}(\rho_1\otimes \rho_2)\\
    &=\frac{1}{2}\min_{\sigma\in\stab_0^{(n)}}\norm{\rho_1\otimes\rho_2-\sigma}_1
\\&=\frac{1}{2}\min_{\sigma\in\stab_0^{(n)}}\|\rho_1\otimes\rho_2-\sigma_1\otimes\sigma_2+\sigma_1\otimes\sigma_2-\sigma\|_1
\\&\leq\frac{1}{2}\|\rho_1\otimes\rho_2-\sigma_1\otimes\sigma_2\|_1\\
&\quad +\frac{1}{2}\min_{\sigma\in\stab_0^{(n)}}\|\sigma_1\otimes \sigma_2-\sigma\|_1
\\&\leq\frac{1}{2}\|\rho_1\otimes\rho_2+\rho_1\otimes\sigma_2 -\rho_1\otimes\sigma_2 -\sigma_1\otimes\sigma_2\|_1
\\&\leq\frac{1}{2}\|\rho_1\|_1\|\rho_2-\sigma_2\|_1+\frac{1}{2}\|\sigma_2\|_1\|\rho_1-\sigma_1\|_1
\\&\leq\frac{1}{2}\|\rho_2-\sigma_2\|_1+\frac{1}{2}\|\rho_1-\sigma_1\|_1
\end{split}
\end{equation}
where we used that $\sigma_1,\sigma_2$ are two stabilizer states, then  $\min_{\sigma\in\stab_0^{(n)}}\norm{\sigma_1\otimes\sigma_2-\sigma}=0$, and the tightest bound is obtained by minimizing over $\sigma_1$ and $\sigma_2$ proving the statement.
	\end{enumerate}
\end{proof}
 One can also define an entropic quantity the  Relative stabilizer entropy of magic:
\begin{definition}[Relative stabilizer Entropy of Magic]\label{def:rel_stab_magic}
The relative stabilizer entropy of magic of $\rho$ is given by
\begin{align}
    M_{RS}(\rho)= \min_{\sigma \in \stab_0^{(n)}}S(\rho||\sigma)  
 \label{eqn:rel_stab_magic}
\end{align}
\end{definition}
\begin{proposition}
    The relative stabilizer entropy is a magic monotone, i.e., 1. it is zero iff $\rho\in \stab_0^{(n)}$, 2. is invariant under Clifford conjugation, 3. is nonincreasing on average under stabilizer measurement, 4. is nonincreasing under partial trace and 5. is invariant under stabilizer composition.
\end{proposition}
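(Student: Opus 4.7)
The strategy is to check the five monotone axioms in turn, relying on two workhorses: the quantum data-processing inequality (DPI) and Klein's inequality for the relative entropy $S(\rho||\sigma)$, together with the closure properties of the null set $\stab_0^{(n)}$ under the free operations enumerated above. Faithfulness (item 1) is immediate from Klein's inequality: $S(\rho||\sigma^{\ast})=0$ at the minimizer $\sigma^{\ast}\in\stab_0^{(n)}$ forces $\rho=\sigma^{\ast}\in\stab_0^{(n)}$. Clifford invariance (item 2) follows from unitary invariance of the relative entropy, $S(C\rho C^\dagger||C\sigma C^\dagger)=S(\rho||\sigma)$, and the fact that $C\stab_0^{(n)}C^\dagger=\stab_0^{(n)}$, so the minimization over $\sigma$ is merely relabeled.

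For partial-trace monotonicity (item 4), I apply DPI to the CPTP map $\Lambda=\Tr_{1}$ to obtain $S(\Tr_{1}\rho||\Tr_{1}\sigma)\le S(\rho||\sigma)$, then use the inclusion $\Tr_{1}\stab_0^{(n)}\subseteq\stab_0^{(n-1)}$ recorded in \cref{app:stab0invariance} so that the outer minimization is not violated. Invariance under stabilizer composition (item 5) combines additivity of the relative entropy, $S(\rho\otimes\tau||\sigma\otimes\tau)=S(\rho||\sigma)$ for a stabilizer state $\tau$, with the closure $\stab_0^{(n)}\otimes\stab_0^{(k)}\subseteq\stab_0^{(n+k)}$, which follows from the fact that the tensor product of two abelian Pauli subgroups is again abelian. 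The direction $M_{RS}(\rho\otimes\tau)\le M_{RS}(\rho)$ is obtained via the product ansatz $\sigma^{\ast}\otimes\tau$, and the reverse $M_{RS}(\rho\otimes\tau)\ge M_{RS}(\rho)$ by invoking item 4 to trace out the stabilizer factor.

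The most delicate step is item 3. For a computational-basis measurement $\{E_{i}=\ketbra{i}{i}\otimes\bbbone_{n-1}\}$, write $p_{i}=\Tr(E_{i}\rho)$, $\rho_{i}=E_{i}\rho E_{i}/p_{i}$, and analogously $q_{i},\sigma^{\ast}_{i}$ for an optimizer $\sigma^{\ast}\in\stab_0^{(n)}$. I would apply DPI to the dephasing channel $\Phi(X)=\sum_{i}E_{i}XE_{i}$, which gives
\begin{equation}
S(\rho||\sigma^{\ast})\;\ge\;S(\Phi(\rho)||\Phi(\sigma^{\ast}))\;=\;D(p||q)+\sum_{i}p_{i}\,S(\rho_{i}||\sigma^{\ast}_{i})\;\ge\;\sum_{i}p_{i}M_{RS}(\rho_{i}),
\end{equation}
provided each $\sigma^{\ast}_{i}\in\stab_0^{(n)}$. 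The main obstacle is precisely this last closure: starting from $\sigma^{\ast}=\frac{1}{d^{n}}\sum_{P\in G}P$ with $G$ abelian, one must verify that the post-projection state $\sigma^{\ast}_{i}$ is again a uniform average over an abelian Pauli subgroup. This is a stabilizer-formalism calculation --- the surviving Pauli strings are those whose action on the measured qudit commutes with the projector, with the off-diagonal strings annihilated and the remaining phases absorbed into the $\ketbra{i}{i}$ factor --- and is exactly the feature that distinguishes $\stab_0^{(n)}$ from the convex hull $\stab^{(n)}$, making deviations from a flat spectrum a genuine resource in this theory.
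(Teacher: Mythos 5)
Your proof is correct and follows essentially the same route as the paper's: Klein's inequality for faithfulness, unitary invariance of the relative entropy plus Clifford-closure of $\stab_0^{(n)}$ for item 2, the data-processing inequality for items 3 and 4, and additivity of the relative entropy for item 5, with the closure of $\stab_0^{(n)}$ under the free operations (established in \cref{app:stab0invariance}) doing the real work throughout. The only differences are cosmetic: you spell out the block-diagonal decomposition $S(\Phi(\rho)\|\Phi(\sigma))=D(p\|q)+\sum_i p_i S(\rho_i\|\sigma_i)$ for the measurement step where the paper cites \cite{veitch_resource_2014}, and you supply the $\geq$ direction of item 5 (via tracing out the stabilizer factor) that the paper leaves implicit.
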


\begin{proof}
The proof is similar to Ref.~\cite[Appendix A]{veitch_resource_2014}, where the only difference is the definition of $\stab^{(n)}$. Here we recount for completeness.
\begin{enumerate}
    \item Note that $S(\rho||\sigma)\geq0$ where equality is attained iff $\rho=\sigma$. Hence it only vanishes when $\rho\in \stab_0^{(n)}$, which by our definition is a stabilizer state. 
	
	\item Recall that $\rm STAB_0$ is invariant under Cliffords, therefore for $U\in\mathcal{C}(d^n)$ 
    \begin{equation}
    \begin{split}
 M_{RS}(U\rho U^{\dagger}) &= \min_{\sigma\in \stab_0^{(n)}} S(U\rho U^{\dagger}||\sigma)\\
 &=\min_{\sigma\in \stab_0^{(n)}} S(\rho ||U^{\dagger}\sigma U)\\
 &=\min_{\sigma\in \stab_0^{(n)}}S(\rho||\sigma).
    \end{split}
    \end{equation}
	
	\item The action of partial stabilizer measurements of the form $V_i=I\otimes |i\rangle\langle i|$ for some Pauli basis state $|i\rangle$ on $\stab_0$ returns a stabilizer state up to normalization. Using that $p_i=\Tr[\rho V_i], q_i=\Tr[\sigma V_i]$ and $\rho_i=V_i\rho V_i^{\dagger}, \sigma_i=V_i\sigma V_i^{\dagger}$, we can reuse the proof from Ref.~\cite{veitch_resource_2014} and note that $$\sum_i p_i S\left(\left.\frac{\rho_i}{p_i}\right\Vert\frac{\sigma_i}{q_i}\right)\leq S(\rho||\sigma).$$ The rest follows because $\sigma_i/q_i$ is again a stabilizer state.
    \item By Lieb and Ruskai \cite{lieb_ruskai}, it is shown that quantum relative entropy is nonincreasing under partial trace, i.e., $S(\Tr_B(\rho_{AB})||\Tr_B(\sigma_{AB}))\leq S(\rho||\sigma)$. 
    \item It is known that for any state $\tau$, $S(\rho\otimes \tau||\sigma\otimes \tau)=S(\rho||\sigma)$, hence the desired result follows when we take $\tau\in \stab_0^{(n)}$.
\end{enumerate}
\end{proof}
\subsection{(Anti)Flatness}\label{sec:antiflatness}
Flatness is the property of a quantum state that describes how close its spectrum is to a flat spectrum. From the operational point of view, the flatness of a state describes how flat is the classical probability distribution over a basis of pure states in which we can decompose it. Of course, this does not imply that this state will return a flat probability distribution for the measurements in any other basis. As an example of flat states, both the completely mixed state and pure states possess flat spectrum. Another notable example \cite{flammia_topological_2009} are the ground states of string-net Hamiltonians, e.g. the toric code and its generalizations. 

Flat states are the free states for the resource theory of flatness. We thus define the null set as
\be
\operatorname{FLAT}^{(n)}:=\left\{\sigma\in\mathcal{H}\,|\, \sigma^2= \frac{\sigma}{\rank{\sigma}}\right\}
\ee
Let us now define the following measure of antiflatness, that is, how far is a spectrum from the flat one. Of course, this quantity must measure the resource defined by $\operatorname{FLAT}^{(n)}$.
\begin{definition}\label{flatnessdist}
 We define the antiflatness of $\psi_A$ as \cite{flatness}
\begin{equation}\label{flatnessdef}
  \mathcal{F}(\psi_A)=\Tr(\psi_A^3)-\Tr^2(\psi_A^2)
\end{equation}
This quantity is very natural as it can be defined classically as the variance of a probability distribution $p(x)$ according to the probability distribution itself. More concretely, if one defines $\langle x\rangle_p := \sum_x xp(x)$, and one defines $\Delta p^2 := \langle (p-\langle p\rangle_p)^2\rangle_p$, then one has 
\ba
\mathcal{F}(\psi_A)= \Delta \lambda^2
\ea 
with $\{\lambda\}\equiv\mbox{spec} [\psi_A]$.
Of course, this quantity is zero on the flat states, that is,
 $\mathcal{F}(\sigma)=0$ for $\sigma\in\operatorname{FLAT}^{(n)}$ as it is immediate to verify. 
\end{definition}

There is a profound connection between antiflatness and magic. It connects magic, which is a property of the full state, to bipartite entanglement, and thus to the spectrum of a reduced density operator. In particular, it has been shown that\cite{flatness}, given a pure state $\psi_{AB}$ in a bipartite Hilbert space $\mathcal H = \mathcal H_A\otimes\mathcal H_B$, its linearized stabilizer entropy $M_{lin}$ is the average antiflatness of $\psi_A$ on the Clifford orbit,  that is,
\begin{equation}\label{flatnessdth1}
\langle\mathcal F(\psi_A^C)\rangle_C =f(d_A,d_B) M_{lin} (\psi_{AB})
\end{equation}
where $\psi^C_A = \Tr_B \psi_{AB}^C \equiv \Tr_B (C\psi_{AB} C^{\dag})$. It is also true that antiflatness shows typicality. Later, we will use this property to connect magic to spectral properties. The main message of \cref{flatnessdth1} is that, as long as the state $\psi$ is very entangled, and therefore $\psi_A$ is full rank, one can use the spectral quantity $\mathcal F(\psi_A)$ to probe magic. Note that---by definition---every density matrix is full rank on its support. 
This will come in handy in the next section.

 It is possible to define another monotone for the resource theory of flatness through the quantum relative entropy,
\be
\mathcal{F}_R(\rho)=\min_{\sigma \in \mathrm{FLAT}^{(n)}}S(\rho\Vert\sigma).
\ee
One can prove the following proposition
\begin{proposition}\label{prop:qrf}
Given a state $\rho\in\mathcal H$, it holds that
\begin{equation}
    \mathcal F_R(\rho)= S_{max}(\rho)-S(\rho)
\end{equation}  
\end{proposition}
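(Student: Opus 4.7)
The plan is to unfold the definition of the quantum relative entropy and exploit the very rigid structure of $\mathrm{FLAT}^{(n)}$. Any $\sigma\in\mathrm{FLAT}^{(n)}$ satisfies $\sigma^2=\sigma/\rank(\sigma)$, which forces $\sigma=\Pi/r$ where $\Pi$ is an orthogonal projector of rank $r=\rank(\sigma)$. Writing
\begin{equation}
S(\rho\Vert\sigma)=-S(\rho)-\Tr[\rho\log\sigma],
\end{equation}
the whole problem reduces to minimizing $-\Tr[\rho\log\sigma]$ over such flat $\sigma$.

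First I would note that the relative entropy is finite only when $\ker(\sigma)\subseteq\ker(\rho)$, equivalently $\mathrm{supp}(\rho)\subseteq\mathrm{supp}(\Pi)$, which in particular forces $r\ge \rank(\rho)$. On the support of $\sigma$, $\log\sigma=-(\log r)\Pi$, so
\begin{equation}
\Tr[\rho\log\sigma]=-(\log r)\,\Tr[\rho\Pi]=-\log r,
\end{equation}
using $\Tr[\rho\Pi]=1$ (since $\mathrm{supp}(\rho)\subseteq\mathrm{range}(\Pi)$). Therefore $S(\rho\Vert\sigma)=\log r-S(\rho)$, a quantity that depends on $\sigma$ only through its rank.

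Next I would minimize over admissible $\sigma$. Since $r\ge\rank(\rho)$ and any integer $r$ in that range is achievable by choosing $\Pi$ to contain the support of $\rho$, the minimum is attained at $r=\rank(\rho)$, realized by the flat state $\sigma^{\star}=\Pi_\rho/\rank(\rho)$, with $\Pi_\rho$ the projector onto $\mathrm{supp}(\rho)$. Substituting back,
\begin{equation}
\mathcal F_R(\rho)=\log\rank(\rho)-S(\rho)=S_{max}(\rho)-S(\rho),
\end{equation}
which is exactly the claim.

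The only subtle point, and the place I would be most careful in the full write-up, is the support condition: one must rule out taking $r<\rank(\rho)$ (which would naively give a smaller value of $\log r$) by observing that such $\sigma$ yield $S(\rho\Vert\sigma)=+\infty$ and thus cannot achieve the minimum. Beyond that, the argument is essentially algebraic: the key structural input is that flatness forces $\log\sigma$ to be a constant multiple of a projector, which collapses $-\Tr[\rho\log\sigma]$ to a single number $\log r$ and makes the minimization trivial.
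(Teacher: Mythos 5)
Your proof is correct and follows essentially the same route as the paper's: both reduce the problem to minimizing $-\Tr[\rho\log\sigma]$ over normalized projectors $\sigma=\Pi/r$, observe that this depends only on $r$ and equals $\log r$, obtain the upper bound from the choice $\Pi_\rho/\rank(\rho)$, and rule out $r<\rank(\rho)$ via the support/finiteness condition (which the paper phrases as a contradiction argument using positive semidefiniteness of $\rho$). Your handling of the support condition is somewhat more streamlined, but there is no substantive difference in the argument.
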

See \cref{qrf} for a proof. 
Note that $\mathrm{FLAT}^{(n)} \supset \stab_0^{(n)}$ where $\stab_0^{(n)}$ is the set of states with zero stabilizer R\'enyi entropy, hence $\min_{\sigma\in \stab_0^{(n)}}S(\rho||\sigma)\geq F_R(\rho)$, therefore the flatness lower bounds the total subregion magic for any state. 
The same would not be true if $\stab^{(n)}$ is the usual stabilizer polytope, because it overlaps with $\mathrm{FLAT}^{(n)}$ but is not a subset as one can take a classical mixture of it such that the eigenvalues of $\rho$ are not equal (or zero).

Finally, let us define yet another flatness that will be natural for holography. Recall from Ref.~\cite{Dong1} that a variant of the R\'enyi entropy is given by,

\begin{equation}
\tilde{S}_n(\rho)=n^2\partial_n\left(\frac{n-1}{n}S_n(\rho)\right)
    =-n^2\partial_n(\frac{\log\Tr(\rho^n)}{n}).
\end{equation}

If we rewrite $\Tr(\rho^n)$ in terms of the spectrum $\{\lambda_k\}$ of $\rho$, it becomes

\begin{equation}
    \tilde{S}_n(\rho)=-n^2\partial_n(\frac{\log(\sum_{k}\lambda_k^n)}{n})=\log(\sum_{k}\lambda_k^n)-n\frac{\sum_k\lambda_k^n\log\lambda_k}{\sum_k \lambda_k^n}.
\end{equation}

Now we take the derivative of this expression and obtain another definition of antiflatness. In fact, this quantity is known as the \textit{Capacity of Entanglement}, which has been explored in the context of condensed matter system \cite{PhysRevLett.105.080501,PhysRevB.83.115322} and in quantum gravity \cite{PhysRevD.99.066012,Nakaguchi:2016zqi,Bueno:2022jbl,Zurek:2022xzl}  {\color{black} where it has an interpretation to leading order as the integrated metric fluctuations over the bulk minimal surface in holographic theories~\cite{PhysRevD.99.066012}.}

\begin{proposition}\label{def:braneflatness}
    $\partial_n\tilde{S}_n (\rho)$ is a measure of antiflatness in that $\partial_n\tilde{S}_n (\rho)=0$ if and only if $\rho$ has a flat spectrum.
\end{proposition}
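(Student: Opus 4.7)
The plan is to reduce $\partial_n \tilde{S}_n(\rho)$ to a manifestly non-negative variance and then read off exactly when it vanishes. Writing $Z_n := \Tr(\rho^n) = \sum_k \lambda_k^n$, the displayed formula in the excerpt is just $\tilde{S}_n = \log Z_n - n\,\partial_n\log Z_n$, so the first differentiation in $n$ telescopes: the $\partial_n\log Z_n$ coming from the first term cancels the unnumbered $\partial_n\log Z_n$ from the product rule on the second, leaving
\begin{equation}
\partial_n \tilde{S}_n(\rho) \;=\; -n\,\partial_n^2 \log Z_n(\rho).
\end{equation}

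The key observation I would invoke next is that $\log Z_n = \log\sum_k e^{n\log\lambda_k}$ is, as a function of $n$, the cumulant generating function of the random variable $\log\lambda$ drawn from the exponentially tilted distribution $p_k^{(n)} := \lambda_k^n / Z_n$. Its first derivative is the mean of $\log\lambda$ under $p^{(n)}$ and its second derivative is the variance, giving
\begin{equation}
\partial_n \tilde{S}_n(\rho) \;=\; -\,n\,\mathrm{Var}_{p^{(n)}}(\log\lambda) \;=\; -n\!\left[\frac{\sum_k \lambda_k^n (\log\lambda_k)^2}{Z_n} - \left(\frac{\sum_k \lambda_k^n \log\lambda_k}{Z_n}\right)^{\!2}\right].
\end{equation}
Because a variance is non-negative, this expression is $\le 0$ for every $n>0$, with equality if and only if $\log\lambda_k$ is constant over the support of $p^{(n)}$, i.e.\ all nonzero eigenvalues of $\rho$ coincide. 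That is precisely the condition $\rho\in\operatorname{FLAT}^{(n)}$ defined earlier via $\sigma^2 = \sigma/\rank(\sigma)$, which settles both directions of the equivalence.

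There is no substantive obstacle I anticipate; the argument is essentially a two-line computation once one recognizes the cumulant-generating-function structure. The only points requiring care are bookkeeping: (i) the sums should be restricted to $\lambda_k > 0$, since zero eigenvalues drop out of $Z_n$ and $p^{(n)}$ automatically, so the statement is really about flatness on the support of $\rho$, consistent with $\operatorname{FLAT}^{(n)}$; and (ii) the forward direction (flat implies $\partial_n \tilde{S}_n = 0$) can also be verified by direct substitution $\lambda_k = 1/\rank(\rho)$ into either the original definition of $\tilde{S}_n$ or the variance form, providing an independent sanity check.
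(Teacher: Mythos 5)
Your proof is correct, and it establishes exactly what the paper's proof establishes: that $\partial_n\tilde{S}_n(\rho)$ equals $-n$ times a non-negative variance-type quantity which vanishes precisely when all nonzero eigenvalues coincide. The route differs only in packaging. The paper differentiates the explicit eigenvalue expression and symmetrizes the numerator into the pair sum $\sum_{(kl)}\lambda_k^n\lambda_l^n\log^2(\lambda_k/\lambda_l)$, reading off non-negativity term by term; you instead telescope $\tilde{S}_n=\log Z_n-n\,\partial_n\log Z_n$ to get $\partial_n\tilde{S}_n=-n\,\partial_n^2\log Z_n$ and invoke the cumulant-generating-function identity, so that the second derivative is manifestly $\operatorname{Var}_{p^{(n)}}(\log\lambda)$ under the tilted distribution $p_k^{(n)}=\lambda_k^n/Z_n$. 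These are the same variance written two ways (the pair sum is just $2Z_n^2$ times the variance), and indeed the paper itself records your form, $\partial_n\tilde{S}_n=-n\operatorname{Var}_\Xi(\log\rho)$ with $\Xi$ the tilted state, immediately after its proof. Your derivation is arguably cleaner and avoids the index gymnastics; your explicit remark that the sums must be restricted to $\lambda_k>0$, so that the equivalence is with flatness on the support of $\rho$, is a point of care the paper's proof leaves implicit (its final line would otherwise involve $\log 0$). No gaps.
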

\begin{proof}
Expanding the definition using the set of eigenvalues of $\rho$.
    \begin{equation}\label{eqn:holononflat}
    \begin{split}
    &\partial_n\tilde{S}_n(\rho)\\
    =&-n\frac{(\sum_k\lambda_k^n\log^2\lambda_k)(\sum_l\lambda_l^n)-(\sum_k\lambda_k^n\log\lambda_k)^2}{(\sum_k\lambda_k^n)^2}\\
    =&-n\frac{(\sum_{kl}\lambda_k^n\lambda_l^n\log^2\lambda_k)-(\sum_{kl}\lambda_k^n\lambda_l^n\log\lambda_k\log\lambda_l)}{(\sum_k\lambda_k^n)^2}\\
    =&-n\frac{\sum_{(kl)}\lambda_k^n\lambda_l^n(\log^2\lambda_k+\log^2\lambda_l-2\log\lambda_k\log\lambda_l)}{(\sum_k\lambda_k^n)^2}\\
    =&-n\frac{\sum_{(kl)}\lambda_k^n\lambda_l^n\log^2\frac{\lambda_k}{\lambda_l}}{(\sum_k\lambda_k^n)^2},
\end{split}
\end{equation}
where $\sum_{(kl)}$ denotes sum over each pair of distinct indices $k\neq l$. Note that each term in the numerator is non-negative. Therefore $\partial_n\tilde{S}_n=0$ if and only if $\log{\frac{\lambda_k}{\lambda_l}}=0$, which is equivalent to $\lambda_k=\lambda_i$ for all $k,l$. 
\end{proof}

This antiflatness (\cref{eqn:holononflat}) can be connected to (\cref{flatnessdef})
by first noticing that the antiflatness $\mathcal{F}(\rho)$ corresponds to the variance of $\rho$. The proof is straightforward
\begin{equation}
\begin{split}
\mathcal{F}(\rho)=&\tr(\rho^3)-\tr^2(\rho^2) =\tr(\rho \, \rho^2) -\tr^2(\rho \, \rho)\\
=&\langle\rho^2\rangle_\rho - \langle \rho \rangle_\rho^2 = \operatorname{Var}_\rho(\rho)
\end{split}
\end{equation}
Let us connect this definition with the derivative at $n=1$. 
Let $\rho\equiv\sum_k\lambda_k \ket{\lambda_k}\bra{\lambda_k}$.
Note that the following relation can also be written as a variance, by defining $p_k=\frac{\lambda_k^n}{\sum_k\lambda_k^n}$, it is easy to observe that $\sum p_k=1$ and we can define the state 
\ba
\Xi:=\sum_k p_k \ket{\lambda_k}\bra{\lambda_k}
\ea
 and so 
\begin{equation}
\begin{split}
\partial_n\tilde{S}_n(\rho)=&-n \sum_{kl}p_k p_l (\log^2 \lambda_k-\log\lambda_k\log\lambda_l) \\
=&-n\langle\log^2\rho\rangle_{\Xi}+n\langle\log\rho\rangle_{\Xi}^2=-n\operatorname{Var}_\Xi(\log\rho)
\end{split}
\end{equation}
Let us compute it for $n=1$
\begin{equation}
\begin{split}\label{sigmalogrho}
\left.{\partial_n \tilde{S}_n}(\rho)\right\vert_{n=1}=& -\sum_{kl}\lambda_k\lambda_l \log\lambda_k(\log\frac{\lambda_k}{\lambda_l})\\ 
=&-\sum_k \lambda_k \log^2 \lambda_k + \sum_{kl} \lambda_k\lambda_l\log\lambda_k\log\lambda_l \\
=&-\tr(\rho\log^2\rho)+\tr^2(\rho\log\rho)\\
=&-\langle\log^2\rho\rangle_\rho+\langle\log\rho\rangle_\rho^2=-\operatorname{Var}_{\rho}(\log\rho)\\
\equiv&- C_E (\rho)
\end{split}
\end{equation}
The quantity $C_E$ is also known as capacity of entanglement. 
Interestingly, when $n=1$, $\Xi$ coincides with $\rho$. Seeing $\log\rho$ as a function of $\rho$, the variances between the two quantities are connected. We make use of standard techniques of error propagation to get the relationship between $\operatorname{Var}_\rho(\rho)$ and $\operatorname{Var}_\rho(\log(\rho))$. 
\be \label{eqn:approxrenyid}
\operatorname{Var}_\rho(\log(\rho))\approx\frac{\operatorname{Var}_\rho(\rho)}{\langle \rho \rangle_\rho^2}=\frac{\operatorname{Var}_\rho(\rho)}{\operatorname{Pur}(\rho)^2}=\frac{\mathcal{F}(\rho)}{\operatorname{Pur}^2(\rho)},
\ee 
{\color{black}The approximation is valid when the spectrum of $\rho$ has a negligible higher order moments compared to the variance (see Appendix \ref{app:bound}).  Therefore, the two measures coincide in the near-flat or weak entanglement regime. }

{\color{black}
In fact,~\cref{eqn:holononflat} has a convenient rewriting as the variance of the modular Hamiltonian spectrum. Given a state $\rho\equiv \sum_k \lambda_k\ket{\lambda_k}\bra{\lambda_k}$,
its eigenvalues can be written as $\lambda_k:=\exp(- E_k)$. This defines the modular Hamiltonian, 
\ba
H:=&-\log \rho\\
=&\sum_k E_k\ketbra{\lambda_k}{\lambda_k}
\ea

Applying~\cref{sigmalogrho} and perform some simple algebra, one obtains that,
\begin{equation}
\begin{split}
\left.{\partial_n \tilde{S}_n}(\rho)\right\vert_{n=1}=&
-\left(\langle H^2 \rangle_{\rho} -\langle H\rangle_{\rho}^2 \right)\\
=& -\langle (E_k-E_l)^2\rangle_{kl} 
\end{split}
\end{equation}

This result extends naturally to any $n$. Noting that $\Xi=\exp(-n H)Z^{-1}[n]$, further algebra leads to
\ba
{\partial_n \tilde{S}_n}(\rho)=-n (\langle H^2 \rangle_{\Xi}-\langle H\rangle_{\Xi }^2).
\ea
}

\subsection{Nonlocal magic, entropy, and antiflatness}\label{sec:magicbounds}
In this section, we are going to introduce the concept of \emph{nonlocal magic}, and how it relates to both entanglement and antiflatness. 

\begin{definition}[multipartite nonlocal magic]
Given $M$ a measure of magic and $\psi_{A_1\dots A_n}\equiv\st{\psi _{A_1\dots A_n}}$ a pure state, we define as $n$-partite nonlocal magic
\begin{equation}
		M^{(n-\rm NL)} (\psi_{A_1\dots A_n}) := \min_{U=\otimes_{i=1}^n U_{A_i}} M(U \psi_{A_1\dots A_n} U^{\dagger}).
  \label{eq:nlmagic}
\end{equation}
\end{definition}
As we exclusively discuss the case of bipartite nonlocal magic when $n=2$ for the rest of this work, we set $A=A_1, B=A_2$ and simply refer to $M^{(NL)}=M^{(2-NL)}$ as nonlocal magic for convenience. 

Intuitively, nonlocal magic is the nonstabilizerness that lives in the correlation between $A$ and $B$ because $U_A\otimes U_B$ removes all ``local'' magic in $A$ or $B$ separately. This is distinct from other notions of long-range magic \cite{white_conformal_2021,Bao_2022,tarabunga2023critical}.  Note that $A,B$ themselves can be multi-qubit systems, so $U_A,U_B$ need not be single-qubit unitaries.

In this work, we will use as measures of magic $M_{\text{dist}}$ and the two relative entropies of magic $M_{R} and M_{SR}$. 

\subsubsection{Nonlocal  magic and flatness}
Let us start with a general relation valid for \textit{any} measure of antiflatness and \textit{any} measure of nonlocal magic. 

\begin{lemma}\label{lemmaNL}
A pure quantum state $\ket{\psi}$ possesses no nonlocal magic, that is,  $M^{NL}(\ket{\psi})=0$, iff $\ket{\psi}$ is unitarily locally equivalent to a state $\ket{\psi'} = U_A\otimes U_B\ket{\psi}$ with flat reduced density matrix $\psi_A'\equiv\tr_B\st{\psi'}$ with integer R\'enyi entropies\footnote{In this work, information is measured using bits. Accordingly, entropies are computed using $\log_d$.}. In formula,
\be
M_{NL}(\ket{\psi})=0\iff 
F(\psi_A)=0 \wedge \rank(\psi_A)=d^{r_A},\,\,r_A\in\mathbb{N} 
\ee
\begin{proof}
     Let us start from the left-to-right implication. We employ the fact that any faithful measure of magic $M(\ket{\psi})$ vanishes on the free states.
For any such measure, its nonlocal counterpart with respect to the bipartition $A|B$ is $M^{NL}(\ket{\psi}):= \min_{U_A\otimes U_B}M(U_A\otimes U_B\ket{\psi})$. Given $M^{(NL)}(\ket{\psi})=0$, then we know that there exist a bilocal unitary $U_A\otimes U_B$ such that $\ket{\psi^{\prime}}\equiv U_A\otimes U_B\ket{\psi}\in \stab_0^{(n)}$. Since $\stab_0^{(n)}$ is closed under partial trace, see \cref{sec:magicmeasure}, then $\psi_A'\in\stab_0^{(n)}$. We know that $\psi_A'\in\mathrm{FLAT}^{(n)}$. Moreover, being $\psi_A'\in\stab_0^{(n)}$ we know that $\rank(\psi_A)=d^{r_A}$ with $r_A\in\mathbb{N}$. Let us now show that also the converse is true.  Consider a flat state $\ket{\psi}$, that is, a state such that its reduced density matrix $\psi_A\equiv\tr_B\ketbra{\psi}{\psi}=\frac{1}{d^{r_A}}\sum_{i}\ketbra{\phi_i}{\phi_i}_A$ where the sum run on $d^{r_A}$ many rank-one projectors $\ketbra{\phi_i}{\phi_i}_A$. Note that we exploited the fact that $S_{\alpha}(A)=r_A\in\mathbb{N}$ for every $\alpha\in[0,\infty)$. Via the Schmidt decomposition, we can write the state as $\ket{\psi}=\sum_{i}\frac{1}{\sqrt{d^{r_A}}}\ket{\phi_i}_A\otimes \ket{\psi_i}_B$. Without loss of generality, we choose now $|A|<|B|$.  We further know that $\langle \phi_i|\phi_j\rangle=\langle \psi_i|\psi_j\rangle=\delta_{ij}$. Now  choose $U_A$ (respectively $U_B$) such that $U_{A}\ket{\phi_i}_A=\ket{i}_A$ (respectively $U_{B}\ket{\psi_i}_B=\ket{i}_B$) for $\ket{i}_A$ (respectively $\ket{i}_B$) being the computational basis on $A$ (respectively $B$). We obtain 
\be\label{magicflatness}
U_A\otimes U_B\ket{\psi}=\sum_{i}\frac{1}{\sqrt{d^{r_A}}}\ket{i}_A\otimes \ket{i}_B\equiv \ket{EPR}_{A\bar{A}}\otimes \ket{j}_{B\setminus \bar{A}}
\ee
where $\ket{EPR}_{A\bar{A}}$ is a EPR pair between the full $A$ and \textit{any} subsystem $\bar{A}\subset B$ such that $|A|=|\bar{A}|$, while $\ket{j}$ is a computational basis state on $B\setminus \bar{A}$.    Since  $\ket{EPR}_{A\bar{A}}\otimes \ket{j}_{B\setminus \bar{A}}$ is a stabilizer state, we obtain
\be
\begin{split}
0=M(U_A\otimes U_B\ket{\psi})\ge& \min_{U_A\otimes U_B}M(U_A\otimes U_B\ket{\psi})\\
=&M^{NL}(\ket{\psi})\ge 0
\end{split}
\ee
  \end{proof}
\end{lemma}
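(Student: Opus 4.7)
The plan is to prove the two implications of the biconditional separately, exploiting the observation that local unitaries $U_A\otimes U_B$ leave the spectrum of $\psi_A$ invariant. Consequently both flatness and Schmidt rank can be evaluated on any convenient representative of the local-unitary orbit of $\ket{\psi}$.

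For the forward direction, I would invoke faithfulness of the magic measure $M$ (either $M_{\text{dist}}$ or $M_{RS}$ works), which gives $M(\rho)=0$ iff $\rho\in\stab_0^{(n)}$. If $M^{(NL)}(\ket{\psi})=0$, some minimizer $U_A\otimes U_B$ in \cref{eq:nlmagic} attains $M((U_A\otimes U_B)\ket{\psi})=0$, so $\ket{\psi'}:=(U_A\otimes U_B)\ket{\psi}$ is a pure element of $\stab_0^{(n)}$, i.e. $\ket{\psi'}\in\pstab^{(n)}$. Since $\stab_0^{(n)}$ is closed under partial trace (listed among the free operations in \cref{sec:magicmeasure}), the marginal $\psi_A'$ lies in $\stab_0^{(n)}$, and is therefore of the form $\frac{1}{|G|}\sum_{P\in G}P$ for an abelian Pauli subgroup $G$. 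Any such state is proportional to a projector of rank $d^{r_A}$ for some $r_A\in\bbN$, with flat spectrum. Applying $U_A^\dagger$ transports these properties to $\psi_A$ unchanged.

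For the reverse direction, assume $\mathcal{F}(\psi_A)=0$ and $\rank(\psi_A)=d^{r_A}$, and take $|A|\le|B|$ without loss of generality. The Schmidt decomposition then reads
\begin{equation}
\ket{\psi}=\frac{1}{\sqrt{d^{r_A}}}\sum_{i=1}^{d^{r_A}}\ket{\phi_i}_A\otimes\ket{\psi_i}_B,
\end{equation}
with orthonormal Schmidt vectors $\{\ket{\phi_i}_A\}$ and $\{\ket{\psi_i}_B\}$. Choosing $U_A$, $U_B$ that send these families onto the computational basis produces $(U_A\otimes U_B)\ket{\psi}=\ket{\mathrm{EPR}}_{A\bar A}\otimes\ket{j}_{B\setminus\bar A}$ for some subsystem $\bar A\subset B$ with $|\bar A|=|A|$ and some computational basis state $\ket{j}$. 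Both tensor factors are stabilizer states, so $M$ vanishes on the right-hand side, which upper-bounds $M^{(NL)}(\ket{\psi})$ by zero and yields the conclusion.

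The main delicate point is ensuring that the forward step lands in $\pstab^{(n)}$ rather than only in the full stabilizer polytope $\stab^{(n)}$; this is precisely why we insist on a magic monotone whose null set is $\stab_0^{(n)}$, so that purity combined with $M(\ket{\psi'})=0$ forces stabilizer purity and guarantees a marginal with the integer-rank flat spectrum required. The remainder is essentially Schmidt-decomposition bookkeeping together with the observation that an EPR pair tensored with a computational basis state is a stabilizer state.
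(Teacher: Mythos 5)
Your proposal is correct and follows essentially the same route as the paper: faithfulness of the $\stab_0$-based monotone plus closure of $\stab_0^{(n)}$ under partial trace for the forward direction, and the Schmidt-decomposition construction yielding an EPR pair tensored with a computational basis state for the converse. Your added remark that the spectrum of $\psi_A$ is invariant under $U_A\otimes U_B$ (so flatness and rank transfer back automatically) is a small but welcome clarification of a step the paper leaves implicit.
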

Notice that a vanishing nonlocal magic is a sufficient  condition for antiflatness to be zero. However, there are possibly states with noninteger R\'enyi entropies that can possess some nonlocal magic without being guaranteed that antiflatness is nonvanishing. With the additional condition of integer R\'enyi entropy, also the other implication holds, that is, a flat state implies vanishing non local magic for any sensible measure of non local magic. {\color{black}
Therefore, the only difference between states with $M_{NL}=0$ and states with antiflatness $\mathcal{F}=0$ is an entanglement spectrum multiple of $d$, where $d$ is the local dimension. This hints at the existence of a tighter connection between nonlocal magic and antiflatness, which we rigorously establish below. Consider the capacity of entanglement of a pure state $\ket{\psi}$, with reduced density matrix $\rho_A=\tr_B\ketbra{\psi}{\psi}$, $C_{E}(\psi)=-\partial_n\tilde{S}_n(\rho_A)|_{n=1}$~\cite{PhysRevD.99.066012}. In \cref{sigmalogrho}, we showed that it can be expressed as  
\be
C_{E}(\psi)=\tr(\rho_A\log^2\rho_A)-\tr^2(\rho\log\rho_A)
\ee
As a consequence of~\cref{eqn:approxrenyid}, one can show that $C_{E}(\psi)=0$ iff $\psi$ has a flat entanglement spectrum. In the next definition, we accommodate this measure for probing nonlocal magic.
\begin{definition}[nonlocal magic from antiflatness]\label{def:nonlocfromcapacity} Let $\psi$ a pure bipartite state on $n$ qudits. We define the nonlocal magic inherited from the capacity of entanglement as 
\be
M_{C_E}^{NL}(\psi)=\lceil\tr(\rho_A\log^2\rho_A)\rceil- S^{2}(\rho_A)
\ee
where $\lceil\cdot\rceil$ is the ceiling function.
\end{definition}
Notice that, up to a floor function, $M_{C_E}^{NL}$ is nothing but the capacity of entanglement. Let us now explore its properties as probe of nonlocal magic.
\begin{theorem}\label{th:capacity}
    Let $\psi$ a bipartite pure state on $n$ qudits. The following facts are true
    \begin{enumerate}
        \item $C_{E}(\psi)\le M_{C_E}^{NL}(\psi) $
        \item $M_{C_E}^{NL}(\psi)=0$ if and only if $\psi$ contains only local magic, that is, $M^{NL}(\psi)=0$;
        \item $|M_{C_E}^{NL}(\psi)-C_{E}(\psi)|\le 1$;
        \item for qubits ($d=2$), let $M_{dist}^{NL}(\psi)$ be the nonlocal magic measure defined through \cref{def:mod_tr_dist_mag}, then
        \be
        \frac{M_{C_E}^{NL}(\psi)}{n^2}\le M_{dist}^{NL}(\psi)+O(n^{-2})
        \ee
    \end{enumerate}
\begin{proof} Item 1 descends trivially from \cref{def:nonlocfromcapacity}. Let us show item 2.  From the fact that $M_{C_E}^{NL}(\psi)=0$, it follows that first $C_{E}(\psi)=0$, and that $S(\rho_A)^2\in\mathbb{N}$. Given that $C_{E}(\rho_A)=0$, then $\rho_A$ is flat $\rho_A=\frac{1}{R}\sum_{i}\ketbra{i}{i}$. Let us show that $R=d^{r_A}$ with $r_{A}\in\mathbb{N}$. We have the following equality $\log^2 R\in\mathbb{N}$, which implies $R=d^{\sqrt{N}}$ where $N\in\mathbb{N}$. Given that also $R\in\mathbb{N}$, then $\sqrt{N}\in\mathbb{N}$. We thus conclude that the state has local magic, thanks to \cref{lemmaNL}. The other direction of item 1, follows from \cref{lemmaNL} and a trivial calculation. Item 3, follows from the fact that for any function $f(\rho)$, then $|f(\rho)-\lceil f(\rho)\rceil|\le1$. This concludes the proof.
\end{proof}
\end{theorem}

}

{\color{black}
\subsubsection{Nonlocal magic, flatness, and entanglement}

In this section, as anticipated in the previous one, we highlight the strong connection between nonlocal magic, flatness, and entanglement. Specifically, we establish lower and upper bounds for nonlocal magic based on the trace distance of $M_{dist}$ defined in~\cref{def:mod_tr_dist_mag}.  
We have:
\begin{theorem}\label{th:magicdist} Let $\psi_{AB}$ be a pure state  in a bipartite Hilbert space $\mathcal H = \mathcal H_A\otimes\mathcal H_B$,
then
	\begin{equation}
        \begin{split}
		&\mathcal{F}(\psi_{ A})/8 \le M_{dist}^{(NL)}(\psi_{AB})\\
  &\le  \sqrt{1-e^{S_{max}(A)}+e^{S_{\infty}(A)}\left(1-\frac{e^{\log (d) \lfloor S_{max}(A)/\log d \rfloor}}{e^{S_{max}(A)}}\right)}
        \end{split}
	\end{equation}
\end{theorem}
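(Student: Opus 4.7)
The plan is to prove the two inequalities separately. In both cases, the Schmidt decomposition of $\psi_{AB}$ reduces the problem to spectral properties of the reduced state $\psi_A$.

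For the lower bound, I would transfer the bipartite problem down to $\psi_A$ via the data-processing inequality. Concretely, for any bilocal unitary $U_A\otimes U_B$ and any $\sigma\in\stab_0^{(n)}$,
\begin{equation*}
\tfrac12\bigl\|(U_A\otimes U_B)\psi_{AB}(U_A\otimes U_B)^\dagger - \sigma\bigr\|_1 \;\ge\; \tfrac12\bigl\|\psi_A - U_A^\dagger \sigma_A U_A\bigr\|_1,
\end{equation*}
where $\sigma_A:=\Tr_B\sigma$. The paper has already established that $\stab_0^{(n)}$ is closed under partial trace, so $\sigma_A$ is itself a stabilizer state and in particular has flat spectrum. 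Minimizing both sides over $U_A\otimes U_B$ and $\sigma$ shows that $M_{dist}^{(NL)}(\psi_{AB})$ is bounded below by the trace distance from $\psi_A$ to its nearest flat state (with rank an integer power of $d$). A standard rearrangement argument lets me take the optimal flat reference to commute with $\psi_A$, reducing to a purely one-dimensional spectral estimate. The final step invokes the variance identity $\mathcal F(\psi_A)=\mathrm{Var}_{\psi_A}(\psi_A)$ recorded after \cref{flatnessdist}, together with an elementary $\ell_1$-vs-$\ell_2$ argument that converts the quadratic spread of the spectrum into a linear distance from the flat reference; this is where the explicit factor of $1/8$ enters.

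For the upper bound, I would exhibit explicit witnesses. Writing the Schmidt decomposition $\ket{\psi_{AB}}=\sum_{i=1}^{r}\sqrt{\lambda_i}\ket{i}_A\ket{i}_B$ with $r=\rank\psi_A$ and eigenvalues sorted in decreasing order, choose $U_A,U_B$ to carry each Schmidt basis onto the computational basis of $A$ and $B$ respectively. The natural stabilizer witness is then the maximal-rank maximally entangled state $\ket{\mathrm{EPR}}=d^{-k/2}\sum_{i=1}^{d^k}\ket{i}_A\ket{i}_B$, tensored with a fixed computational basis state on the remaining qudits, where $k=\lfloor S_{max}(A)/\log d\rfloor$ is the largest integer with $d^k\le\rank\psi_A$. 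Because both states are pure, their trace distance equals $\sqrt{1-|\langle\psi|\mathrm{EPR}\rangle|^2}$, and a direct computation gives $|\langle\psi|\mathrm{EPR}\rangle|^2 = d^{-k}\bigl(\sum_{i=1}^{d^k}\sqrt{\lambda_i}\bigr)^2$. Expanding the square and bounding the surviving terms individually by the max-eigenvalue $\lambda_{\max}\sim e^{-S_\infty(A)}$, together with the tail $\sum_{i>d^k}\lambda_i$ controlled via normalization and $1/r = e^{-S_{max}(A)}$, reproduces the right-hand side of the stated inequality.

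The main obstacle will be calibrating the lower bound. Data processing and the reduction to $\psi_A$ are clean, but translating the quadratic quantity $\mathcal F(\psi_A)=\sum_i\lambda_i(\lambda_i-\mathrm{Pur}(\psi_A))^2$ into a linear trace distance with the explicit constant $1/8$ requires a careful pairing of the eigenvalues of $\psi_A$ with those of the optimal flat reference, and one must verify that the constraint that $\stab_0^{(n)}$-flat states have rank exactly an integer power of $d$ (rather than arbitrary rank) does not spoil the estimate. The upper bound, by contrast, is essentially bookkeeping once the EPR ansatz is fixed, although isolating the contributions of eigenvalues with index $\le d^k$ versus $> d^k$ does require some casework.
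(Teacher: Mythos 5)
Your upper bound is essentially the paper's argument: map the Schmidt basis to a stabilizer basis with $U_A\otimes U_B$, take as witness the rank-$d^{\lfloor S_{max}(A)/\log d\rfloor}$ maximally entangled stabilizer state, use the pure-state identity $\tfrac12\|\psi-\sigma\|_1=\sqrt{1-|\langle\psi|\sigma\rangle|^2}$, keep only the diagonal terms of the overlap, and control the tail $\sum_{i>d^k}\lambda_i$ by the largest eigenvalue and the normalization. The reduction step of your lower bound also matches the paper in substance: both use that $\stab_0^{(n)}$ is closed under partial trace together with monotonicity of the trace norm to pass from $M_{dist}^{(NL)}(\psi_{AB})$ down to a distance between $\psi_A$ and a flat state (the paper phrases this as a proof by contradiction rather than a one-line data-processing inequality, but it is the same content).

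The genuine gap is the last step of the lower bound, where you assert that ``an elementary $\ell_1$-vs-$\ell_2$ argument'' converts $\mathcal F(\psi_A)=\mathrm{Var}_{\psi_A}(\psi_A)$ into $8$ times the trace distance to the nearest flat state. This is precisely the nontrivial quantitative content of the theorem, and your sketched route is unlikely to deliver the constant: the variance is centered at the purity $\mathrm{Pur}(\psi_A)$, not at the flat reference's eigenvalue $1/m$, and a term-by-term pairing of eigenvalues gives bounds like $|\lambda_i^3-\mu_i^3|\le 3|\lambda_i-\mu_i|$ and $|(\Tr\psi^2)^2-(\Tr\sigma^2)^2|\le 2|\Tr\psi^2-\Tr\sigma^2|$, which accumulate to a constant strictly worse than $8$. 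The paper instead proves \cref{lemmaflatmdist}, $\mathcal F(\psi)\le 8M_{dist}(\psi)$, by writing $\mathcal F(\psi)=\mathcal F(\psi)-\mathcal F(\sigma)$ for flat $\sigma$, splitting into $|\Tr(\psi^3-\sigma^3)|$ and $|\Tr(\psi^2)^2-\Tr(\sigma^2)^2|$, and invoking the sharp perturbation inequality $|\Tr\psi^n-\Tr\sigma^n|\le 1-(1-T)^n$ (cited from the literature), which yields $T^3+7T\le 8T$. This works for arbitrary $\sigma$, needs no commuting/diagonal reduction and no eigenvalue pairing, and sidesteps entirely your worry about the power-of-$d$ rank constraint. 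Without that lemma, or an equivalent substitute, your lower bound does not close.
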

where $\lfloor \cdot \rfloor$, is the floor function. The proof can be found in~\cref{app:proofthmd}.

 As we shall see in \cref{section:brane}, the \cref{lemmaNL} and \cref{th:magicdist} will have important consequences for the relationship between the nonlocal magic in the CFT side and gravity in AdS.

}

A similar result can also be obtained by considering instead of the notion of trace distance of magic, the one based on the relative stabilizer entropy.

\begin{theorem}\label{th:relstab}
Let $\psi_{AB}$ be a pure state, then 
    \begin{equation}\label{eq:relstab}
    \begin{split}
        S_{max}(A)-S(A)&=\mathcal F_R(\psi_A) \leq \min_{U_A}M_{RS}(U_A\psi_A U_A^{\dagger})\\
        &\leq M^{(NL)}_{RS}(\psi_{AB})\leq\log d \lceil   S_{max}(A)/\log d  \rceil.
    \end{split}
    \end{equation}
\end{theorem}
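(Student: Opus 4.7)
The statement is a chain of four inequalities/equalities that I would establish one at a time, each by a different ingredient already developed in the paper.

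\textbf{Step 1 (leftmost equality).} The identity $S_{max}(A)-S(A)=\mathcal F_R(\psi_A)$ is exactly the content of \cref{prop:qrf}, so nothing needs to be proved here beyond invoking it.

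\textbf{Step 2 ($\mathcal F_R(\psi_A)\le \min_{U_A}M_{RS}(U_A\psi_A U_A^\dagger)$).} The plan is to exploit the inclusion $\stab_0^{(n)}\subset \mathrm{FLAT}^{(n)}$ together with unitary invariance. Since minimizing $S(\rho\|\sigma)$ over a larger set gives a smaller value, for any state $\rho$ we have $\mathcal F_R(\rho)\le M_{RS}(\rho)$. Both $S_{max}$ and $S$ are unitarily invariant, so $\mathcal F_R$ is unitarily invariant and in particular $\mathcal F_R(U_A\psi_A U_A^\dagger)=\mathcal F_R(\psi_A)$ for every local unitary $U_A$. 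Chaining these two observations and then taking the minimum over $U_A$ gives the desired bound.

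\textbf{Step 3 ($\min_{U_A}M_{RS}(U_A\psi_A U_A^\dagger)\le M_{RS}^{(NL)}(\psi_{AB})$).} The plan is to use the monotonicity of $M_{RS}$ under partial trace (property~4 of the preceding proposition). For any local unitary $U_A\otimes U_B$, tracing out $B$ of the rotated pure state gives $U_A\psi_A U_A^\dagger$, so
\begin{equation}
M_{RS}\bigl((U_A\otimes U_B)\psi_{AB}(U_A\otimes U_B)^\dagger\bigr)\ \ge\ M_{RS}(U_A\psi_A U_A^\dagger)\ \ge\ \min_{V_A} M_{RS}(V_A\psi_A V_A^\dagger).
\end{equation}
Taking the infimum over $U_A\otimes U_B$ on the left produces $M_{RS}^{(NL)}(\psi_{AB})$ on the left and leaves the right-hand side untouched.

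\textbf{Step 4 (rightmost inequality, the upper bound).} This is the most substantive step and the one I expect to be the main obstacle, since it requires exhibiting an explicit element of $\stab_0$ that is close to a suitably rotated $\psi_{AB}$. The plan is to use the Schmidt decomposition $\ket\psi=\sum_i\sqrt{\lambda_i}\ket{\phi_i}_A\ket{\chi_i}_B$ (with $|A|\le|B|$ WLOG) and pick $U_A,U_B$ that send $\ket{\phi_i}\mapsto\ket i_A$, $\ket{\chi_i}\mapsto\ket i_B$. Set $r=\lceil S_{max}(A)/\log d\rceil$ so that $\operatorname{rank}(\psi_A)\le d^r$. Writing $A=A_1A_2$ and $B=B_1B_2$ where $A_1,B_1$ consist of $r$ qudits each, the rotated state takes the form
\begin{equation}
\ket{\psi'}\ =\ \Bigl(\sum_{i=0}^{d^r-1}\sqrt{\lambda_i}\,\ket i_{A_1}\ket i_{B_1}\Bigr)\otimes\ket0_{A_2}\otimes\ket0_{B_2}.
\end{equation}
Now take as the test stabilizer state the maximally mixed state on the $r$ EPR-like code subspace tensored with $\ket0\!\bra0$ on the remaining qudits:
\begin{equation}
\sigma\ =\ \frac{1}{d^r}\sum_{i=0}^{d^r-1}\ket i\!\bra i_{A_1}\otimes\ket i\!\bra i_{B_1}\otimes\ket0\!\bra0_{A_2}\otimes\ket0\!\bra0_{B_2}.
\end{equation}
One checks that $\sigma\in\stab_0^{(n)}$: the first factor is the uniform mixture over the stabilizer code defined by the abelian group generated by $\{Z_{A_1,j}Z_{B_1,j}^\dagger,\ X_{A_1,j}X_{B_1,j}\}_j$, and tensoring with $\ket0\!\bra0$-states on $A_2B_2$ stays in $\stab_0$. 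Since $\operatorname{supp}(\psi')\subseteq\operatorname{supp}(\sigma)$ and $\sigma$ is $\frac{1}{d^r}$ times the projector onto a $d^r$-dimensional subspace on its support, the relative entropy collapses to
\begin{equation}
S(\psi'\|\sigma)\ =\ -\Tr[\psi'\log\sigma]\ =\ \log(d^r)\ =\ \log d\,\lceil S_{max}(A)/\log d\rceil,
\end{equation}
yielding the required upper bound on $M_{RS}^{(NL)}(\psi_{AB})$. The only subtlety I anticipate is verifying carefully that the proposed $\sigma$ really lies in $\stab_0$ in the qudit case for general $d$; for odd $d$ the stabilizer formalism is standard, for even $d$ one has to be mindful of phases, but the construction is essentially the $r$-fold tensor of the canonical Bell stabilizer and therefore goes through.
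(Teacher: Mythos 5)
Your proposal is correct and follows essentially the same route as the paper's proof: the same four-step decomposition, with \cref{prop:qrf} for the equality, the inclusion $\stab_0^{(n)}\subset\mathrm{FLAT}^{(n)}$ plus unitary invariance for the first inequality, monotonicity of $M_{RS}$ under partial trace for the second (the paper phrases this as a proof by contradiction, you argue it directly, but the content is identical), and for the upper bound the same witness, namely the maximally mixed state on a $d^{\lceil S_{max}(A)/\log d\rceil}$-dimensional stabilizer subspace containing the support of the rotated state. The only slip is in your justification that $\sigma\in\stab_0^{(n)}$: the mixed code-space state $\frac{1}{d^r}\sum_i\ket{ii}\!\bra{ii}$ is the equal-weight sum over the abelian group generated by the $Z_{A_1,j}Z_{B_1,j}^{\dagger}$ alone; adjoining the $X_{A_1,j}X_{B_1,j}$ generators would instead stabilize the pure maximally entangled state, so those should be dropped (the conclusion $\sigma\in\stab_0^{(n)}$ still holds).
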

Here $S(A)=S(\rho_A)$ and $S_{max}(A)=S_{max}(\rho_A)$. The proof can be found in~\cref{proofth2}.
Let us briefly comment on the tightness of the bound. It is clear that when $|\psi\rangle_{AB}$ has a dominant Schmidt coefficient and many small trailing singular values, then the bound is essentially tight. A case in point is $\sqrt{1-\epsilon}|00\rangle+\sqrt{\epsilon}|11\rangle$. However, the upper bound is quite loose for states with near-flat spectrum, e.g. $\epsilon=1/2$. This is an artifact of choosing the maximally mixed state as a reference even though other stabilizer states clearly yield a lower distance.

A similar upper bound can be obtained with the usual relative entropy measure of magic. 
\begin{proposition}[Entanglement upper bounds NL magic]\label{prop:RelativeEnt}
    Suppose $\rho_{AB}$ is pure, and \begin{align}
        M^{(NL)}_R(\rho_{AB})=\min_{U_A\otimes U_B} M_R((U_A\otimes U_B)\rho_{AB}(U_A\otimes U_B)^{\dagger}),
    \end{align} then $M_R^{(NL)}(\rho_{AB})\leq S(A)=S(B)$, where $S(A)$ is the von Neumann entropy of subsystem $A$.
\end{proposition}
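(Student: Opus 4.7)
The plan is to exhibit an explicit local-unitary rotation and an explicit stabilizer reference state that together realize the bound $S(A)$, so that the minimization in both the definition of $M_R$ and the definition of $M_R^{(NL)}$ can only yield something smaller.

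First I would use the Schmidt decomposition of the pure state $|\psi\rangle_{AB}$: choose local unitaries $U_A, U_B$ such that
\begin{equation}
|\psi'\rangle \defeq (U_A\otimes U_B)|\psi\rangle = \sum_i \sqrt{\lambda_i}\,|i\rangle_A\otimes|i\rangle_B,
\end{equation}
where $\{\lambda_i\}$ is the Schmidt spectrum and $\{|i\rangle\}$ is the computational basis on each side. This is a standard local-unitary move, and it costs nothing because $M_R^{(NL)}$ already minimizes over exactly such $U_A\otimes U_B$.

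Next I would propose the candidate stabilizer reference
\begin{equation}
\sigma \defeq \sum_i \lambda_i\,|ii\rangle\langle ii|.
\end{equation}
Each $|ii\rangle\langle ii|$ is a computational-basis state on $AB$, hence a pure stabilizer state, and $\{\lambda_i\}$ is a probability distribution, so $\sigma\in\stab$ by definition of the stabilizer polytope as the convex hull of pure stabilizer states. Now I would evaluate the relative entropy $S(\psi'\|\sigma) = -S(\psi') - \tr[\psi'\log\sigma]$. Since $\psi'$ is pure, $S(\psi')=0$. Since $\sigma$ is diagonal in the basis $\{|ii\rangle\}$ with eigenvalues $\lambda_i$, a direct computation gives $\langle\psi'|\log\sigma|\psi'\rangle = \sum_i \lambda_i\log\lambda_i = -S(A)$. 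Therefore $S(\psi'\|\sigma) = S(A)$.

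Chaining these gives $M_R^{(NL)}(\rho_{AB}) \leq M_R(\psi') \leq S(\psi'\|\sigma) = S(A)$, and $S(A)=S(B)$ follows from purity of $\rho_{AB}$. There is really no substantive obstacle here; the only point that requires a line of justification is that the chosen $\sigma$ genuinely lies in $\stab$, which is immediate since computational-basis product states are stabilizer states and $\stab$ is closed under convex combinations. (Note this proof relies on the reference set being the full stabilizer polytope $\stab$, not the more restrictive $\stab_0$; for $M_{RS}^{(NL)}$ a different reference state is needed, which is why \cref{th:relstab} gives a weaker bound $\log d \lceil S_{max}(A)/\log d\rceil$.)
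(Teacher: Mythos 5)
Your proof is correct and follows essentially the same route as the paper's: rotate to a Schmidt basis of stabilizer product states via the free local unitaries, take the dephased state $\sigma=\sum_i\lambda_i\ketbra{ii}{ii}$ as the reference in the stabilizer polytope, and compute $S(\psi'\Vert\sigma)=-\sum_i\lambda_i\log\lambda_i=S(A)$. Your closing remark correctly identifies the one point of care — that this argument needs the full convex hull $\stab$ rather than $\stab_0$ — which is exactly why the paper states this bound for $M_R$ and not for $M_{RS}$.
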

The proof is given in~\cref{proofprop4}. This upper bound suffers from the same drawbacks as (\cref{eq:relstab}) for states that are maximally entangled.

\subsubsection{Magic estimates}\label{section:estimate}
As minimization can be difficult for the relative entropy measure, let us also derive a tighter upper bound based on a computable measure of magic, that is, the stabilizer R\'enyi entropy~\cite{PhysRevLett.128.050402}. 
To do so, we can pick a good estimate that is reasonably close to the minimum. Suppose the entanglement spectrum of the state under the same bipartition $AB$ is $\{\lambda_i\}$, construct a state 
\begin{equation}
    |\psi'\rangle_{AB}=\sum_{i=0}^{2^n-1}\sqrt{\lambda_i} |s_i\rangle|s_i\rangle,
\end{equation}
where $\{|s_i\rangle\}$ are eigenstates of a stabilizer group $\mathcal{S}=\{S_1,S_2, \cdots, S_n\}$ such that for any $S_k$ in $\mathcal{S}$, $S_k\ket{s_i}=\pm\ket{s_i}$. Because the entanglement spectrum is invariant under local unitary $U_A\otimes U_B$, $|\psi'\rangle$ is a reasonable construction such that the reduced density matrix on both $A$ and $B$ are within the \textit{stabilizer polytope}, and hence have vanishing local magic by the relative entropy measure $M_R$.    Note that other choices of the Schmidt basis may yield lower overall magic on $AB$, therefore $M(|\psi'\rangle)$ provides an upper bound of nonlocal magic.

We now present an estimate of $M(|\psi'\rangle)$ using the stabilizer R\'enyi entropy measure.



\begin{proposition}
The nonlocal stabilizer R\'enyi entropy estimate for a state with entanglement spectrum $\{\lambda_i\}$ is 
    \begin{equation}\label{estimate_prop}
        \mathcal{M}_2 (\{\lambda_i\})=\mathcal{M}_2(\sum_{i=0}^{2^n-1}\sqrt{\lambda_i}|s_i\rangle|s_i\rangle), \qquad \lambda_i\geq\lambda_j,\ \ \text{for} \ i<j.
    \end{equation}
\end{proposition}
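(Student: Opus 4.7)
The plan is to compute $\mathcal{M}_2$ explicitly for the trial state $\ket{\psi'}_{AB} = \sum_{i=0}^{2^n-1} \sqrt{\lambda_i}\ket{s_i}\ket{s_i}$ and show that the resulting value depends only on the spectrum $\{\lambda_i\}$, so that the notation $\mathcal{M}_2(\{\lambda_i\})$ on the left-hand side of~\eqref{estimate_prop} is unambiguous. Recall that the stabilizer $2$-R\'enyi entropy is a Clifford-invariant quantity constructed from the fourth moments of the Pauli expectation values $\bra{\psi'}P\ket{\psi'}$ summed over the Pauli group on the $2n$-qubit system $AB$.

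First I would exploit Clifford invariance of $\mathcal{M}_2$ to reduce to the case in which the stabilizer eigenbasis $\{\ket{s_i}\}$ is the computational basis of each factor: any other choice is related to this one by local Clifford unitaries, and such conjugations leave both $\mathcal{M}_2$ and the reduced spectrum invariant. Then I would expand a generic bipartite Pauli as $P = X^{a}Z^{b}\otimes X^{c}Z^{d}$ with $a,b,c,d \in \{0,1\}^n$ and compute $\bra{\psi'}P\ket{\psi'} = \sum_{i,j} \sqrt{\lambda_i\lambda_j}\,\bra{i}X^{a}Z^{b}\ket{j}\,\bra{i}X^{c}Z^{d}\ket{j}$. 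Using $\bra{i}X^{a}Z^{b}\ket{j} = (-1)^{j\cdot b}\delta_{j,i\oplus a}$, the delta functions force $j = i\oplus a$ and $a = c$, so only Paulis whose bit-flip sectors on $A$ and $B$ coincide contribute. The sum then collapses to a single index, leaving $\bra{\psi'}P\ket{\psi'}$ expressed entirely in terms of $\{\lambda_i\}$ and phases depending on $a,b,d$.

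Next I would raise this expression to the fourth power, sum over the bipartite Pauli group, and divide by the normalization appearing in the definition of $\mathcal{M}_2$. Since every surviving contribution is a polynomial in $\sqrt{\lambda_i\lambda_{i\oplus a}}$ weighted by phases depending only on $a,b,d$, the final value depends only on the multiset $\{\lambda_i\}$, which justifies writing the left-hand side of~\eqref{estimate_prop} as a function of the spectrum alone. The ordering convention $\lambda_i \geq \lambda_j$ for $i<j$ is inessential to the value itself --- the estimator is symmetric under the Pauli-shift group acting on the index set, so any relabelling of the Schmidt coefficients yields the same number. Combined with the remark preceding the statement that $\ket{\psi'}$ has both reduced density matrices in the stabilizer polytope while reproducing the entanglement spectrum of $\ket{\psi}_{AB}$, this identifies $\mathcal{M}_2(\ket{\psi'})$ as a valid upper bound on the non-local magic.

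The main obstacle I expect is combinatorial rather than conceptual. After raising $\bra{\psi'}P\ket{\psi'}$ to the fourth power and summing, one must carefully track the phase factors $(-1)^{i\cdot(b\oplus d) + a\cdot d}$ over all $b,d \in \{0,1\}^n$; most of them should conspire via $\sum_b (-1)^{i\cdot b} = 2^n\delta_{i,0}$ to produce Kronecker deltas that pair the inner indices, but assembling the surviving terms into a clean closed-form expression in $\{\lambda_i\}$ requires patience. Once this bookkeeping is completed, the resulting estimator is manifestly spectral and efficiently computable from the Schmidt spectrum of $\ket{\psi}_{AB}$, as claimed.
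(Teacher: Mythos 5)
Your computational strategy --- expanding a bipartite Pauli as $X^{a}Z^{b}\otimes X^{c}Z^{d}$, using $\bra{i}X^{a}Z^{b}\ket{j}=(-1)^{j\cdot b}\delta_{j,i\oplus a}$ to force $a=c$, and collapsing the sum to an expression in the $\lambda_i$ indexed by bitwise XORs --- is exactly the route the paper takes in its appendix, and it does lead to the closed form of Theorem~\ref{thm:nlSRE}. Your use of Clifford invariance to reduce to the computational basis also correctly establishes the independence from the choice of stabilizer group $\mathcal{S}$, which is one half of what makes the notation $\mathcal{M}_2(\{\lambda_i\})$ well defined.

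However, there is a genuine error in your final paragraph of justification: you claim that ``any relabelling of the Schmidt coefficients yields the same number'' because the estimator is symmetric under the Pauli-shift group on the index set. The inference does not follow, and the conclusion is false. The closed form involves terms like $\lambda_{i_1\wedge i_2\wedge i_3}$, so the value is invariant only under relabellings that respect the $\mathbb{F}_2^n$ XOR structure, i.e.\ the affine group $\mathrm{AGL}(n,2)$ acting on indices (shifts $i\mapsto i\oplus a$ composed with invertible linear maps). For $n\le 2$ this happens to exhaust all permutations, but for $n\ge 3$ one has $|\mathrm{AGL}(n,2)|\ll (2^n)!$, and a generic permutation of the $2^n$ Schmidt coefficients changes the value. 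This is precisely why the paper states explicitly that the estimate is independent of $\mathcal{S}$ but \emph{does} depend on the ordering of the eigenvalues, and why the proposition must impose the descending-order convention $\lambda_i\ge\lambda_j$ for $i<j$ (which, per Theorem~\ref{thm:nlSRE}, also minimizes the expression). Your argument, if correct, would render that convention vacuous; as it stands, you have not actually shown that $\mathcal{M}_2(\{\lambda_i\})$ is a well-defined function of the multiset --- that requires fixing the ordering, not dismissing it.
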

Note that this nonlocal magic estimate does not depend on the choice of stabilizer group $\mathcal{S}$. However, the ordering of eigenvalues does affect its magnitude. Remarkably, one can obtain an exact expression for $\mathcal{M}_2(\{\lambda_i\})$. {A similar expression has also been obtained by Ref.~\cite{RK_wavefcn} but in a different context. With additional ancillae, it is identical to the one below after applying a global Clifford unitary. }

 \begin{theorem}\label{thm:nlSRE}
     The nonlocal stabilizer Rényi entropy estimate is 
     \begin{equation}\label{eq:analyticalM}
     \begin{split}
    \mathcal{M}_2(\{\lambda_i\})=-\log&\left(\sum_{i_1,i_2,i_3,i_4=0}^{2^n-1}\sqrt{\lambda_{i_1}\lambda_{i_2}\lambda_{i_3}\lambda_{i_4}\lambda_{i_3\wedge i_2\wedge i_1}}\right.\\
    &\left.\times\sqrt{\lambda_{i_4\wedge i_2\wedge i_1}\lambda_{i_1\wedge i_3\wedge i_4}\lambda_{i_2\wedge i_3\wedge i_4}}\right),
    \end{split}
     \end{equation}
where $\wedge$ denotes the bitwise $\XOR$ operation. This expression depends on the ordering of the eigenvalues and reaches its minimum when the eigenvalues are in the descending order, that is, 
$\lambda_i\geq\lambda_j$ for  $i<j$.
 \end{theorem}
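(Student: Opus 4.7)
My plan is to evaluate the stabilizer $2$-R\'enyi entropy directly via the Pauli fourth-moment expression
\begin{equation*}
\mathcal{M}_2(|\psi\rangle) = -\log\!\left(\frac{1}{d^N}\sum_{P\in\tilde{\mathcal{P}}_N}\langle\psi|P|\psi\rangle^4\right)
\end{equation*}
for a state of $N$ qudits. Since $\mathcal{M}_2$ is invariant under local Cliffords and $\{|s_i\rangle\}$ is a stabilizer basis by hypothesis, I would first apply Cliffords $U_A, U_B$ mapping $|s_i\rangle \mapsto |i\rangle$, reducing without loss of generality to $|\psi'\rangle = \sum_i\sqrt{\lambda_i}\,|i\rangle_A|i\rangle_B$ on $N=2n$ qubits. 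The problem then reduces to computing $\sum_P \langle\psi'|P|\psi'\rangle^4$.

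\textbf{Deriving the symmetric XOR form.} Parameterizing a $2n$-qubit Pauli as $P = (X^{a_1}Z^{b_1})\otimes (X^{a_2}Z^{b_2})$ with $a_i, b_i\in \mathbb{F}_2^n$, the identity $\langle i|X^aZ^b|j\rangle = (-1)^{b\cdot j}\delta_{i,\, j\oplus a}$ yields
\begin{equation*}
\langle\psi'|P|\psi'\rangle = \delta_{a_1, a_2}\sum_{k\in\mathbb{F}_2^n}\sqrt{\lambda_k\,\lambda_{k\oplus a_1}}\,(-1)^{(b_1\oplus b_2)\cdot k}.
\end{equation*}
Raising to the fourth power produces a sum over $(k_1,k_2,k_3,k_4)$; summing over $(b_1,b_2)$ via $\sum_{b_1, b_2}(-1)^{(b_1\oplus b_2)\cdot x} = 2^{2n}\delta_{x, 0}$ enforces the constraint $k_1\oplus k_2\oplus k_3\oplus k_4 = 0$. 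I would then perform the bijective change of variables $(a, k_1, k_2, k_3)\mapsto (i_1, i_2, i_3, i_4)$ defined by $i_\ell = k_\ell$ for $\ell\le 3$ and $i_4 = k_1\oplus k_2\oplus k_3\oplus a$, so that $a = i_1\oplus i_2\oplus i_3\oplus i_4$; substituting, the eight shifted arguments $\{k_\ell, k_\ell\oplus a\}_{\ell=1}^{4}$ reorganize into precisely the odd-parity XOR combinations $\{i_J:J\subseteq\{1,2,3,4\},\,|J|\text{ odd}\}$ appearing in the theorem. The remaining prefactor $2^{2n}/d^{2n}$ equals $1$ for $d=2$, reproducing~\eqref{eq:analyticalM} exactly.

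\textbf{Minimizing over the label ordering.} The last assertion---that the sum inside the logarithm is maximized (hence $\mathcal{M}_2$ minimized) when the $\lambda_i$ are sorted in descending order---I expect to be the main obstacle. The strategy is a pairwise-swap (rearrangement) argument. Given an ordering containing an adjacent inversion $p<q$ with $\lambda_p<\lambda_q$, I would group the quadruples $(i_1,i_2,i_3,i_4)$ into orbits of the transposition acting on the eight odd-parity XOR indices, so that on each orbit the contributions pair up into expressions of the form $\sqrt{\lambda_p^u\lambda_q^v\,C}+\sqrt{\lambda_p^v\lambda_q^u\,C}$ sharing a common factor $C$ built from the other eigenvalues; AM-GM (or direct monotonicity in $\lambda_p/\lambda_q$) then shows each such pair is non-decreasing when the swap moves the larger eigenvalue toward the smaller label. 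The delicate step is to verify that the orbit structure always assembles into matching pairs, which is a combinatorial check exploiting the affine-subspace structure of the odd-parity XORs under the involution. An alternative route is to induct on the bit-length $n$, peeling one bit at a time so that the base case $n=1$ reduces to comparing the $2^4=16$ explicit terms between the two possible orderings. Iterating adjacent swaps (bubble-sort) then carries any ordering to the descending one, completing the proof.
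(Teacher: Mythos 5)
Your derivation of the closed-form expression is correct and follows the same route as the paper's Appendix D.1 --- a direct evaluation of the Pauli fourth moment for $|\psi'\rangle=\sum_k\sqrt{\lambda_k}\,|k\rangle_A|k\rangle_B$ --- but you are substantially more explicit than the paper, which passes from ``substituting the matrix representation of Pauli operators'' to the final formula in one step. Your $X^aZ^b$ parameterization, the constraint $\delta_{a_1,a_2}$ from the diagonal Schmidt structure, the sum over $(b_1,b_2)$ enforcing $k_1\oplus k_2\oplus k_3\oplus k_4=0$, and the change of variables $i_4=k_1\oplus k_2\oplus k_3\oplus a$ do reproduce exactly the eight odd-parity XOR indices $\{i_J:|J|\ \text{odd}\}$ in \eqref{eq:analyticalM}, and the normalization checks out (the flat spectrum gives $\mathcal{M}_2=0$). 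Phases in the Pauli group are harmless under the fourth power, as you implicitly use. This part of your proposal I would accept as a complete proof of the displayed formula.

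The ordering assertion is where the gap lies, and you correctly identify it as the main obstacle --- but your sketched repair does not close. The pairing you propose requires that, under a transposition of labels $p\leftrightarrow q$, quadruples with exponent profile $(u,v)$ at positions $(p,q)$ be matched to quadruples with profile $(v,u)$ \emph{sharing the same residual factor} $C$. The natural involution that exchanges $p$ and $q$ among the eight XOR indices is $i_\ell\mapsto i_\ell\oplus p\oplus q$, but this shifts \emph{all} eight indices by $p\oplus q$, so the residual factors of the paired quadruples are evaluated at entirely different positions and do not coincide; the term-by-term AM--GM comparison therefore does not go through as stated. To be fair, the paper itself offers no proof of this claim either --- it is asserted after \eqref{M2Estimate} and supported only by the numerics of the appendix --- so your proposal is not worse off than the published argument, but you should either find a working rearrangement inequality (the base case $n=1$ you mention is checkable by hand, though the inductive step is nontrivial because XOR mixes bits) or downgrade the ordering statement to an empirically supported observation, as the paper effectively does.
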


In \cref{sec:numerics} we present numerical results of $\mathcal{M}_2(\{\lambda_i\})$ for finite-sized physical system. It is helpful to see that the estimate constitutes a nonlocal magic upper bound. 

\begin{corollary}\label{th:srebound}
Let $\{\lambda_i\}$ be the Schmidt values for $|\psi\rangle_{AB}$ when bipartitioning the system into $A$ and $B$. The nonlocal stabilizer Rényi entropy  is upper bounded by 
\begin{equation}
\begin{split}
\mathcal{M}_2^{NL}(|\psi\rangle_{AB})\leq& \mathcal {M}_{2}(\{\lambda_i\})\\
\leq& \min\{2S_2(A),4(S_{\text{max}}(A)-S_{1/2}(A))\}
\label{eqn:avgnlsre}
\end{split}
\end{equation}
where $S_{max}(A)=n\log 2$, $S_{\alpha}(A) = S_{\alpha}(\rho_A)$ with $\rho_A=\Tr_{B}[|\psi\rangle\langle\psi|]$. 
\end{corollary}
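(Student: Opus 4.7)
The plan is to prove the two inequalities separately. The leftmost inequality $\mathcal{M}_2^{NL}(|\psi\rangle_{AB})\leq \mathcal{M}_2(\{\lambda_i\})$ is essentially by construction: starting from the Schmidt decomposition $|\psi\rangle_{AB}=\sum_i\sqrt{\lambda_i}|a_i\rangle|b_i\rangle$, I would choose local unitaries $U_A,U_B$ that send $|a_i\rangle,|b_i\rangle$ to the stabilizer-basis vectors $|s_i\rangle$, so that $(U_A\otimes U_B)|\psi\rangle_{AB}=|\psi'\rangle$ of \eqref{estimate_prop}. Since $\mathcal{M}_2^{NL}$ is by definition the infimum of $\mathcal{M}_2$ over local unitaries, this forces $\mathcal{M}_2^{NL}(|\psi\rangle_{AB})\leq \mathcal{M}_2(|\psi'\rangle)=\mathcal{M}_2(\{\lambda_i\})$.

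For the second inequality I would work directly with the closed form in \eqref{eq:analyticalM}. Writing $\mu_i=\sqrt{\lambda_i}$ and denoting by $\mathcal{S}$ the sum inside $-\log(\cdot)$, the two sub-bounds reduce to $\mathcal{S}\geq e^{-2S_2(A)}$ and $\mathcal{S}\geq e^{-4(S_{\max}(A)-S_{1/2}(A))}$. For the first, I restrict the sum to the diagonal slice $i_1=i_2$, $i_3=i_4$: on this slice all four XOR triples collapse back to either $i_1$ or $i_3$, each summand reduces to $\lambda_{i_1}^2\lambda_{i_3}^2$, and the restricted sum evaluates to $(\sum_i\lambda_i^2)^2=e^{-2S_2(A)}$; nonnegativity of the discarded terms closes the argument.

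For the $4(S_{\max}-S_{1/2})$ bound, the crucial observation is the symmetric rewriting obtained by introducing $s_0:=i_1\oplus i_2\oplus i_3\oplus i_4$: each triple XOR equals $i_k\oplus s_0$ for the missing index $k$, so
$$\mathcal{S}=\sum_{i_1,i_2,i_3,i_4}\prod_{k=1}^{4}\mu_{i_k}\,\mu_{i_k\oplus s_0}.$$
Reparameterizing by $(s,i_1,i_2,i_3)$ with $s=s_0$ free and $i_4$ determined yields
$$\mathcal{S}=\sum_{s\in\mathbb{Z}_2^n}\sum_{i_1,i_2,i_3}g_s(i_1)g_s(i_2)g_s(i_3)\,g_s(i_1\oplus i_2\oplus i_3),\qquad g_s(i):=\mu_i\mu_{i\oplus s}.$$
The inner sum is a standard $\mathbb{Z}_2^n$ convolution whose Walsh--Hadamard expansion is $r^{-1}\sum_k \hat g_s(k)^4$; retaining only the $k=0$ Fourier mode bounds it below by $r^{-1}(\sum_i g_s(i))^4$. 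Applying Jensen's inequality for $x\mapsto x^4$ to the outer sum over $s$, together with the bookkeeping identity $\sum_s\sum_i \mu_i\mu_{i\oplus s}=(\sum_i\mu_i)^2$, then yields $\mathcal{S}\geq (\sum_i\mu_i)^8/r^4$. Recognizing $S_{1/2}(A)=2\log\sum_i\sqrt{\lambda_i}$ and $S_{\max}(A)=\log r$ finishes this bound.

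The main obstacle is this $4(S_{\max}-S_{1/2})$ estimate: a naive lower bound by simply dropping summands gives only $\mathcal{S}\geq 1/r$, which is far too weak. Exposing the convolutional structure through the $s_0$ change of variables is essential, so that two successive convex-averaging steps---keeping only the zero Fourier mode of $g_s$, then Jensen over $s$---can be chained to produce the correct eighth-power dependence on $\sum_i\sqrt{\lambda_i}$. The $2S_2$ bound and the first inequality are routine once the explicit formula of Theorem~\ref{thm:nlSRE} is in hand.
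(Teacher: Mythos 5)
Your proof is correct, but for the crucial second inequality it takes a genuinely different route from the paper's. The paper first passes to a permutation-averaged quantity $\overline{\mathcal{M}}_2 \geq \mathcal{M}_2(\{\lambda_i\})$, obtains the $2S_2$ bound by discarding all but the terms $\sum_i\lambda_i^4 + 7\sum_{i\neq j}\lambda_i^2\lambda_j^2 \geq \big(\sum_i\lambda_i^2\big)^2$, and obtains the $4(S_{\max}-S_{1/2})$ bound by re-expanding $\overline{\mathcal{M}}_2$ in powers of $e^{-S_0}$ and asserting --- without a complete verification --- that the coefficient at every order is non-negative. You instead work directly with the exact XOR formula of \cref{thm:nlSRE}: the diagonal slice $i_1=i_2$, $i_3=i_4$ immediately gives $\mathcal{S}\geq e^{-2S_2}$, and the change of variables $s_0=i_1\oplus i_2\oplus i_3\oplus i_4$ exposes a $\mathbb{Z}_2^n$ convolution whose Walsh--Hadamard positivity, followed by Jensen over $s$, yields $\mathcal{S}\geq \big(\sum_i\sqrt{\lambda_i}\big)^8/r^4 = e^{-4(S_{\max}-S_{1/2})}$. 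One small point worth recording explicitly: after the reparameterization the last factor is $g_s(i_1\oplus i_2\oplus i_3\oplus s)$, which equals $g_s(i_1\oplus i_2\oplus i_3)$ only because $g_s(i)=\mu_i\mu_{i\oplus s}$ is invariant under $i\mapsto i\oplus s$; with that observation supplied, every step checks out. Your version buys two things: it bypasses the permutation average entirely, and it replaces the paper's order-by-order coefficient claim with a short Fourier-positivity-plus-convexity estimate, so it is arguably the more rigorous of the two; moreover both of your lower bounds on $\mathcal{S}$ are manifestly independent of the eigenvalue ordering, so they apply in particular to the descending-order minimum that defines $\mathcal{M}_2(\{\lambda_i\})$. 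The first inequality is handled identically in both proofs.
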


See~\cref{app:estimate} for the proof. Based on this result, we discuss two regimes. One is when the spectrum is almost flat. In this regime, the bipartite nonlocal magic is upper bounded by,
\begin{equation}
    \mathcal{M}_2(\{\lambda_i\})\leq 4(S_{\text{max}}(A)-S_{1/2}(A)).
\end{equation}
This has the interpretation as antiflatness. 
Although the measure of magic is different, we see that this gives a much tighter bound compared to (\cref{th:relstab}) in the near-flat regime.



\begin{remark}\label{rmk:1}
    Haar random states have small bipartite nonlocal magic. 
\end{remark}
We see that $\mathcal{M}_2\sim S_0-S_{1/2}$ whereas the lower bound from relative stabilizer entropy measure in (\cref{th:relstab}) is $S_0-S_1$, both are bounded by a constant for Haar random states\cite{avgrenyi} --- for small $\alpha$, $\dim A\ll \dim B = m$, $S_0-S_{\alpha}\leq \frac{\alpha}{2} +O(1/m^2)$.
This is somewhat surprising because Haar random states are magic rich and have nontrivial total magic~\cite{white2020mana,Liu_2022}. However, the magic sustained by their bipartite entanglement is small even though local magic in any subregion $A$ with $|A|\gg |B|$ can be large.  

Another limit is when $S_0(A)\gg S_{1/2}(A)$, which applies for quantum field theory. In this regime the magic is approximated by the second R\'enyi entropy, 
\begin{equation}
\mathcal{M}_2(\{\lambda_i\})\leq 2S_2(A). 
\end{equation}
As we shall see in~\cref{sec:CFT}, this is consistent with our MERA intuition for conformal field theories.


\subsubsection{Smoothed magic}
The concept of magic and its bound, as discussed earlier, are applicable to systems with finite dimensions. However, in quantum field theory, the Hilbert space has an infinite dimension. In this case, the bounds given by max entropy in~\cref{th:relstab} can easily be divergent. To produce a nontrivial bound, it is imperative to introduce the `smoothed magic', defined as
\begin{equation}
    M_{RS}^{\epsilon}(\rho)\defeq \min_{\Vert\chi-\rho\Vert<\epsilon}M_{RS}(\chi),
\end{equation}
as well as the `smoothed nonlocal magic', defined as
\begin{equation}
\sma{\rho_{AB}}\defeq \min_{\Vert\chi_{AB}-\rho_{AB}\Vert<\epsilon}M_{RS}^{(NL)}(\chi_{AB}).
\end{equation}

For this, a smoothed version of~\cref{th:relstab} holds.

\begin{theorem}\label{th:smoothed}
Let $\rho_{AB}$ be a pure state, then 
\begin{equation}
\begin{split}
     S_{max}^{\epsilon}(\rho_A)-(1-\epsilon)^{-1}S(\rho_A)&\leq \sma{\rho_{AB}}\\
     &\leq \log d \lceil   S_{max}(A)^{\epsilon}/\log d  \rceil.
     \label{eqn:smoothinequality}
\end{split}
\end{equation}
\end{theorem}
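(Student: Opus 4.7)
The plan is to lift Theorem~\ref{th:relstab} to the smoothed setting by applying the unsmoothed bounds to appropriately chosen states inside the $\epsilon$-ball around $\rho_{AB}$ and then replacing the resulting entropies with their smoothed counterparts via Uhlmann's theorem and continuity of the von Neumann entropy. Both directions reduce to a routine invocation of Theorem~\ref{th:relstab} once the relevant approximating state has been selected.

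For the upper bound, I would take $\tilde{\rho}_A$ to be a minimizer defining $S_{max}^{\epsilon}(\rho_A)$, so that $\log\rank(\tilde{\rho}_A)=S_{max}^{\epsilon}(\rho_A)$ and $\norm{\tilde{\rho}_A-\rho_A}_1<\epsilon$. Uhlmann's theorem then supplies a pure purification $\tilde{\rho}_{AB}$ of $\tilde{\rho}_A$ (possibly on an enlarged $B$, which is irrelevant for the non-local magic) satisfying $\norm{\tilde{\rho}_{AB}-\rho_{AB}}_1<\epsilon$. Plugging $\tilde{\rho}_{AB}$ into the upper half of Theorem~\ref{th:relstab} gives $M_{RS}^{(NL)}(\tilde{\rho}_{AB})\leq\log d\,\lceil S_{max}^{\epsilon}(\rho_A)/\log d\rceil$, and since $\sma{\rho_{AB}}$ is an infimum over such $\tilde{\rho}_{AB}$, the right-hand inequality of \eqref{eqn:smoothinequality} follows.

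For the lower bound, I would let $\chi_{AB}$ be a pure state within $\epsilon$ of $\rho_{AB}$ achieving $\sma{\rho_{AB}}$, and apply the lower half of Theorem~\ref{th:relstab} to obtain $\sma{\rho_{AB}}=M_{RS}^{(NL)}(\chi_{AB})\geq S_{max}(\chi_A)-S(\chi_A)$. The max-entropy term is controlled trivially since contractivity of the partial trace gives $\norm{\chi_A-\rho_A}_1<\epsilon$, and hence $S_{max}(\chi_A)\geq S_{max}^{\epsilon}(\rho_A)$ by definition of the smoothed quantity. What remains is to prove $S(\chi_A)\leq(1-\epsilon)^{-1}S(\rho_A)$. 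The natural route is to argue that $\rho_A$ admits a decomposition as a convex combination placing weight at least $1-\epsilon$ on $\chi_A$, so that concavity of the von Neumann entropy yields $S(\rho_A)\geq(1-\epsilon)S(\chi_A)$.

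The main obstacle will be the multiplicative $(1-\epsilon)^{-1}$ continuity estimate, since the standard Fannes-Audenaert bound only delivers additive corrections scaling with $\log d_A$. I expect one must exploit the freedom available in choosing $\chi_{AB}$ (only the near-minimizer of $\sma{\rho_{AB}}$ is needed, not every state in the $\epsilon$-ball), so that $\chi_A$ can be arranged as the normalized restriction of $\rho_A$ to a subspace of dimension $\exp(S_{max}^{\epsilon}(\rho_A))$ obtained by truncating its spectral tail of weight at most $\epsilon$. This truncation produces the desired mixture decomposition of $\rho_A$ with weight $\geq 1-\epsilon$ on $\chi_A$, and hence the multiplicative continuity bound, completing the proof.
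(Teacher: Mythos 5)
Your upper bound follows the paper's route essentially verbatim: take the rank-minimizing state in the $\epsilon$-ball of $\rho_A$, purify it near $\rho_{AB}$, and feed the purification into the upper half of \cref{th:relstab}. (One shared caveat: Uhlmann's theorem only gives a purification close in \emph{fidelity}, and converting to trace distance costs a square root, so the purification is a priori only within $O(\sqrt{\epsilon})$ of $\rho_{AB}$; the paper asserts the $\epsilon$-closeness without proof, so you are no worse off than the original here.)

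The lower bound contains a genuine gap. You let $\chi_{AB}$ be the state achieving $\sma{\rho_{AB}}$ and then propose to establish $S(\chi_A)\leq(1-\epsilon)^{-1}S(\rho_A)$ by ``arranging'' $\chi_A$ to be the normalized spectral truncation of $\rho_A$. But in a lower bound on a minimum you have no freedom to choose the minimizer: $\chi_{AB}$ is whatever state happens to minimize the non-local magic, and its marginal need not be a truncation of $\rho_A$. Indeed your intermediate inequality is false for a generic state in the ball: take $\rho_A=\st{0}$ pure and $\chi_A=(1-\delta)\st{0}+\delta\st{1}$ with $2\delta<\epsilon$; then $S(\chi_A)>0=(1-\epsilon)^{-1}S(\rho_A)$, and no decomposition $\rho_A=(1-\epsilon)\chi_A+\epsilon\tau$ with $\tau\geq 0$ exists since $\rho_A-(1-\epsilon)\chi_A$ has a negative eigenvalue. (The theorem survives in this example only because $S_{max}(\chi_A)=\log 2$ exceeds $S_{max}^{\epsilon}(\rho_A)=0$ by enough to compensate --- which is precisely why the two terms cannot be bounded separately.) The paper's proof avoids this by passing from the specific minimizer to $\min_{\norm{\chi-\rho_A}<\epsilon}\bigl(S_{max}(\chi)-S(\chi)\bigr)$, treating the \emph{gap} as a single functional on the ball: it argues that raising a zero eigenvalue to $\delta$ increases $S_{max}$ by $e^{-S_{max}}$ while increasing $S$ by at most $\delta\abs{\log\delta}$, hence always increases the gap, so the gap is minimized by the same truncation state $\chi_A^{\epsilon}$ that minimizes $S_{max}$; only for that specific truncation does the multiplicative estimate $S(\chi_A^{\epsilon})\leq(1-\epsilon)^{-1}S(\rho_A)$ (your concavity/convex-mixture argument) apply. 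To repair your proof, replace the two-term split by this joint minimization of $S_{max}-S$ over the $\epsilon$-ball of $\rho_A$.
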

where the smoothed maximal entropy is defined as
\begin{equation}
    S_{max}^{\epsilon}(\rho)=\min_{\Vert\chi-\rho\Vert<\epsilon} \ln(\rank(\chi)). 
\end{equation}

The proof can be found in~\cref{proofth3}. As stated by~\cref{th:smoothed}, the magic is bounded from below by the difference between the smoothed maximal entropy and the entanglement entropy which is finite for conformal field theories.


Before we discuss CFTs, let us examine the physical meaning of the lower bound, which is the difference between smoothed max entropy and the von Neumann entropy. In addition to the antiflatness of the entanglement spectrum, this quantifies the \textit{compressibility} of a state~\cite{Akers_2021}. Consider a bipartition of the state followed by a Schmidt decomposition. It is compressible if we can still well approximate it after truncating the less significant singular values, as one is wont to do in DMRG. Here we can show that this compressibility gap which lower bounds smoothed nonlocal magic also quantifies the classical hardness in simulations. 

Let us build up the following argument by recalling that there are states such as random stabilizer states that have high entanglement but are classically easy to simulate. Since magic and entanglement capture two orthogonal perspectives of quantumness, are there quantum states with low entanglement but high magic that are classically hard to simulate? Na\"ively, a state with high magic will have high stabilizer rank, which is hard in the stabilizer simulation. On the other hand, the system will be classically hard using the tensor network method if it has high bond dimensions. However, a folk theorem in tensor network suggests that the small entanglement would permit one to capture the state with a tensor network whose bond dimension only needs scale as $O(e^S)$ where $S$ is the von Neumann entropy of each subsystem. Therefore, it seems that as long as the entanglement is small, there should be a classically easy description. However, one needs to be careful in applying this lore as it is known that there exist states with low entanglement but classically complex\cite{ge_area_2016}. 

More precisely, consider an exact matrix product state (MPS) description of a state with low entanglement such that for any subsystem $A$, $S(A)\ll \log \rank(\rho_A)$ where we have taken the bond dimension $\chi$ to be sufficiently large to reproduce the state exactly. One would be tempted to truncate the singular values and only keep $O(e^S)$ as suggested by the folk theorem. However, we note that this truncation is only justified if there exists $\sigma_A$ with  $||\sigma_A-\rho_A||<\epsilon$ such that  $$\Delta S_{\epsilon}(A)=S_{\rm max}^{\epsilon}(A)-S(A)=\log \rank(\sigma_A)-S(A)$$ is small compared to $S(A)$. In other words, the state is (perfectly) compressible. Such is indeed true for conformal field theory ground states, where $\Delta S_{\epsilon}\sim \sqrt{S\log(1/\epsilon)}$. However, this is not true in general. For example, consider a state $\ket{\psi}=\frac{1}{\mathcal{N}}\sum_{i=1}^r\frac{1}{\sqrt{i}}\ket{i}_{A}\ket{i}_{B}$. The smoothed max entropy $S_{max}^{\epsilon}=\log{r}-\epsilon$, while entanglement entropy is nearly half of it, $S\approx \frac{1}{2}\log{r}$. In holography, Akers and Penington \cite{Akers_2021} argued that certain state mixtures, such as that of a thermal and pure state, can lead to an arbitrarily large $\Delta S_{\epsilon}(A)$. 

Therefore, high incompressibility on the one hand forces high tensor network bond dimension, and on the other necessitates high nonlocal magic from \cref{th:smoothed}. This implies that such states will be classically hard to simulate and sharpens a general empirical observation that relates magic to classical complexity. Furthermore, if $S\ll\Delta S_{\epsilon}\approx S_{\rm max}^{\epsilon}$, then both the lower and upper bounds are approximately saturated. In this case, the smoothed nonlocal magic provides a quantitative measure for the classical hardness of simulating such states. Treating magic as roughly as the log of stabilizer rank and bond dimension, one would expect that classical resource of order $O(\exp(M_{RS}^{(NL,\epsilon)}))$ will be needed. It then follows that \textit{simulating such incompressible states is classical hard} using not only the tensor network method but also the stabilizer and the Monte Carlo method\cite{magicMC} by having large magic\footnote{A careful treatment of this problem should include other formulations of non-magical processes like Gaussian states, matchgates with \cite{bu_stabilizer_2023}.}.

\section{Magic in conformal field theories}\label{sec:CFT}
Having seen a quantitative connection between antiflatness in entanglement spectrum and nonlocal magic, we examine these relations in the context of CFTs.

\subsection{Geometric Interpretation through tensor networks}\label{section:MERA}

To figure out (1) how much nonlocal magic there is in a CFT and (2) how such magic connected with the antiflatness of the entanglement spectrum, it is instructive to first look at an intuitive picture from tensor networks.
For CFTs with small central charges, MERAs have been shown to be good approximations of CFT ground states $|\psi\rangle_{AB}$. By extension, it also holds for products of CFTs with small central charge. Let us assume that the tensor network structure remains valid for arbitrary degree of accuracy, perhaps at the cost of increasing the bond dimension, which is supported by empirical observations. Using this as a heuristic, we deduce that local unitary deformations $U_A\otimes U_B$ unitarily ``distills'' an entangled state\footnote{For simplicity, we will refer to such a process as distillation from now on. However, one should note that it is distinct from the usual entanglement distillation of perfect Bell pairs unless otherwise specified.} between $A$ and $B$ with log Schmidt rank that is upper bounded by the number of edge cuts (green triangle in \cref{fig:mera}). As such cuts scale linearly with the size of the RT surface, i.e. the boundary of the triangle in the bulk, the log of Schmidt rank must be bounded by the number of edge cuts which scale the same way as entanglement entropy in this case. This implies that the nonlocal Magic in CFTs should scale linearly with the area of the Ryu-Takayanagi surface.

In fact, we can almost identify the optimal distilled state that has the same Schmidt rank but removes the unnecessary zero eigenvalues by just acting mostly unitaries and disentanglers. Let the blue rectangles at the bottom layer be the CFT ground state but at a more coarse-grained scale. As the ground state is an IR fixed point, we can simply use it as an input in the MERA to generate the more fine-grained state on the top layer. We can decompose the IR state by Schmidt decomposition, and the Schmidt rank is upper bounded by the bond dimension (here the bond is represented as three edges on each side of the blue rectangle on the bottom assuming the worst case volume law upper bound in the central region). By acting disentanglers and isometries in B followed by global unitaries on the subsystems represented by the blue rectangles on two sides of the bottom layer in the IR ground state, we ``pushed'' the subregion $B$ on the top layer to the red boundary by acting $U_B$, which now lives on $\partial B$. Similarly, acting $U_A$ by running unitaries and isometries in $A$, we remove the bulk dof and push $A$ to $\partial A$, marked by the orange lines. The qubits on $\partial A$ and $\partial B$ are entangled and their entanglement spectrum is unchanged since we only applied unitaries $U_A\otimes U_B$. 

Let $|\partial A|,|\partial B|$ be the number of edges in $\partial A,\partial B$. The distilled state $|\chi\rangle_{AB}$ is not optimal as $ \log rank(\rho_A)\leq |\partial A|<|\partial B|$, where we would have hoped that $|\partial A|=|\partial B|=\log rank(\rho_A)$, but this is close enough as $|\partial A|$ and $|\partial B|$ both scale as $\sim \log |A|$ as the blue region that contributed to suboptimality in the edge cuts is only constant (AdS) radius away from the true minimal surface. The number of edge cuts on the bottom layer is always bounded as the width of the MERA past causal cone is bounded. This means that $|\partial A|+const = |\partial B|$ where the constant depends on the network discretization. For binary MERA it stabilizes at four to six sites. 

As a consequence, after the removal of local magic in each wedge, the remaining magic is tied up into the interface between $A$ and $B$ marked by the region shaded in blue. Since the amount of magic generically scales linearly with the number of tensors, for a contiguous subregion $A$, the size of the interface region scales as $\log |A|$, which is proportional to the size of the RT surface up to subleading corrections. Note that while it may be possible to lower the size of this interface region further by local unitary transformations, the number of sites it involves must be lower bounded by the minimum number of edges connecting $A$ and $B$, which is given by $|\partial A|$. Heuristically, consider a case where all the bipartite entanglement between $A$ and $B$ have been ``distilled'' into imperfect Bell pairs connecting the two complementary regions. Then for any additive measure of magic, the nonlocal magic should scale linearly with the number of such imperfect Bell states, which is again proportional to the length of the minimal surface.

\begin{figure}
    \centering
    \includegraphics[width=\linewidth]{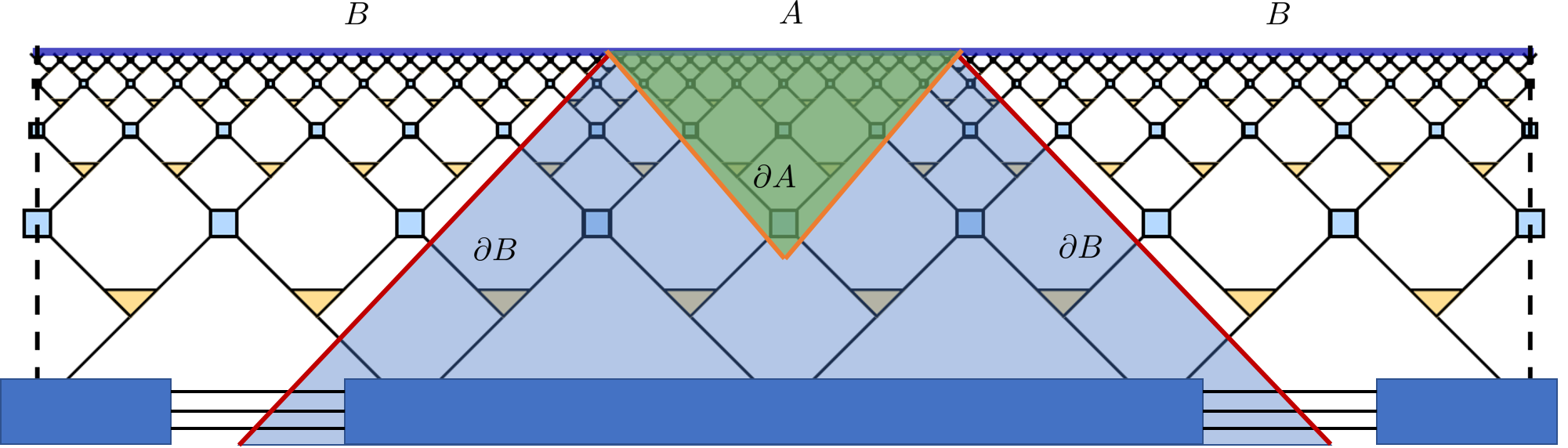}
    \caption{Green: past causal domain of dependence of $A$, and union of blue and green: past causal cone of $A$. Time runs upward.}
    \label{fig:mera}
\end{figure}

More precisely, we observe that the tensor network of the interface region is a MPS (\cref{figD1b}) by removing the local unitaries. The remaining structure contributes to the nonlocal magic is shown in \cref{figD1a}. Each matrix in the chain consists of two isometries and one disentangler. 

\begin{equation}\label{eq:MPS}
\begin{split}
\ket{\chi}_{AB}=M_1^{(s_1r_1)}M_2^{(s_2r_2)}\cdots &M_n^{(s_nr_n)}  \ket{s_1s_2\cdots s_n}_{\partial A}\\
&\otimes \ket{r_1r_2\cdots r_n}_{\partial B}.
\end{split}
\end{equation}

We expect the magic of this state to scale linearly with the number of matrices, namely the size of the light-cone, $\min\{|\partial A|, |\partial B|\}$. Indeed, we verify that magic scales as volume of the MPS, which is $\sim \log|A|$.

\tikzset{pics/matrix/.style args={#1#2#3}{
    code={
       \def\s{0.4}
    \draw[fill=orange] (-#1,0)--(-#1/2,-\s)--(-#1*1.5,-\s)--cycle;
   \draw[fill=orange] (#1,0)--(#1/2,-\s)--(#1*1.5,-\s)--cycle;
  \draw[fill=green] (-#1/2,-2*\s)--(#1/2,-2*\s)--(#1/2,-3*\s)--(-#1/2,-3*\s)--cycle;
  \draw[thick] (-#1/2,-\s)--(-#1/2,-2*\s);
  \draw[thick] (#1/2,-\s)--(#1/2,-2*\s);
  \draw[thick] (-#1*1.5,-\s)--(-#1*1.5,-2*\s);
  \draw[thick] (#1*1.5,-\s)--(#1*1.5,-2*\s);
  \draw[thick] (#1,0)--(#1,\s);
    \draw[thick] (-#1,0)--(-#1,\s);
    \draw[thick] (#1/2,-3*\s)--(#1/2,-4*\s);
    \draw[thick] (-#1/2,-3*\s)--(-#1/2,-4*\s);
    \draw (-#1*1.5,-2*\s) node[below]{#2};
    \draw (#1*1.5,-2*\s) node[below]{#3};
    }
},
    pics/square/.style args={#1#2}{
        code={
            \draw[fill=yellow] (-0.3,0) rectangle (0.3,-0.6);
            \draw[thick] (0,0)--(0,0.5);
            \draw[thick] (0,-0.6)--(0,-1.1);
            \draw[thick] (-0.3,-0.3)--(-1.1,-0.3);
            \draw[thick] (0.3,-0.3)--(1.1,-0.3);
            \draw (-1.1,-0.3) node[left] {#1};
            \draw (1.1,-0.3) node[right] {#2};
        }
    }
}

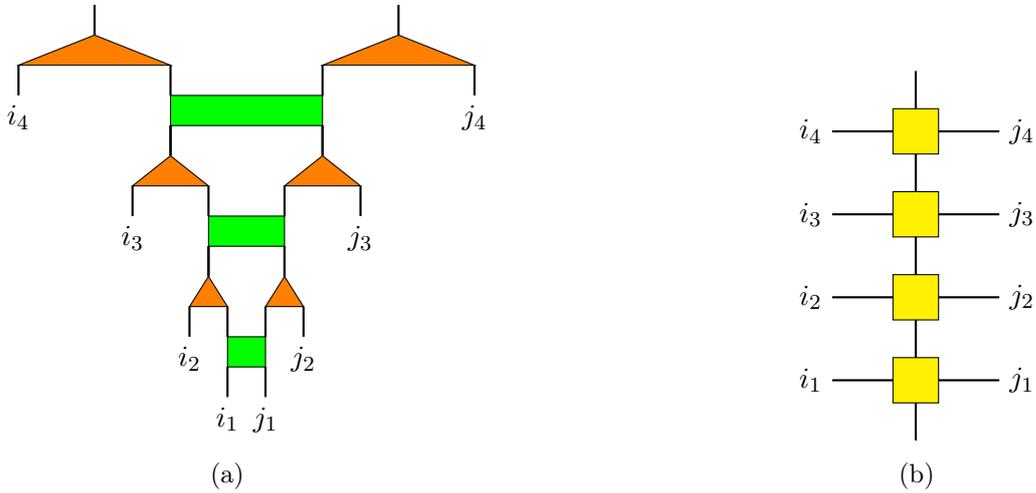
\begin{figure}
    \centering
    \begin{subfigure}[b]{0.4\textwidth}
        \begin{tikzpicture}
        \def\s{0.4}
        \pic at (0,0) {matrix={2}{$i_4$}{$j_4$}};
        \pic at (0,-4*\s) {matrix={1}{$i_3$}{$j_3$}};
        \pic at (0,-8*\s) {matrix={0.5}{$i_2$}{$j_2$}};
        \draw (-0.5/2,-12*\s) node[below]{$i_1$};
        \draw (0.5/2,-12*\s) node[below]{$j_1$};
    \end{tikzpicture}
    \caption{}
    \label{figD1a}
    \end{subfigure}
    \hfill
    \begin{subfigure}[b]{0.4\textwidth}
        \centering 
        \begin{tikzpicture}
          \pic at (0,0) {square={$i_1$}{$j_1$}};
          \pic at (0,1.1) {square={$i_2$}{$j_2$}};
          \pic at (0,2.2) {square={$i_3$}{$j_3$}};
          \pic at (0,3.3) {square={$i_4$}{$j_4$}};
        \end{tikzpicture}
        \caption{}
        \label{figD1b}
    \end{subfigure}
    \caption{The MERA tensor network with local unitaries removed produces a tensor network (a) that contributes to nonlocal magic. It can be written as an MPS (b) for which its stabilizer R\'enyi entropy can be computed numerically.}
    \label{fig:MPS}
\end{figure}

For the numerics, we pick a random realization of the disentangler and isometry and use them for each layer, in accordance of the scaling invariance. Then we present two estimations of the nonlocal magic of this MPS state. The first estimation we calculate the lower bound of the stabilizer relative entropy, given in \eqref{eq:relstab}. We present the result in~\cref{figD2a}. Both max entropy and the von Neumann entropy scale linearly with the number of matrices, and thus linearly with respect to the RT surface area and the entanglement entropy $S(A)$ of the boundary theory of subregion $A$. 
In the second estimation we calculate the entanglement spectrum of this state, denoting the set of eigenvalues as $\{\lambda_i\}$. Then we construct a state with the same entanglement spectrum and compute its nonlocal magic estimate using Eq.~\eqref{estimate_prop}. 

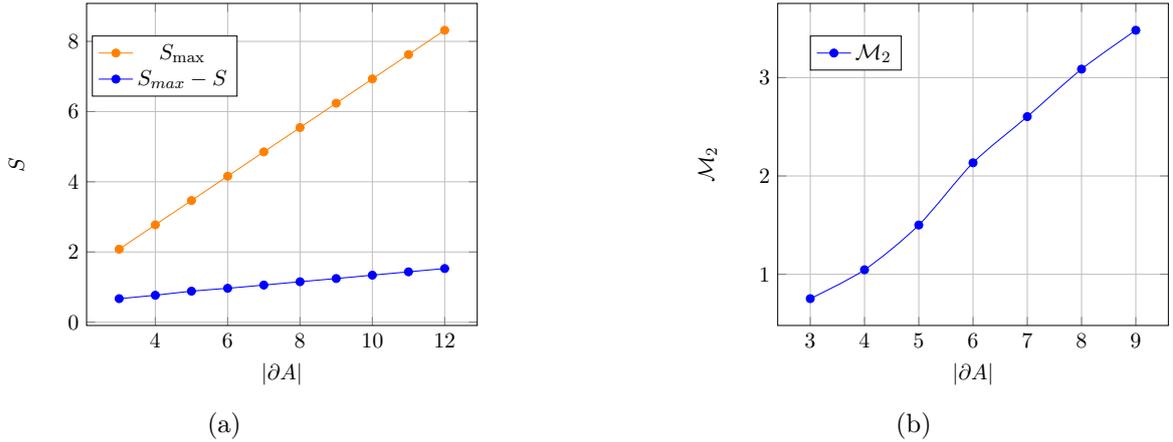
\begin{figure}[H]
    \centering
    \begin{subfigure}[b]{0.4\textwidth}
    \scalebox{0.75}{
    \begin{tikzpicture}
        \begin{axis}[
            xlabel={$|\partial A|$}, 
            ylabel={$S$}, 
            tick align=inside, 
            legend style={at={(0.2,0.9)},anchor=north}, 
            grid={both}
        ]
        \addplot[
            color=orange,
            mark=*, 
            smooth 
        ] coordinates {
            (3,2.079)
            (4,2.773)
            (5,3.466)
            (6,4.159)
            (7,4.852)
            (8,5.545)
            (9,6.238)
            (10,6.931)
            (11,7.624)
            (12,8.317)
        };
        \addlegendentry{\(S_{\text{max}}\)}
        \addplot[
            color=blue,
            mark=*, 
            smooth 
        ] coordinates {
            (3, 0.67)
             (4, 0.767)
             (5, 0.883)
             (6, 0.965)
             (7, 1.057)
             (8, 1.152)
             (9, 1.243)
             (10, 1.339)
             (11, 1.434)
             (12, 1.528)
        };
        \addlegendentry{\(S_{max}-S\)}
        \end{axis}
    \end{tikzpicture}
    }
    \caption{}
    \label{figD2a}
\end{subfigure}
\hfill
\begin{subfigure}[b]{0.4\textwidth}
    \centering
    \scalebox{0.75}{
    \begin{tikzpicture}
        \begin{axis}[
            xlabel={$|\partial A|$}, 
            ylabel={$\mathcal{M}_2$}, 
            tick align=inside, 
            legend style={at={(0.2,0.9)},anchor=north}, 
            grid={both}
        ]
        \addplot[
            color=blue,
            mark=*, 
            smooth 
        ] coordinates {
            (3,0.752299)
            (4,1.04593)
            (5,1.50105)
            (6,2.13376)
            (7,2.603420)
            (8,3.086418)
            (9,3.481793)
        };
        \addlegendentry{\(\mathcal{M}_{2}\)}
        \end{axis}
    \end{tikzpicture}
    }
    \caption{}
    \label{figD2b}
\end{subfigure}
\caption{(a) Maximal entropy and von Neumann entropy of the MPS as a function of the number of sites in the state. (b) stabilizer R\'enyi entropy of the state with small local magic.}
\end{figure}


The above intuition is also apparent when we think of the holographic QECC perspective of AdS/CFT where it is given by a code that corrects erasures approximately. In this case, complementary \cite{HarlowRT} approximate erasure correction promises the existence of recover unitaries supported on each subregion, such that

\begin{equation}
    U_A U_{A^c}|\tilde{\psi}\rangle_{AA^c} U_A^{\dagger} U_{A^c}^{\dagger} \approx |\psi\rangle |\chi\rangle,
\end{equation}
where $|\psi\rangle$ captures the bulk encoded information while $|\chi\rangle$ is the entanglement mediating erasure correction\footnote{We note that this heuristic argument is only expected to hold approximately in the leading order $N$ for holographic CFTs as that is when they function as approximate erasure correction codes with recovery errors suppressed by $1/N$. }. This is the case for certain of holographic QECC toy models, such as instances of approximate holographic Bacon-Shor codes\cite{ABSC} (see, e.g. Fig. 41 or generally when the skewing is small,) and \cite{Hayden_2016} when imperfectly entangled pairs are used in place of maximally entangled states when building the tensor network. The latter is known to be able to produce the correct single-interval CFT entanglement entropy but fails at the multi-interval level.

The states $|\chi\rangle$ 
now play the role of the interface tensor in MERA. To leading order, the R\'enyi entropies associated with $|\chi\rangle$ again scale as the area of the extremal surface where it is explicitly given by the number of entangled states across the bulk cut. Therefore, by (\cref{eqn:avgnlsre}), its magic as measured by the stabilizer R\'enyi entropy should scale as the area of the RT surface for any additive magic measure as one simply has to count the number of such approximate Bell pairs.
Strictly speaking, this again yields an upper bound as we do not optimize over all basis choices.

We also expect this linear dependence between nonlocal magic and entanglement to extend to noncritical states with translational invariance. For example, it is well-known that the truncated MERA (or more simply an MPS) can describe ground state of gapped phases where entanglement can increase slightly as the system grows, but plateaus at sufficiently large $|A|$\footnote{See \cite{Molina-Vilaplana:2012rmg} for example.}.  The nonlocal magic again resides on the edges connecting $A$ and $B$. From the bond counting argument, we again arrives at $\mathcal{M}^{NL}\sim S(A)$ provided $\mathcal{M}$ is an additive measure of magic.

Having established that the nonlocal magic should scale as the entropy, let us now examine how it should be connected with the antiflatness of the entanglement spectrum. Consider again the distilled state $|\chi\rangle$ which we represent as an MPS shared between $A$ and $B$ with local magic removed. Recall that since $\chi_A=\tr_B[|\chi\rangle\langle\chi|] = U_A\rho_A U_A^{\dagger}$, their entanglement spectrum and antiflatness are identical, i.e., $\mathcal{F}(\chi_A)=\mathcal{F}(\rho_A)$. 

For simplicity, let us approximate the MPS as entangled states $|\phi\rangle^{\otimes n}$ where each state $|\phi\rangle$ can be thought of as imperfect entangled pairs\footnote{For concreteness, one can think of them as imperfect Bell pairs. More generally, they do not have to be qubits, but a pair of qudits that are not maximally entangled. }. These states have volume law entanglement across $A$ and $B$.  We expect this to be a reasonable approximation because MPS with constant bond dimension limits the amount of correlation to be short-ranged, making them close to the tensor products which one can think of as a mean field approximation. We further support this claim with numerical evidence in Appendix~\ref{app:MPS}. 

It is known from Ref.~\cite{flatness} that for a typical state $|\phi\rangle_{ab}$ with stabilizer linear entropy $M_{\rm lin}(|\phi\rangle)$ chosen from its Clifford orbit $\{\Gamma_{ab}|\phi\rangle, \forall \Gamma\in \mathcal{C}_2\}$, the antiflatness of the entanglement spectrum of $\phi=\Tr_{b}[|\phi\rangle\langle\phi|]$ when cutting the state in half is given by
\begin{equation}
    \mathcal{F}(\phi) = c(d,d_a) M_{\rm lin}(\phi),
    \label{eqn:Cliffordtypical}
\end{equation}
where $c(d_a,d)=\frac{(d^2-d_a^2)(d_a^2-1)}{(d^2-1)(d+2)}$. Note that the second stabilizer R\'enyi entropy is related to the stabilizer linear entropy $\mathcal{M}_2=-\log (1-M_{\rm lin})$. Here $M_{\rm lin}\leq 1-2(d+1)^{-1}$ with $d$ being the dimension of the Hilbert space of $|\phi\rangle$ and $d_a=\sqrt{d}$ the Hilbert space dimension of subsystem $a$. Applying \eqref{eqn:Cliffordtypical} to each pair, we would have 
\begin{equation}
\begin{split}
    \mathcal{M}_2(|\phi\rangle) \approx &\frac{\mathcal{F}(\phi_a)}{c(d,d_a)}\\
    \approx &- \frac{\pur(\phi_a)^2}{c(d,d_a)}\left.\frac{\partial\tilde{S}_m(\phi_a)}{\partial m}\right\vert_{m=1},
\end{split}
\end{equation}
where we have applied the approximation \eqref{eqn:approxrenyid} to rewrite the right-hand side. in terms of additive antiflatness measure. 
Based on the assumption of distillation $|\chi\rangle\approx |\phi\rangle^{\otimes n}$ (see~\cref{app:MPS} for discussion) and additivity of $\mathcal{M}_2$ we conclude that
\begin{equation}\label{eqn:minsurfNLM}
\begin{split}
 \mathcal{M}_2(\ket{\chi})&\approx n\frac{\pur(\phi_a)^2}{c(d,d_a)}|\partial_m\tilde{S}_m(\phi_a)||_{m=1}\\
 &=\kappa |\partial_m\tilde{S}_m(\chi_A)||_{m=1}.
\end{split}
\end{equation}
where $\kappa = \pur(\phi_a)^2/c(d,d_a)$  is some coefficient that depends on the details of $|\phi\rangle$. Note that $\partial_m\tilde{S}_m$ is negative in our convention.

Since we argued that $\mathcal{M}_2(|\chi\rangle) \approx \mathcal{M}_2^{NL}(|\psi\rangle_{AB})$, 
\begin{equation}
    \mathcal{M}_2^{NL}(|\psi\rangle_{AB}) \approx\kappa |\partial_m\tilde{S}_m(\chi_A)||_{m=1}.
    \label{eqn:NLMflatness}
\end{equation}
for a CFT ground state. Therefore, if one uses the computable stabilizer R\'enyi entropy $\mathcal{M}_2$, we predict that the nonlocal magic scales linearly with both entanglement entropy and the additive antiflatness $\partial_m\tilde{S}_m(\chi_A)|_{m=1}$ of the entanglement spectrum across $A$ and $B$. 
In \cref{sec:numerics}, we numerically verify that this is indeed the case for an Ising CFT.

\begin{remark}
    Notably, our reasoning for the \emph{area-law} scaling of exact magic, i.e. $\mathcal{M}^{(NL)}\sim S(A)$, and the nonflatness relation (\ref{eqn:NLMflatness}) does not rely on any particular properties of the CFT. Indeed, as long as one can concentrate the magic from $A$ to $\partial A$ using the kind of unitary distillation procedure, this area law would also hold for gapped system with entanglement area law. The antiflatness relation is also similar to the area law scaling of entanglement spread\cite{harrow_2002_tight,Coudron_2019_entanglement,harrow_2010_entanglement,Anshu_2022_entanglement,Bennett_2014_reverse}.
\end{remark}

\subsection{Nonlocal magic in Ising model}\label{sec:numerics}
 
In this section we provide numerical computations to support our prior conjectures. We begin with the $1 + 1D$ transverse-field Ising model, with Hamiltonian given by
\ba\label{IsingHamiltonian}
H_{\rm Ising}=-\cos(\theta)\sum_i Z_i Z_{i+1}-\sin(\theta)\sum_i X_i.
\ea
We particularly consider \eqref{IsingHamiltonian} near its critical point, when $\theta = \pi/4$.

This model is described by an Ising CFT in the thermodynamic limit at criticality, that is when $\theta=\frac{\pi}{4}$. For our analysis, we perform exact diagonalization to determine the ground state of a 26-site spin chain with periodic boundary condition. Subsequently, the state is partitioned into two contiguous segments: $A$ and $\bar{A}$. To numerically estimate the nonlocal magic related to this bipartition, we use the Stabilizer Rényi Entropy measure $\mathcal{M}_2(\{\lambda_i\})$, as defined in~\cref{section:estimate}. Importantly, this measure relies solely on the entanglement spectrum, which we obtain through singular value decomposition.

At the critical point, we compute the  $\mathcal{M}_2(\{\lambda_i\})$ measure while progressively increasing the size of the subsystem $|A|$. The plot of $\mathcal{M}_2$ is present in \cref{fig:magic_CFTa}. 

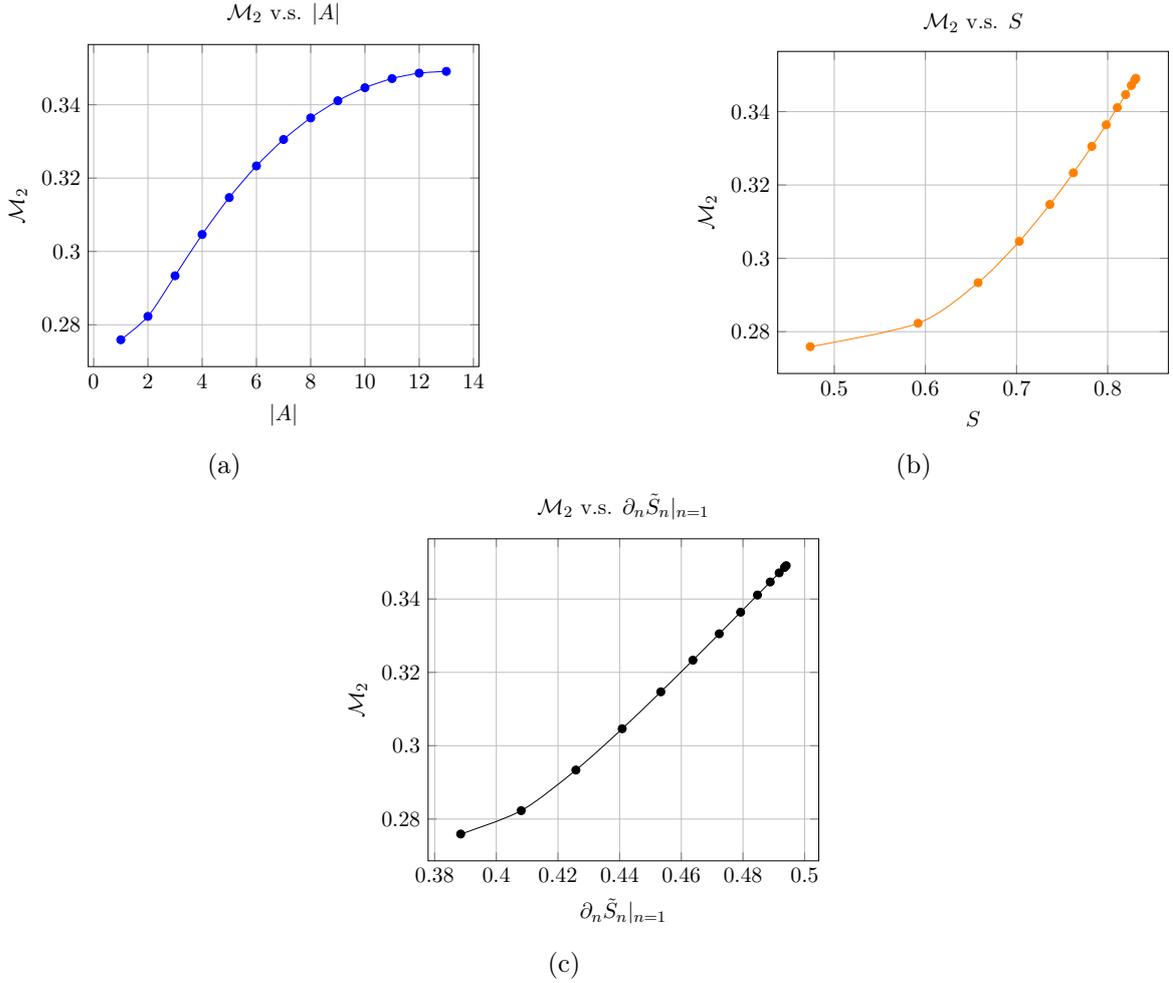
\begin{figure}
    \centering
    \begin{subfigure}[b]{0.4\textwidth}
    \scalebox{0.75}{
    \begin{tikzpicture}
        \begin{axis}[
            xlabel={$|A|$}, 
            ylabel={$\mathcal{M}_2$}, 
            tick align=inside, 
            title={$\mathcal{M}_2$ vs $|A|$},
            legend style={at={(0.2,0.9)},anchor=north}, 
            grid={both}
        ]
        \addplot[
            color=blue,
            mark=*, 
            smooth 
        ] coordinates {
            (1, 0.2759145827825882)
             (2, 0.28228738186896263)
             (3, 0.29336691658661357)
             (4, 0.30462444579473225)
             (5, 0.3146928627439401)
             (6, 0.32331311085491654)
             (7, 0.3305184956722919)
             (8, 0.3364093242384839)
             (9, 0.34108791009708955)
             (10, 0.3446403128445661)
             (11, 0.3471323166256153)
             (12, 0.34860965777546515)
             (13, 0.3490991613456804)
        };
        \end{axis}
    \end{tikzpicture}
    }
    \caption{}
   \label{fig:magic_CFTa}
\end{subfigure}
\hfill
\begin{subfigure}[b]{0.4\textwidth}
    \centering
    \scalebox{0.75}{
    \begin{tikzpicture}
        \begin{axis}[
            xlabel={$S$}, 
            ylabel={$\mathcal{M}_2$}, 
            tick align=inside, 
            title={$\mathcal{M}_2$ vs $S$ },
            legend style={at={(0.2,0.9)},anchor=north}, 
            grid={both}
        ]
        \addplot[
            color=orange,
            mark=*, 
            smooth 
        ] coordinates {
            (0.47365496896115133, 0.2759145827825882)
             (0.5918813439084066, 0.28228738186896263)
             (0.6578352071417765, 0.29336691658661357)
             (0.7030228571069249, 0.30462444579473225)
             (0.7365380200480853, 0.3146928627439401)
             (0.7623514890669422, 0.32331311085491654)
             (0.782554333578475, 0.3305184956722919)
             (0.7983723766168666, 0.3364093242384839)
             (0.8105729284602428, 0.34108791009708955)
             (0.8196538174273882, 0.3446403128445661)
             (0.8259401351911264, 0.3471323166256153)
             (0.8296366915149734, 0.34860965777546515)
             (0.8308567814379734, 0.3490991613456804)
        };
        \end{axis}
    \end{tikzpicture}
    }
    \caption{}
    \label{fig:magic_CFTb}
\end{subfigure}
\hfill
\begin{subfigure}[b]{0.4\textwidth}
    \centering
    \scalebox{0.75}{
    \begin{tikzpicture}
        \begin{axis}[
            xlabel={$\partial_n\tilde{S}_n|_{n=1}$}, 
            ylabel={$\mathcal{M}_2$}, 
            tick align=inside, 
            title={$\mathcal{M}_2$ vs $\partial_n\tilde{S}_n|_{n=1}$ },
            legend style={at={(0.2,0.9)},anchor=north}, 
            grid={both}
        ]
        \addplot[
            color=black,
            mark=*, 
            smooth 
        ] coordinates {
            (0.38844044873524697, 0.2759145827825882)
             (0.4080570661929262, 0.28228738186896263)
             (0.4258068254534285, 0.29336691658661357)
             (0.44083017326285073, 0.30462444579473225)
             (0.4533650556446711, 0.3146928627439401)
             (0.46375639975735045, 0.32331311085491654)
             (0.4722992639562782, 0.3305184956722919)
             (0.47922104982083263, 0.3364093242384839)
             (0.4846911425107271, 0.34108791009708955)
             (0.4888330874178818, 0.3446403128445661)
             (0.4917343044615911, 0.3471323166256153)
             (0.493452921810052, 0.34860965777546515)
             (0.49402218734097664, 0.3490991613456804)
        };
        \end{axis}
    \end{tikzpicture}
    }
    \caption{}
    \label{fig:magic_CFTc}
\end{subfigure}
\label{fig:magic_CFT}
\caption{(a) Plot of nonlocal stabilizer R\'enyi entropy $\mathcal{M}_2$ vs subsystem size |A|;  (b) Plot of  $\mathcal{M}_2$ vs Entropy $S$. (c) Plot of  $\mathcal{M}_2$ vs the antiflatness based on entanglement capacity. Model is at critical point, with 26 lattice sites.}
\end{figure}

In \cref{fig:magic_CFTb}, we observe that the nonlocal magic scales similarly to entropy when we increase the size of the subregion $|A|$, particularly beyond 3 qubits. This indicates that the nonlocal magic in the CFT scales logarithmically with $|A|$, in agreement with our analysis presented in the MERA framework in~\cref{section:MERA}. Additionally, \cref{fig:magic_CFTc} demonstrates the proportional relationship between nonlocal magic and antiflatness, supporting the estimation in   \eqref{eqn:NLMflatness}. 

A similar analysis is applied to study the model away from the critical point, as illustrated in~\cref{fig:offcritical}.  We define the parameter $g=\theta-\frac{\pi}{4}$, where $g$ quantifies the deviation from criticality. In this regime, we observe that the nonlocal magic reaches a plateau at a certain point, mirroring the behavior observed in entropy. 

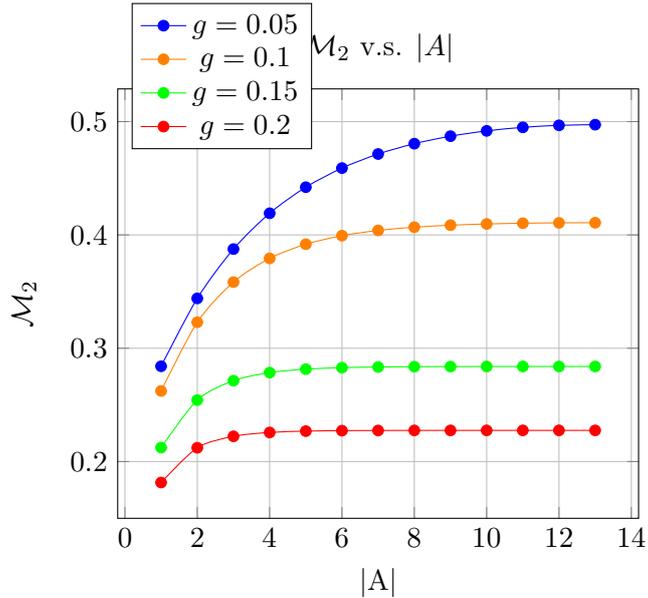
\begin{figure}
    \centering
    \begin{tikzpicture}
        \begin{axis}[
            xlabel={|A|}, 
            ylabel={$\mathcal{M}_2$}, 
            tick align=inside, 
            title={$\mathcal{M}_2$ vs $|A|$},
            legend style={at={(0.2,1.2)},anchor=north}, 
            grid={both}
        ]
        \addplot[
            color=blue,
            mark=*, 
            smooth 
        ] coordinates {
            (1, 0.28407617773515886)
              (2, 0.34397774537887427)
              (3, 0.38751332633761937)
              (4, 0.4190970491310785)
              (5, 0.4421015965056568)
              (6, 0.45896702985427884)
              (7, 0.4713795896484691)
              (8, 0.4804991524617711)
              (9, 0.48712558430690295)
              (10, 0.4918089362542348)
              (11, 0.4949206945520738)
              (12, 0.49669922942297173)
              (13, 0.49727781169858104)
        };
        \addlegendentry{\(g=0.05\)}
        \addplot[
            color=orange,
            mark=*, 
            smooth 
        ] coordinates {
            (1, 0.2622746594000573)
              (2, 0.32291548379551155)
              (3, 0.35835308837480184)
              (4, 0.37928016995541014)
              (5, 0.3917771326269049)
              (6, 0.3993340830642571)
              (7, 0.4039561283143258)
              (8, 0.4068079872130534)
              (9, 0.4085740871637637)
              (10, 0.40965959874648833)
              (11, 0.41030367773714116)
              (12, 0.41064345911210953)
              (13, 0.41074948178076964)
        };
        \addlegendentry{\(g=0.1\)}
        \addplot[
            color=green,
            mark=*, 
            smooth 
        ] coordinates {
            (1, 0.21230120003858938)
              (2, 0.2542448918104177)
              (3, 0.27134898072257646)
              (4, 0.2784874066332622)
              (5, 0.28152527488291956)
              (6, 0.2828413759502512)
              (7, 0.28342048270374093)
              (8, 0.28367879657447115)
              (9, 0.2837954188472098)
              (10, 0.2838485595687859)
              (11, 0.28387271597683295)
              (12, 0.2838830717667294)
              (13, 0.28388594897480607)
        };
        \addlegendentry{\(g=0.15\)}
        \addplot[
            color=red,
            mark=*, 
            smooth 
        ] coordinates {
            (1, 0.18148373123245282)
              (2, 0.21215783087646567)
              (3, 0.22223636904171473)
              (4, 0.22565129476412513)
              (5, 0.2268364651797759)
              (6, 0.22725618915490875)
              (7, 0.22740733176914482)
              (8, 0.22746252974583842)
              (9, 0.22748294302369282)
              (10, 0.22749058018764026)
              (11, 0.22749345708531876)
              (12, 0.2274945078106672)
              (13, 0.2274947725650576)
        };
        \addlegendentry{\(g=0.2\)}
        \end{axis}
    \end{tikzpicture}
    \caption{ Plot of nonlocal stabilizer R\'enyi entropy $\mathcal{M}_2$ vs subsystem size |A|. The model parameter $g=\theta-\frac{\pi}{4}$ is adjusted to position the model away from its critical point.}
    \label{fig:offcritical}
\end{figure}

In our final analysis, we keep the size of the subregion $|A|$ constant and track the changes in nonlocal magic as the model approaches and passes through the critical point. As depicted in \cref{fig:transition}, a distinct peak in nonlocal magic is observed. Notably, this peak shifts closer to the critical point ($g=0$) and becomes increasingly sharp as the total system size ($n$) is enlarged. These observations suggest the potential presence of a phase transition in the nonlocal magic measure. \cref{fig:comparison} presents a comparison of nonlocal magic and antiflatness against the model parameter $g$, revealing a consistent trend as $g$ changes. 

\begin{figure}[H]
    \centering
    \begin{tikzpicture}
        \begin{axis}[
            xlabel={$g$}, 
            ylabel={$\mathcal{M}_2$}, 
            title={$\mathcal{M}_2$ vs $g$ },
            tick align=inside, 
            legend style={at={(0.2,1.2)},anchor=north}, 
            grid={both}
        ]
        \addplot[
            color=blue,
            mark=*, 
            smooth 
        ] coordinates {
            (-0.2324768366025517, 0.010179027166607601)
             (-0.2174768366025518, 0.012700677518416869)
             (-0.20247683660255178, 0.015914929866794172)
             (-0.18747683660255177, 0.020024212927263287)
             (-0.17247683660255175, 0.02528791586737733)
             (-0.15747683660255174, 0.032034550069390655)
             (-0.14247683660255173, 0.04067294569286466)
             (-0.1274768366025517, 0.05169871325524065)
             (-0.11247683660255181, 0.06568911429733332)
             (-0.09747683660255169, 0.08327519508063366)
             (-0.08247683660255178, 0.10507537532282413)
             (-0.06747683660255177, 0.13157250926463754)
             (-0.05247683660255176, 0.16292287590693638)
             (-0.037476836602551744, 0.19870953869174304)
             (-0.02247683660255173, 0.2376998361713844)
             (-0.0074768366025518285, 0.27772582130124907)
             (0.007523163397448296, 0.31583345964325205)
             (0.022523163397448198, 0.348777714878471)
             (0.03752316339744832, 0.37375895572844275)
             (0.052523163397448225, 0.3891051689553718)
             (0.06752316339744824, 0.3945830900350399)
             (0.08252316339744825, 0.39122799220737287)
             (0.09752316339744815, 0.38084991622703784)
             (0.11252316339744828, 0.36548868544182306)
             (0.12752316339744818, 0.3470148330281814)
             (0.1425231633974483, 0.32693104318822336)
             (0.1575231633974482, 0.3063316700703259)
             (0.17252316339744822, 0.2859505784494)
             (0.18752316339744823, 0.26624121060662875)
             (0.20252316339744825, 0.247456486000428)
             (0.21752316339744826, 0.22971483114393784)
        };
        \addlegendentry{\(n=10\)}
        \addplot[
            color=orange,
            mark=*, 
            smooth 
        ] coordinates {
            (-0.2324768366025517, 0.008465978246463156)
             (-0.2174768366025518, 0.010407193006500729)
             (-0.20247683660255178, 0.012875327118831839)
             (-0.18747683660255177, 0.016039921641527576)
             (-0.17247683660255175, 0.0201294554803197)
             (-0.15747683660255174, 0.025450769720905188)
             (-0.14247683660255173, 0.03241275236797553)
             (-0.1274768366025517, 0.04155263056466259)
             (-0.11247683660255181, 0.053559580056458625)
             (-0.09747683660255169, 0.06928326208862791)
             (-0.08247683660255178, 0.08970284247116354)
             (-0.06747683660255177, 0.11581559301769835)
             (-0.05247683660255176, 0.14839133812423738)
             (-0.037476836602551744, 0.1875538019511966)
             (-0.02247683660255173, 0.23223365361545356)
             (-0.0074768366025518285, 0.27971406882100586)
             (0.007523163397448296, 0.3256706714548758)
             (0.022523163397448198, 0.3650407584940913)
             (0.03752316339744832, 0.39354680139542364)
             (0.052523163397448225, 0.4090658305660494)
             (0.06752316339744824, 0.41201429206413853)
             (0.08252316339744825, 0.40466397043774915)
             (0.09752316339744815, 0.39001853899237493)
             (0.11252316339744828, 0.37090936249006595)
             (0.12752316339744818, 0.3495543491942982)
             (0.1425231633974483, 0.32748593479147553)
             (0.1575231633974482, 0.30566298236122613)
             (0.17252316339744822, 0.2846304790946891)
             (0.18752316339744823, 0.2646616263994785)
             (0.20252316339744825, 0.24586326932071415)
             (0.21752316339744826, 0.22824671738457047)
        };
        \addlegendentry{\(n=12\)}
        \addplot[
            color=green,
            mark=*, 
            smooth 
        ] coordinates {
            (-0.2324768366025517, 0.007933413987301907)
             (-0.2174768366025518, 0.009643632352382655)
             (-0.20247683660255178, 0.011791929928751497)
             (-0.18747683660255177, 0.014519609408522049)
             (-0.17247683660255175, 0.018021681370604066)
             (-0.15747683660255174, 0.02256839312394047)
             (-0.14247683660255173, 0.028534509077290136)
             (-0.1274768366025517, 0.03643744218320802)
             (-0.11247683660255181, 0.04698286638780856)
             (-0.09747683660255169, 0.06110965787415222)
             (-0.08247683660255178, 0.08001058854894082)
             (-0.06747683660255177, 0.10507569246146076)
             (-0.05247683660255176, 0.13766195830438846)
             (-0.037476836602551744, 0.17856395871997527)
             (-0.02247683660255173, 0.22713655856876372)
             (-0.0074768366025518285, 0.28033989294322587)
             (0.007523163397448296, 0.33248637767314865)
             (0.022523163397448198, 0.37655598026422266)
             (0.03752316339744832, 0.40684951518439183)
             (0.052523163397448225, 0.4212065461968845)
             (0.06752316339744824, 0.4211640128304521)
             (0.08252316339744825, 0.41036777854896495)
             (0.09752316339744815, 0.39276294067321815)
             (0.11252316339744828, 0.37155015442930295)
             (0.12752316339744818, 0.34892560405388956)
             (0.1425231633974483, 0.3262306498393502)
             (0.1575231633974482, 0.3042075712158339)
             (0.17252316339744822, 0.283222806200963)
             (0.18752316339744823, 0.2634239992990375)
             (0.20252316339744825, 0.24483906960710441)
             (0.21752316339744826, 0.22743469159561436)
        };
        \addlegendentry{\(n=14\)}
        \addplot[
            color=red,
            mark=*, 
            smooth 
        ] coordinates {
            (-0.2324768366025517, 0.007637717074741366)
             (-0.2174768366025518, 0.009174338805064993)
             (-0.20247683660255178, 0.011052959394144994)
             (-0.18747683660255177, 0.013365130617364369)
             (-0.17247683660255175, 0.01623280968600742)
             (-0.15747683660255174, 0.01982129786435423)
             (-0.14247683660255173, 0.024359537181350557)
             (-0.1274768366025517, 0.03017316361630484)
             (-0.11247683660255181, 0.037740180797763534)
             (-0.09747683660255169, 0.047786889149565595)
             (-0.08247683660255178, 0.061453076176694446)
             (-0.06747683660255177, 0.08056229243796553)
             (-0.05247683660255176, 0.1079877179114993)
             (-0.037476836602551744, 0.14782590237321785)
             (-0.02247683660255173, 0.20418188817273544)
             (-0.0074768366025518285, 0.2762491075509873)
             (0.007523163397448296, 0.351294063986569)
             (0.022523163397448198, 0.4087064418576)
             (0.03752316339744832, 0.437567104391568)
             (0.052523163397448225, 0.4421015965056568)
             (0.06752316339744824, 0.4316509592969976)
             (0.08252316339744825, 0.41346368643960796)
             (0.09752316339744815, 0.3917771326269049)
             (0.11252316339744828, 0.36885259515848734)
             (0.12752316339744818, 0.3458776892433747)
             (0.1425231633974483, 0.32347725558175994)
             (0.1575231633974482, 0.3019745954831864)
             (0.17252316339744822, 0.28152527488291956)
             (0.18752316339744823, 0.262188711010384)
             (0.20252316339744825, 0.24396854935148515)
             (0.21752316339744826, 0.2268364651797759)
        };
        \addlegendentry{\(n=26\)}
        \end{axis}
    \end{tikzpicture}
    \caption{ Plot of nonlocal stabilizer R\'enyi entropy $\mathcal{M}_2$ vs parameter $g=\theta-\frac{\pi}{4}$, at $|A|=5$ with increasing total spins $n$. }
    \label{fig:transition}
\end{figure}
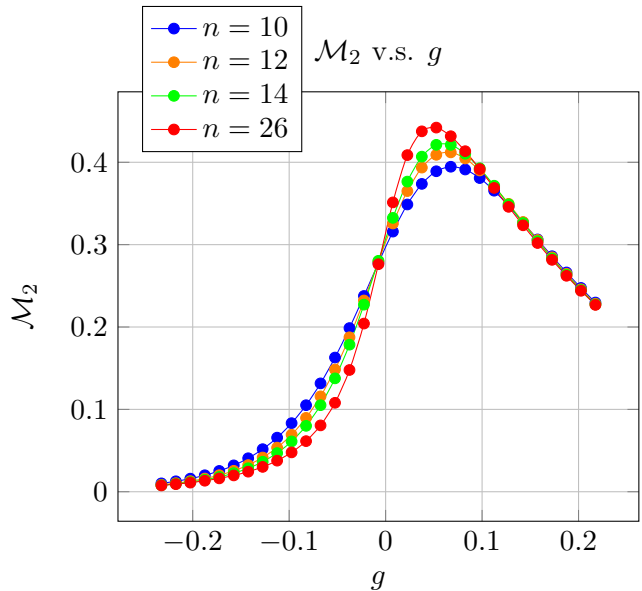

\begin{figure}[H]
    \centering
    \begin{subfigure}[b]{0.4\textwidth}
        \scalebox{0.7}{
        \begin{tikzpicture}
            \begin{axis}[
            xlabel={$g$}, 
            ylabel={$\mathcal{M}_2$}, 
            tick align=inside, 
            legend style={at={(0.2,1.2)},anchor=north}, 
            grid={both}
        ]
        \addplot[
            color=blue,
            mark=*, 
            smooth 
        ] coordinates {
            (-0.2324768366025517, 0.007145304321954218)
             (-0.2174768366025518, 0.008487960605390173)
             (-0.20247683660255178, 0.010101485878992943)
             (-0.18747683660255177, 0.012053517023108064)
             (-0.17247683660255175, 0.014435402927989302)
             (-0.15747683660255174, 0.017374563446235013)
             (-0.14247683660255173, 0.02105582840800511)
             (-0.1274768366025517, 0.02575961508929479)
             (-0.11247683660255181, 0.03193201207252851)
             (-0.09747683660255169, 0.04031495948275232)
             (-0.08247683660255178, 0.05218588101359276)
             (-0.06747683660255177, 0.06977904131169534)
             (-0.05247683660255176, 0.09693602789547702)
             (-0.037476836602551744, 0.13974677692867024)
             (-0.02247683660255173, 0.20574693233055968)
             (-0.0074768366025518285, 0.29755202131258474)
             (0.007523163397448296, 0.3990203245055871)
             (0.022523163397448198, 0.47501942013468546)
             (0.03752316339744832, 0.5045213886126098)
             (0.052523163397448225, 0.49727781169858104)
             (0.06752316339744824, 0.4722337165709027)
             (0.08252316339744825, 0.44161526289047964)
             (0.09752316339744815, 0.41074948178076964)
             (0.11252316339744828, 0.3814622520861778)
             (0.12752316339744818, 0.354203227070393)
             (0.1425231633974483, 0.3289563794953849)
             (0.1575231633974482, 0.30557374117264996)
             (0.17252316339744822, 0.28388594897480607)
             (0.18752316339744823, 0.2637345106763872)
             (0.20252316339744825, 0.24497865343083947)
             (0.21752316339744826, 0.2274947725650576)
        };
        \end{axis}
        \end{tikzpicture}
        }
        \caption{}
    \end{subfigure}
    \hfill
    \begin{subfigure}[b]{0.4\textwidth}
       \scalebox{0.7}{
        \begin{tikzpicture}
            \begin{axis}[
            xlabel={$g$}, 
            ylabel={$\frac{\mathcal{F}(\rho)}{\mathrm{Pur}(\rho)^2}$}, 
            tick align=inside, 
            legend style={at={(0.2,1.2)},anchor=north}, 
            grid={both}
        ]
        \addplot[
            color=green,
            mark=*, 
            smooth 
        ] coordinates {
            (-0.2324768366025517, 0.0017906550155215046)
             (-0.2174768366025518, 0.002128430358248331)
             (-0.20247683660255178, 0.002535271954815998)
             (-0.18747683660255177, 0.0030292921303933847)
             (-0.17247683660255175, 0.003635877087823563)
             (-0.15747683660255174, 0.004392447420408408)
             (-0.14247683660255173, 0.005357636636801908)
             (-0.1274768366025517, 0.006629761987891159)
             (-0.11247683660255181, 0.008385239763439722)
             (-0.09747683660255169, 0.010960203971579534)
             (-0.08247683660255178, 0.015024513163008176)
             (-0.06747683660255177, 0.021939845122460465)
             (-0.05247683660255176, 0.0343972961772237)
             (-0.037476836602551744, 0.05700884023110594)
             (-0.02247683660255173, 0.09447793318209577)
             (-0.0074768366025518285, 0.1426692425225171)
             (0.007523163397448296, 0.18137932190436926)
             (0.022523163397448198, 0.1939601185649182)
             (0.03752316339744832, 0.18574544217475258)
             (0.052523163397448225, 0.16953391029563253)
             (0.06752316339744824, 0.15267413044338285)
             (0.08252316339744825, 0.1375274912018791)
             (0.09752316339744815, 0.12439700438451141)
             (0.11252316339744828, 0.11302272776847992)
             (0.12752316339744818, 0.10307817688079396)
             (0.1425231633974483, 0.09429054631087523)
             (0.1575231633974482, 0.08645135784048381)
             (0.17252316339744822, 0.0794032334518217)
             (0.18752316339744823, 0.07302573383562753)
             (0.20252316339744825, 0.06722470772402801)
             (0.21752316339744826, 0.061924963221174356)
        };
        \end{axis}
        \end{tikzpicture}
        }
        \caption{}
    \end{subfigure}
     \vskip\baselineskip
     \begin{subfigure}[b]{0.4\textwidth}
        \scalebox{0.7}{
         \begin{tikzpicture}
            \begin{axis}[
            xlabel={$g$}, 
            ylabel={$|\partial_n \tilde{S}_n||_{n=1}$}, 
            tick align=inside, 
            legend style={at={(0.2,1.2)},anchor=north}, 
            grid={both}
        ]
        \addplot[
            color=black,
            mark=*, 
            smooth 
        ] coordinates {
            (-0.2324768366025517, 0.005280910267068737)
             (-0.2174768366025518, 0.006510627625492099)
             (-0.20247683660255178, 0.00803596046963862)
             (-0.18747683660255177, 0.009939502333841627)
             (-0.17247683660255175, 0.012334930862737499)
             (-0.15747683660255174, 0.015384434278946342)
             (-0.14247683660255173, 0.019329900398019965)
             (-0.1274768366025517, 0.02455095011158324)
             (-0.11247683660255181, 0.031675969945498386)
             (-0.09747683660255169, 0.04179745283457983)
             (-0.08247683660255178, 0.056886945731048945)
             (-0.06747683660255177, 0.08056106196909345)
             (-0.05247683660255176, 0.11931511666420035)
             (-0.037476836602551744, 0.18365927393584316)
             (-0.02247683660255173, 0.2854831656147101)
             (-0.0074768366025518285, 0.4227007543631155)
             (0.007523163397448296, 0.5574088985771578)
             (0.022523163397448198, 0.6380384975874119)
             (0.03752316339744832, 0.6538151915333436)
             (0.052523163397448225, 0.629612387418702)
             (0.06752316339744824, 0.5898810310801145)
             (0.08252316339744825, 0.5471976045328252)
             (0.09752316339744815, 0.5062319411694376)
             (0.11252316339744828, 0.4682415484170781)
             (0.12752316339744818, 0.43330194413223755)
             (0.1425231633974483, 0.4011593535907676)
             (0.1575231633974482, 0.37151086480571227)
             (0.17252316339744822, 0.3440817036848223)
             (0.18752316339744823, 0.3186383619697557)
             (0.20252316339744825, 0.29498445375401916)
             (0.21752316339744826, 0.27295368302457707)
        };
        \end{axis}
         \end{tikzpicture}
    }
         \caption{}
     \end{subfigure}
     \hfill
     \begin{subfigure}[b]{0.4\textwidth}
        \scalebox{0.7}{
         \begin{tikzpicture}
         \begin{axis}[
            xlabel={$g$}, 
            ylabel={$S$}, 
            tick align=inside, 
            legend style={at={(0.2,1.2)},anchor=north}, 
            grid={both}
        ]
        \addplot[
            color=red,
            mark=*, 
            smooth 
        ] coordinates {
            (-0.2324768366025517, 0.7075195993413678)
             (-0.2174768366025518, 0.7098622966766157)
             (-0.20247683660255178, 0.7126093020806886)
             (-0.18747683660255177, 0.7158476059617306)
             (-0.17247683660255175, 0.7196900786617996)
             (-0.15747683660255174, 0.7242859044010616)
             (-0.14247683660255173, 0.7298358887171571)
             (-0.1274768366025517, 0.7366142225423622)
             (-0.11247683660255181, 0.7449965966544939)
             (-0.09747683660255169, 0.7554868402188267)
             (-0.08247683660255178, 0.7687086181565803)
             (-0.06747683660255177, 0.7852568413515487)
             (-0.05247683660255176, 0.8051400532946233)
             (-0.037476836602551744, 0.8263312358056276)
             (-0.02247683660255173, 0.8423208036318784)
             (-0.0074768366025518285, 0.8413255742297105)
             (0.007523163397448296, 0.8131684153975448)
             (0.022523163397448198, 0.7604474831216399)
             (0.03752316339744832, 0.6968134434919527)
             (0.052523163397448225, 0.634554879455417)
             (0.06752316339744824, 0.5791007799488848)
             (0.08252316339744825, 0.5311801664511724)
             (0.09752316339744815, 0.48976840976322106)
             (0.11252316339744828, 0.45356850203610183)
             (0.12752316339744818, 0.4214828918603129)
             (0.1425231633974483, 0.39268085746758397)
             (0.1575231633974482, 0.366551789226895)
             (0.17252316339744822, 0.34264488394343834)
             (0.18752316339744823, 0.32062101358587114)
             (0.20252316339744825, 0.30021898829499505)
             (0.21752316339744826, 0.2812328944584596)
        };
        \end{axis}
         \end{tikzpicture}
         }
         \caption{}
     \end{subfigure}
    \caption{(a) Plot of nonlocal magic $\mathcal{M}_2$ vs $g$. (b) Plot of antiflatness vs $g$. (c) Plot of antiflatness $|{\partial_n\tilde{S}_n}|$ vs $g$. (d) Plot of entropy $S$ vs $g$. All of these plots are based on data for fixed subregion size $|A|=13$. } 
    \label{fig:comparison}
\end{figure}
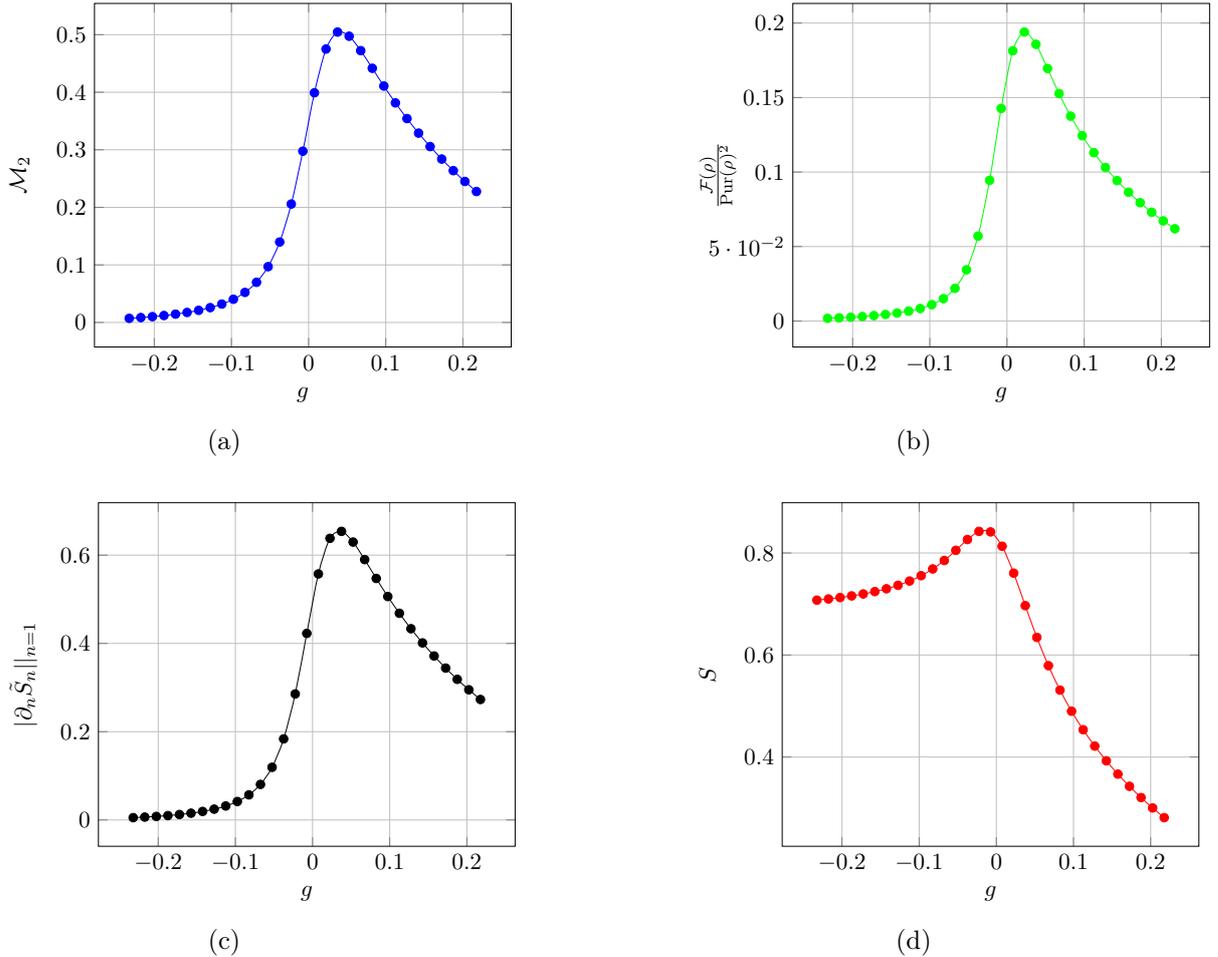

However, it is important to point out that nonlocal magic is not simply the entanglement entropy despite their similarity in this example. For instance, the ratio between nonlocal magic and entanglement depends on $g$. \cref{fig:NonLocalMagicSurface} gives a complete picture of $\mathcal{M}_2/S$ for a $14$-qubit Ising chain, as we vary both the parameter $g$ and the subsystem cardinality $|A|$. We observe that $\mathcal{M}_2/S$ maximizes for angles slightly above the critical point ($g = 0$) due to finite-size effect, in agreement with \cref{fig:transition,fig:comparison}. 
\begin{figure}[H]   
\begin{center}
		\begin{overpic}[width=9cm]{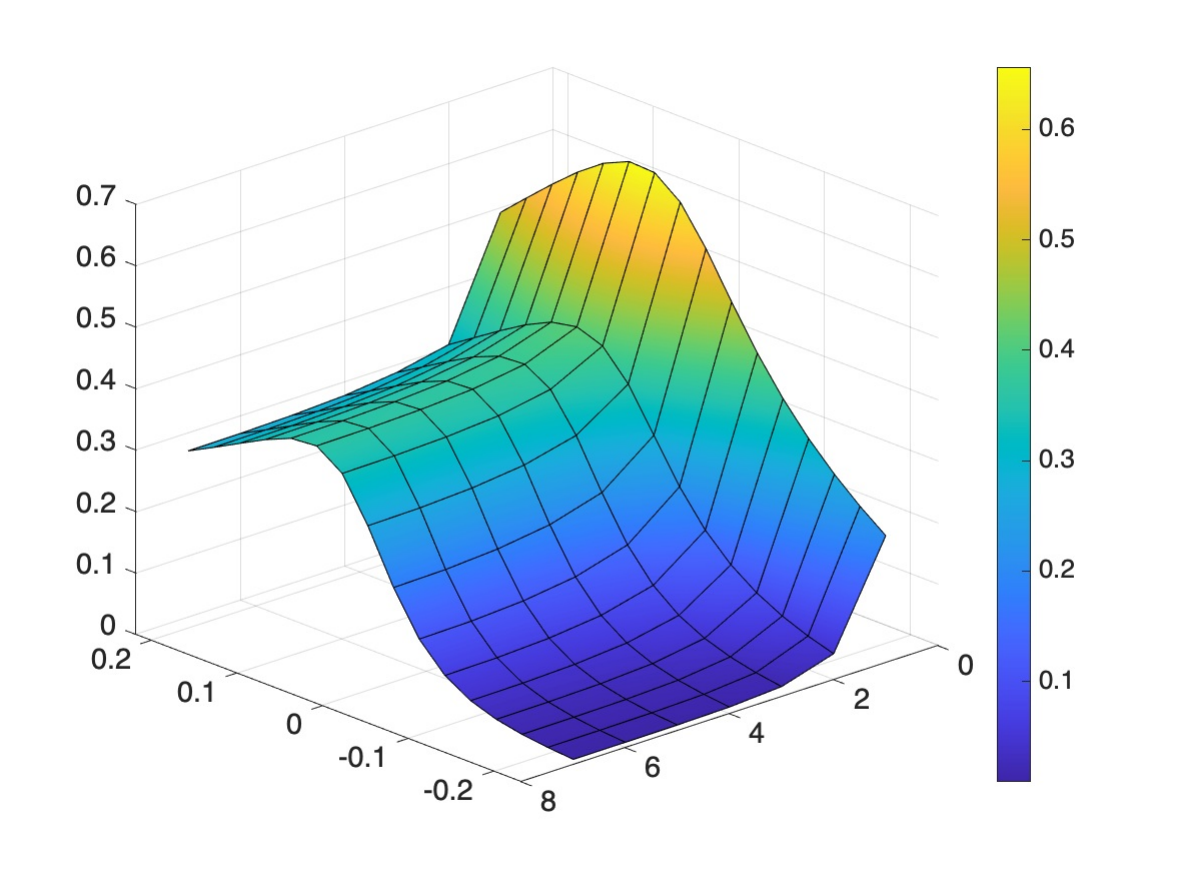}
		\put (-1,40) {$\frac{\mathcal{M}_2}{S}$}
		\put (20,10) {$g$}
		\put (66,7) {$|A|$}
        \end{overpic}
        \caption{Surface illustrating the ratio of nonlocal magic $\mathcal{M}_2$ to entanglement entropy $S$ in $n=14$ Ising CFT. We plot $\mathcal{M}_2/S$ as a function of parameter $g = \theta - \pi/4$ and subsystem size $|A|$. The value $\mathcal{M}_2/S$ reaches a maximum just above criticality ($g = 0$), before decreasing and ultimately plateauing.}
        \label{fig:NonLocalMagicSurface}
\end{center}
\end{figure}

The plateau in  \cref{fig:NonLocalMagicSurface} suggests a linear scaling between $M_2$ and $S$, as subsystem $|A|$ grows large. As we see that the linear behavior is already apparent at $n=14$. Recall from the tensor network picture, the linear scaling between nonlocal magic and entanglement entropy is expected, however, the density of nonlocal magic can vary depending on the shape of the spectrum. This is reflected in the figure as the asymptotic proportionality constant between $M_2$ and $S$ depends on $\theta$. 

Another instance where nonlocal magic distinguishes itself from entanglement can be found in the context of symmetry breaking. For $g<0$, the Ising model enters the symmetry-breaking phase in the thermodynamic limit where the nonlocal magic further displays a transition. We refer interested readers to~\cref{app:symmbreak} for details of this discussion.

\subsection{Smoothed Magic from Entropic Bounds}
Beyond tensor network and finite-size numerics, we recognize that many of the entropic quantities we have examined so far are generally infinite in conformal field theories and need regularization. It makes more sense to look at smoothed magic, which can be bounded by smoothed max-entropies. On the one hand, it generally leads to finite quantities. On the other hand, for any reasonable simulation of a CFT, it is far more relevant to produce approximations of a target state up to a small precision parameter $\epsilon$ instead of the exact state defined by the theory.

In Ref.~\cite{Bao:2019}, it was shown that under the assumption that the R\'enyi entropies satisfy $S_{n} = \frac{s_n}{G_N}$, the smoothed maximal entropy is directly proportional to the following expression:

\begin{equation}
    S_{max}^{\epsilon}=S+\sqrt{\log{\frac{1}{\epsilon}}S}+O(c^0), 
\end{equation}
where $S$ denotes the von Neumann entropy of the state and the central charge $c$ is taken to be very large, $c\rightarrow \infty.$ A similar expression is obtained by Ref.~\cite{maxminentropy} using the explicit spectrum for a 1+1D CFT by Calabrese and Lefevre~\cite{CFTspec}. This entropy is proportional to the central charge $c$ of holographic CFT, which is assumed to be  large. The leading-order correction to this expression is at $O(1)$, making it negligible relative to the primary term. 

With this in mind, we can estimate the lower bound for magic as follows:
\begin{equation}
    M^{(NL,\epsilon)}_{RS}(\rho_{AA^c})\geq S_{max}^{\epsilon}(A)-S(A)=\sqrt{S(A)\log\frac{1}{\epsilon}}+O(\epsilon c). 
\end{equation}

We assume the parameter $\epsilon$ to fall within the range $e^{-c}\ll\epsilon\ll c^{-1}$. 

Recall that for a given bipartition $A$ and $A^c$ in a holographic CFT, the von Neumann entropy of subregion $A$ to leading order is equal to the area $\mathcal{A}$ of the extremal surface anchored to the entangling boundary $\partial A$ divided by $4G_N$ according to the Ryu-Takayanagi formula\cite{Ryu_2006}. Thus, we can formally represent the lower bound of nonlocal magic as:
\begin{equation}
    M^{(NL,\epsilon)}_{RS}(\rho_{AA^c})\geq \sqrt{\log{\frac{1}{\epsilon}}}\sqrt{\frac{\mathcal{A}}{4G_N}},
\end{equation}
where $G_N$ denotes the bulk gravitational constant, which is related to the central charge of the CFT through the equation $c=\frac{3R}{2G_N}$ for 1+1 d CFT, and $c\sim \frac{R^{d-1}}{G_N}$ for general dimensions. $R$ is the AdS radius.

\subsubsection*{Exact and Smoothed Magic in CFTs}
Having obtained a lower bound, we now examine the smoothed magic upper bound. let us pause for a moment and make an interesting observation about exact versus smoothed magic. Consider $n$ copies of $|\psi\rangle=a|00\rangle+b|11\rangle$ which is not maximally entangled. For any additive magic measure, the total magic $M\sim n$. The same can be deduced from the entropy bounds as both the lower and upper bounds pick up a constant multiple of $n$ compared to that of a single copy. 

However, smoothed entropies are not additive. If we allow for approximations, then it is known that \cite{Hayden_2003} for any state $|\psi\rangle$ there exist local unitaries $U_A\otimes U_B$ such that
\begin{equation}
    F(U_A\otimes U_B |\psi\rangle^{\otimes n} , |\Phi^+\rangle^{S_n-O(\sqrt{n})} \otimes |\chi\rangle)\geq 1-\epsilon
    \label{eqn:distill}
\end{equation}
for some $\epsilon$, where $F(\sigma,\rho)= (\Tr[\sqrt{\sigma^{1/2}\rho\sigma^{1/2}}])^2$ is the Uhlmann fidelity and $|\chi\rangle$ is a state that is entangling $O(\sqrt{n})$ qubits. Because the perfect Bell pairs $|\Phi^+\rangle$ contain zero magic, the smoothed nonlocal magic of such a system must be upper bounded by $O(\sqrt{n})$ with implicit $\epsilon$ dependence. From this, we can derive a tighter upper bound of $O(\sqrt{n})\sim O(\sqrt{S})$. This agrees with the lower bound up to constant factors. Hence assuming the distillation argument, the smoothed nonlocal magic $M_{RS}^{(NL,\epsilon)}(\rho_{AA^c})\sim O(\sqrt{S(A)})$. This is contrasted with magic scaling without smoothing, which has shown to scale linearly with $S(A)$ in tensor networks and small size numerics without smoothing.



A similar argument can be applied to CFTs by taking an $n$-fold tensor product. Let $|\psi\rangle_{AB}$ now be a CFT ground state with some fixed bipartition. Under such an $n$-fold tensor product, $c\rightarrow nc$ and the magic lower bound scales as $O(\sqrt{c})\rightarrow O(\sqrt{cn})$ where we take identical bipartitions $A,B$ for all copies of the CFT. Although the magic scaling is $O(n)$ according to the smoothed max entropy upper bound, by \cref{eqn:distill}, a tighter bound from Bell pair counting yield $O(\sqrt{n})$ scaling again. On the surface, an $n$-fold copy of CFTs should have $n$-fold increase of the nonlocal magic if the measure is additive, however, we see that smoothing in fact always brings about a quadratic reduction in the amount of required magic in producing an approximation of the target state. 

It is natural to ask whether the square root scaling of smoothed magic persists for $SU(N)$ gauge theories like holographic CFTs in the large $N$ limit and when the upper-lower bounds in~\cref{th:smoothed} are tight. Here we conjecture that the lower bound (\cref{eqn:smoothinequality}) is essentially saturated by smoothed magic whereas the nonsmoothed magic can scale linearly with the R\'enyi entropies $S_n(A)$. In other words, the upper bounds (\cref{prop:RelativeEnt}) and (\cref{eqn:smoothinequality}) are approximately saturated up to constant multiplicative factors.






\begin{conjecture}
\label{conj:cft}
Let $|\psi\rangle_{AB}$ be a low energy state of any conformal field theory. Assuming a UV cut off to render entropies finite, let $S(A)$ be the von Neumann entropy of the state on a contiguous subregion $A$. For any additive measure of magic,

\begin{enumerate}[label=(\alph*)]
    \item the smoothed nonlocal magic evaluated at any fixed precision $\epsilon$ is of $O(\sqrt{S(A)})$.
\item If the exact nonlocal magic is well-defined, then it scales as $O(S(A))$. \label{conjb}
\end{enumerate}

\end{conjecture}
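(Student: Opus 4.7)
The plan is to establish upper and lower bounds separately for parts (a) and (b), using the entropic bounds from \cref{th:relstab,th:smoothed} together with distillation arguments in the style of \cref{eqn:distill}. For part (a), the lower bound is essentially free: combining \cref{th:smoothed} with the CFT scaling $S_{max}^{\epsilon}(A) - S(A) \sim \sqrt{S(A)\log(1/\epsilon)}$ established earlier via the Calabrese--Lefevre spectrum \cite{maxminentropy} immediately yields $M_{RS}^{(NL,\epsilon)}(\psi_{AB}) \gtrsim \sqrt{S(A)}$. For additive measures other than $M_{RS}$, one would need a parallel smoothed-entropic lower bound, which I would obtain by adapting \cref{th:magicdist} to its smoothed version and invoking the same $\sqrt{S}$ scaling of the smoothed spectral gap.

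For the upper bound in (a), the key step is a distillation argument. Starting from the general result of \cite{Hayden_2003} quoted in \cref{eqn:distill}, the plan is to first establish it for CFT ground states (or at least holographic ones) by either (i) decomposing the state via its modular Hamiltonian into approximate thermofield-double blocks, (ii) using the MERA/cMERA description of \cref{section:MERA} where the state is naturally written as an MPS on the interface of width $|\partial A|\sim\log|A|$ supporting all non-local magic, or (iii) invoking the concentration of measure in the typical subspace of the entanglement spectrum. In each case, one argues that there exist $U_A\otimes U_B$ taking $|\psi\rangle_{AB}$ within $\epsilon$ of $|\Phi^+\rangle^{\otimes S(A) - O(\sqrt{S(A)})}\otimes |\chi\rangle$, where $|\chi\rangle$ entangles only $O(\sqrt{S(A)})$ effective qubits. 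Since the Bell pairs are stabilizer and magic is additive, the smoothed non-local magic is upper-bounded by the total magic of $|\chi\rangle$, which is at most its number of qubits, giving $O(\sqrt{S(A)})$.

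For part (b), the upper bound is the easier direction: it follows either from \cref{prop:RelativeEnt} (giving $M_R^{(NL)}\le S(A)$ directly for the relative entropy measure), from \cref{th:relstab} combined with $S_{max}(A)\sim S(A)$ to leading order for CFT reductions on their support, or from the $\mathcal{M}_2^{NL}\le 2S_2(A)$ bound of \cref{th:srebound} which gives linear scaling in $S(A)$ for CFTs where $S_2\sim S$. The lower bound in (b) is subtler and relies on the area-law heuristic developed in \cref{section:MERA}: applying the optimal local distillation $U_A\otimes U_B$ concentrates the entanglement onto an interface MPS whose magic, for any additive measure, grows linearly with the number of matrices composing it, which in turn grows as $|\partial A|\sim S(A)$. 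I would formalize this by showing that the distilled interface state cannot be further compressed by any additional local unitary, using the fact that the Schmidt rank (bounded below by $e^{S(A)}$) forces a macroscopic support on $\partial A$, and then invoking \cref{eqn:Cliffordtypical} to convert the nontrivial anti-flatness of the CFT entanglement spectrum (whose additive version $|\partial_n\tilde S_n|_{n=1}|$ scales with $S(A)$ for CFTs by the numerical evidence of \cref{sec:numerics}) into linear non-local magic via \cref{eqn:NLMflatness}.

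The main obstacle will be the distillation step for part (a): the Hayden--Winter result applies to i.i.d.\ tensor products, while CFT ground states are strongly correlated. Making the protocol rigorous likely requires one of the three routes above, each with its own technical cost. The cleanest may be the cMERA route, since there the interface state is already an explicit MPS of bounded bond dimension, and the $\sqrt{n}$ concentration of Rényi entropies reduces to a statement about typical sequences in a finite-dimensional auxiliary chain. The corresponding obstacle for part (b) is ruling out that a more clever $U_A\otimes U_B$ could distill the state into something with sub-linear magic; here one must argue that the linear lower bound from additivity and anti-flatness is local-unitary invariant, which should follow from the monotonicity of both $M_{RS}$ and the additive flatness $|\partial_n\tilde S_n|$ under $U_A\otimes U_B$, but the joint optimization over both sides may require a more careful two-sided Schmidt-basis argument.
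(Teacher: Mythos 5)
Your plan reproduces essentially the paper's own justification of this statement: the $\sqrt{S(A)}$ lower bound from \cref{th:smoothed} together with the Calabrese--Lefevre/holographic scaling $S_{max}^{\epsilon}-S\sim\sqrt{S\log(1/\epsilon)}$, the $\sqrt{S(A)}$ upper bound from a Hayden--Winter-style distillation as in \cref{eqn:distill}, and the linear scaling of exact magic from additivity over the $O(S(A))$ imperfectly entangled pairs produced by the MERA interface argument. Note that the statement is a \emph{conjecture} precisely because the central step you flag as the main obstacle --- justifying the distillation of a single, strongly correlated CFT state into near-identical pairs --- is also left unproven in the paper (it is only supported numerically in \cref{app:MPS}), so your proposal matches the paper's reasoning without closing the gap that keeps it a conjecture.
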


A simple reasoning is as follows. Suppose the bipartite entanglement across $AB$ are distillable such that for each Planck area of the RT surface, we can obtain a Bell-like state $|\chi\rangle_{AB}$ which need not be maximally entangled; suppose these states are near identical by the conformal symmetries of the CFT ground state, then we must have $O(S(A))$ copies of such states. Following a distillation like~\cref{eqn:distill}, we obtain at most $O(\sqrt{S})$ states that are imperfectly entangled, in which nonlocal magic can reside. Note that if no smoothing is allowed, and the magic measure is additive, then the $O(S)$ number of entangled pairs simply contain $O(S)$ amount of magic, consistent with our MERA intuitions and CFT numerics. 

This conjecture, if true, has a wider implication for quantum simulations of conformal field theories. Although our na\"ive expectation is that the nonlocal magic should increase as the volume of the minimal surface, as indicated by holographic tensor networks, the magic needed to produce a good approximation allows a quadratic reduction. In terms of non-Clifford resources, it implies that an practical preparation of a CFT ground state may permit a quadratic reduction of $T$ gates compared to na\"ive expectations with moderate scaling with increasing precision $\epsilon$. 
However, the actual state preparation has to take into account local magic, which is volume law, and multipartite nonlocal magic, which is not covered by our bipartite analysis. Therefore, although a state isospectral to $\rho_A$ may consume less non-Clifford resource, we make no claim as to how it alters the total resource scaling for the preparation of $\rho_A$.


\subsubsection*{antiflatness and smoothed magic}
We now comment on a key relation between smoothed magic and entanglement in the CFT. It was suggested in \cite{white_conformal_2021} that magic nonlocally distributed would be needed to reproduce the antiflatness of the CFT entanglement spectrum. We have seen a version of it for exact magic in \cref{sec:CFT}. We can also verify this relation precisely for smoothed magic --- the spectral antiflatness $\mathcal{F}_R(\rho_A)$ is proportional to the amount of smoothed nonlocal magic $M_{RS}^{NL}(\rho_{AB})$ to leading order. However, the scaling with entropy is different. {\color{black} By combining the lower bound where nonlocal magic scales as $O(\sqrt{S})$ from Theorem 4 and an upper bound based on the approximate unitary distillation argument in Eq.\ \ref{eqn:distill} and Appendix~\ref{app:MPS}, we arrive at the following proposition.}
 
\begin{proposition}\label{prop:flatsmoothNLM}
    For any bipartition $A$ and $A^c$ of the CFT ground state, the antiflatness of the CFT entanglement spectrum necessitates the existence of smoothed nonlocal magic of at least $O(\sqrt{S(A)\log(1/\epsilon})$. If the distillation argument holds, then 
    \begin{equation}
        \mathcal{F}_R(\rho_A) \sim M_{RS}^{(NL,\epsilon)}(\rho_{AA^c})= O(\sqrt{S(A)}).
        \label{eqn:smoothentBound}
    \end{equation}
\end{proposition}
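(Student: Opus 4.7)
The plan is to combine the general smoothed-magic bound of Theorem~\ref{th:smoothed} with two CFT-specific ingredients already introduced earlier in this section: the Calabrese--Lefevre expansion $S_{max}^{\epsilon}(A)=S(A)+\sqrt{S(A)\log(1/\epsilon)}+O(c^{0})$ as used in \cite{Bao:2019,maxminentropy}, and the entanglement distillation statement~\cref{eqn:distill}. The first ingredient will supply the lower bound $\Omega(\sqrt{S(A)\log(1/\epsilon)})$ and, applied to $\mathcal F_R^{\epsilon}(\rho_A)\defeq S_{max}^{\epsilon}(A)-S(A)$, simultaneously show that the smoothed anti-flatness has the same leading behavior; the second ingredient will produce the matching $O(\sqrt{S(A)})$ upper bound.

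For the lower bound I would first invoke the left-hand inequality of Theorem~\ref{th:smoothed},
\begin{equation*}
M^{(NL,\epsilon)}_{RS}(\rho_{AA^{c}}) \;\geq\; S_{max}^{\epsilon}(A)-(1-\epsilon)^{-1}S(A),
\end{equation*}
then plug in the CFT expansion of $S_{max}^{\epsilon}$ and work in the regime $e^{-c}\ll\epsilon\ll 1/c$ already assumed above. After rearrangement this yields $\sqrt{S(A)\log(1/\epsilon)}+O(1)-\epsilon S(A)/(1-\epsilon)$, and the last term is subleading in that window. The same substitution in $\mathcal F_R^{\epsilon}(\rho_A)$ gives the same leading $\sqrt{S(A)\log(1/\epsilon)}$ behavior, which establishes the ``$\sim$'' proportionality asserted between anti-flatness and smoothed non-local magic.

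For the upper bound I would use~\cref{eqn:distill} exactly as stated just above the proposition: there exist local unitaries $U_A\otimes U_{A^{c}}$ bringing the state within fidelity $1-\epsilon$ of $\ket{\Phi^{+}}^{\otimes(S(A)-O(\sqrt{S(A)}))}\otimes\ket{\chi}$, where $\ket{\chi}$ is supported on only $O(\sqrt{S(A)})$ qubits. Perfect Bell pairs belong to $\stab_0^{(n)}$ and carry no non-local magic, while the residual $\ket{\chi}$ is controlled by the upper bound in Theorem~\ref{th:relstab}, giving $M_{RS}(\ket{\chi})=O(\sqrt{S(A)})$. Combining with the approximation tolerance upgrades this to a bound on $\sma{\rho_{AA^{c}}}$ of the same order, completing the sandwich.

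The main obstacle, and the reason the statement is phrased conditionally (``if the distillation argument holds''), is that~\cref{eqn:distill} is a theorem for $n$ i.i.d.\ copies of a fixed bipartite state, whereas here it is applied to a \emph{single} CFT ground state via the heuristic identification $n\sim S(A)$ motivated by the Calabrese--Lefevre spectral shape. A fully rigorous upgrade would require an approximate equipartition or typical-subspace result for CFT reduced density matrices showing that their eigenvalue distribution concentrates tightly enough around the i.i.d.\ pattern for the Hayden--Horodecki--Winter estimate to apply with the advertised $O(\sqrt{S})$ residual. Rather than attempt that here, I would flag this as the key unresolved input and proceed within the conditional scope of the proposition, which is consistent with the treatment in the paragraphs preceding the statement.
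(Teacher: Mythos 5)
Your proposal matches the paper's own (implicit) argument essentially step for step: the lower bound comes from \cref{th:smoothed} together with the Calabrese--Lefevre/holographic expansion $S_{max}^{\epsilon}(A)=S(A)+\sqrt{S(A)\log(1/\epsilon)}+O(c^{0})$ in the window $e^{-c}\ll\epsilon\ll c^{-1}$, the identification of anti-flatness with $S_{max}-S$ via \cref{prop:qrf} supplies the ``$\sim$'' relation, and the upper bound comes from the Hayden--Horodecki--Winter distillation statement in \cref{eqn:distill} with Bell pairs being magic-free and the residual $\ket{\chi}$ on $O(\sqrt{S})$ qubits controlled by \cref{th:relstab}. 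Your handling of the $(1-\epsilon)^{-1}$ factor and your flagging of the single-copy-versus-i.i.d.\ gap in the distillation step are, if anything, slightly more careful than the paper's treatment, and correctly identify why the statement is conditional.
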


\section{Holographic Magic and Gravity} \label{sec:gravity}

Heuristically, antiflatness of the entanglement spectrum is critical in emerging gravity. Various approaches for (entanglement) entropic derivations of the Einstein's equations make use of entanglement first law in both AdS/CFT, e.g., Ref.~\cite{Blanco_2013,Faulkner_2014,swingle2014universality,Czech_2017}, and beyond~\cite{Jacobson_2016,Cao_2017,Cao_2018}. 
This simple relation connects the stress energy by way of modular Hamiltonian $H_A=-\log \rho_A$. Under a perturbation $\rho_A\rightarrow \rho_A+\delta \rho$ such that $\delta S \equiv S(\rho_A+\delta \rho) - S(\rho_A)$ and $\delta \langle H_A\rangle \equiv \Tr[H_A \delta{\rho}]$, then to linear order  $\delta S  = \delta \langle H_A\rangle$. As entropy is linked to the area of an extremal surface and $H_A$ can be linked to functions of the stress energy tensor in quantum field theories, $\delta\langle H_A\rangle$ is connected to perturbation in stress energy caused by the perturbation $\delta\rho$ while $\delta S$ can be linked to the area and hence metric perturbation. The combination of these relations produce the Hamiltonian constraint, where a covariantized version leads to (linearized) Einstein's equations. It is clear that if the spectrum was flat, i.e. the system has zero nonlocal magic and the modular Hamiltonian is proportional to the identity, then no state perturbation can ever incur entropy and therefore metric perturbations, let alone Einstein gravity. Therefore, it is natural to link nonlocal magic to the emergence of gravity by way of entanglement spectrum.

In this section, we examine nonlocal magic in CFTs with dual gravity theories. Although it is speculated that nonlocal magic should play an important role in the dual theory \cite{white_conformal_2021,nogo}, the precise relation has not been made clear. We now provide a holographic dual of nonlocal magic: nonlocal magic in the CFT is backreaction in the bulk. {\color{black} Note that in this section we always take the large central charge limit}.

\subsection{Brane tension and magic}\label{section:brane}
We now make a more precise statement from the point of view of R\'enyi entropies. Recall that  the R\'enyi entropies  in holographic CFTs are computed by the replica geometries which insert a conical singularity that correspond to cosmic branes at various tensions \cite{Dong:2018lsk,Dong1}. Therefore, antiflatness in the entanglement spectrum can be naturally interpreted as the difference between minimal surfaces areas in different backreacted geometries caused by the addition of some stress energy in the form of a cosmic brane with tension $\mathcal{T}$. 

More precisely, the derivative of brane area is related to antiflatness (\cref{def:braneflatness}),
\ba
\frac{\partial_n A_n}{4G}=\partial_n\tilde{S}_n.
\ea

The brane tension $\mathcal{T}$ is related to $n$ by 
\ba
\mathcal{T}_n=\frac{n-1}{4n G}
\ea

Hence for $n=1$, or tension $\mathcal{T}=0$, we have that 
$4G\partial_n A_n|_{n=1} = \partial \mathcal{A}/ \partial\mathcal{T}|_{\mathcal{T}=0}$.
Applying (\cref{eqn:NLMflatness}) we arrive at a linear relation between $\partial\mathcal{A}/\partial\mathcal{T}\sim \mathcal{M}_2(|\phi\rangle)$, specifically
\be\label{branemagicNL}
\begin{split}
\left\vert\frac{\partial \mathcal{A}}{\partial \mathcal{T}}\right\vert_{\mathcal{T}=0} = (4G)^2|\partial_n\tilde{S}_n||_{n=1}\approx \frac{(4G)^2}{\kappa} \mathcal{M}_2^{NL}(|\psi\rangle_{AB})
\end{split}
\ee
which then provides an estimate for the nonlocal magic $M_{\rm dist}^{(NL)}$ across the bipartition from~\cref{th:magicdist}. Note that the bipartition is arbitrary and each subregion $A$ need not be connected.

{\color{black}By showing that a modified entanglement capacity evaluated at $n=1$ serves as a measure of nonlocal magic, we also have 
\begin{equation}
    \frac{\partial \mathcal{A}}{\partial \mathcal{T}}\simeq-M_{C_E}^{NL}(\psi).
\end{equation}
}


That is, nonlocal magic controls the level of geometric change in response to adding mass energy in the bulk, where the zero magic limit indeed recovers the trivial response function in stabilizer holographic tensor networks. As we showed earlier in \cref{lemmaNL}, antiflatness is  zero if and only if the nonlocal magic vanishes. Then, through \cref{branemagicNL},  there is no backreaction in the zero magic limit. This is consistent with results from \cite{nogo}.


\begin{remark}
    Recall the flatness problem of the entanglement spectrum is also present in random tensor networks even though they are not stabilizer codes. This is because nonlocal magic is also low for Haar random states (\cref{rmk:1}), even though they are not stabilizer codes. Therefore the same type of gravitational backreaction is also ``turned off'' in Ref.~\cite{Hayden_2016}.
\end{remark}

A more rigorous bound relating nonlocal magic and the R\'enyi entropy derivatives $\partial_n \mathcal{A}$  can also be proven. 

\begin{proposition}\label{pp:branebound}
Assuming the distillation argument where $U_A\otimes U_B\ket{\psi}_{AB} \approx \otimes_i\ket{\phi_i}_{a_ib_i}$ for the state with local magic removed, then the nonlocal stabilizer R\'enyi entropy for a CFT under bipartition $AB$ is bounded by
    \begin{equation}
    \frac{1}{2}\left\vert\frac{\partial_n\mathcal{A}_n|_{n=2}}{4G}(\ket{\psi}_{AB})\right\vert\leq\mathcal{M}_2(\ket{\psi}_{AB})\leq\left\vert\frac{\partial_n\mathcal{A}_n|_{n=1}}{4G}(\ket{\psi}_{AB})\right\vert
\end{equation}
\end{proposition}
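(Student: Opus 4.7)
The plan is to leverage the distillation hypothesis to reduce the bipartite bound to a single-pair statement, and then prove the pair inequality by exploiting the explicit relation between the stabilizer $2$-R\'enyi entropy and the spectrum of the reduced state. I would first observe that both sides of the claimed inequality are invariant under $U_A\otimes U_B$: the derivative $\partial_n\mathcal{A}_n/4G=\partial_n\tilde{S}_n$ depends only on $\operatorname{spec}(\psi_A)$, which is preserved by local unitaries, and $\mathcal{M}_2^{NL}$ is itself defined via minimization over such unitaries. Moreover, $\mathcal{M}_2$ and $\tilde{S}_n$ are both additive on tensor products. Combined with the distillation hypothesis $U_A\otimes U_B\ket{\psi}_{AB}\approx\bigotimes_i\ket{\phi_i}_{a_ib_i}$, the problem factorizes and it suffices to establish the single-pair inequality
\[
\frac{1}{2}\left|\partial_n\tilde{S}_n(\phi_{a})|_{n=2}\right|\;\le\;\mathcal{M}_2(\ket{\phi})\;\le\;\left|\partial_n\tilde{S}_n(\phi_{a})|_{n=1}\right|,
\]
after which the holographic dictionary $\partial_n\mathcal{A}_n/4G=\partial_n\tilde{S}_n$ from \cite{Dong:2018lsk} converts the spectral bound into the stated brane-area form.

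For the upper bound I would combine three ingredients: the identity $\mathcal{M}_2(\ket{\phi})=-\log(1-M_{\mathrm{lin}}(\ket{\phi}))$; the Clifford-typicality relation (\ref{eqn:Cliffordtypical}) giving $M_{\mathrm{lin}}(\ket{\phi})=\mathcal{F}(\phi_{a})/c(d,d_{a})$ for a typical pair; and the variance identity (\ref{eqn:approxrenyid}) yielding $\mathcal{F}(\phi_{a})\approx\operatorname{Pur}(\phi_{a})^2\operatorname{Var}_{\phi_{a}}(\log\phi_{a})=\operatorname{Pur}(\phi_{a})^2\left|\partial_n\tilde{S}_n|_{n=1}\right|$. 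Applying the elementary bound $-\log(1-x)\le x/(1-x)$ and checking that $\operatorname{Pur}(\phi_{a})^2/c(d,d_{a})\le 1$ for qubit pairs then yields $\mathcal{M}_2(\ket{\phi})\le\left|\partial_n\tilde{S}_n|_{n=1}\right|$. For the lower bound I would use $\left|\partial_n\tilde{S}_n|_{n=2}\right|=2\operatorname{Var}_{\Xi_2}(\log\phi_{a})$ with $\Xi_2=\phi_{a}^2/\operatorname{Pur}(\phi_{a})$, and show by direct computation on the two-level spectrum that $\operatorname{Var}_{\Xi_2}(\log\phi_{a})\le\operatorname{Var}_{\phi_{a}}(\log\phi_{a})$---the reweighting concentrates probability on the dominant eigenvalue and shrinks the variance---then combine with the complementary elementary bound $-\log(1-x)\ge x$ alongside (\ref{eqn:Cliffordtypical})--(\ref{eqn:approxrenyid}) to conclude $\mathcal{M}_2(\ket{\phi})\ge\frac{1}{2}\left|\partial_n\tilde{S}_n|_{n=2}\right|$.

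The main obstacle is that both (\ref{eqn:Cliffordtypical}) and (\ref{eqn:approxrenyid}) are approximate: the former is an average over the Clifford orbit rather than a pointwise identity, and the latter is an error-propagation estimate valid only in the near-flat regime $S_0(\phi_{a})-S_2(\phi_{a})<\log 2$. Turning them into rigorous pair-wise inequalities will likely require either restricting to typical Clifford representatives for each distilled $\ket{\phi_i}$, or else verifying the bounds directly on the one-parameter family $\sqrt{\lambda}\ket{00}+\sqrt{1-\lambda}\ket{11}$ by explicit computation of $\mathcal{M}_2(\lambda)$ alongside the two spectral variances. The slack between the $n=1$ and $n=2$ bounds reflects the gap between weighting the variance of $\log\phi_{a}$ by $\phi_{a}$ versus by $\phi_{a}^2/\operatorname{Pur}(\phi_{a})$, a gap that collapses in the flat limit where both sides of the proposition vanish, consistent with \cref{lemmaNL}.
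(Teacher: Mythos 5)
Your reduction step is exactly the paper's: both $\mathcal{M}_2$ and $\partial_n\tilde{S}_n$ are additive and depend only on the entanglement spectrum, so under the distillation hypothesis the claim factorizes into a single-pair inequality, and the paper's proof in \cref{app:branebound} does precisely this. The gap is in how you then propose to prove the pair inequality. Your primary route chains together \eqref{eqn:Cliffordtypical} and \eqref{eqn:approxrenyid}, but the first is an identity only on average over the Clifford orbit (a distilled CFT pair has no reason to be a typical orbit representative), and the second is a first-order error-propagation estimate with no control on the sign of the correction --- so neither can be promoted to a pointwise inequality in a fixed direction, which is what the proposition requires. Worse, the concrete check you rely on for the upper bound fails: for a qubit pair $d=4$, $d_a=2$ gives $c(d,d_a)=2/5$, while $\pur(\phi_a)^2=(\lambda^2+(1-\lambda)^2)^2$ ranges up to $1$, so $\pur(\phi_a)^2/c(d,d_a)$ reaches $5/2>1$ near $\lambda\in\{0,1\}$. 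In addition, $-\log(1-aV)\le aV/(1-aV)$ bounds $\mathcal{M}_2$ by $V$ only if $a(1+V)\le 1$, not merely $a\le 1$, so even where the purity check passes the elementary bound does not close the argument.

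The fallback you mention in your last paragraph --- verifying the bounds directly on the one-parameter family $\sqrt{\lambda}\ket{00}+\sqrt{1-\lambda}\ket{11}$ --- is in fact the paper's entire proof: it computes $\mathcal{M}_2(\lambda)=-\log(1-4\lambda+20\lambda^2-32\lambda^3+16\lambda^4)$ from \eqref{eq:analyticalM} and the closed form of $-\partial_n\tilde{S}_n$ at $n=1,2$, Taylor-expands all three around $\lambda=1/2$ and around $\lambda=0,1$ to confirm the ordering there, and justifies the intermediate range by the plot in \cref{fig:singlequbit}. So you should discard the typicality/error-propagation chain entirely and make the direct spectral computation the proof rather than a contingency. (To be fair, the paper's own treatment of intermediate $\lambda$ is graphical rather than analytic, so neither route is fully airtight there; but the explicit-computation route at least reduces the claim to an elementary inequality between two explicit functions of one variable.)
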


See proof in~\cref{app:branebound} and justification of the distillation assumption for CFT in~\cref{app:MPS}. 
We elaborate the regime of validity for various magic bounds and antiflatness relations in \cref{app:bound}.

\subsection{Magic in Holographic CFT}
Note that magic in quantum many-body systems is generally difficult to compute as the cost can grow exponentially with the system size\cite{white_conformal_2021,PhysRevA.106.042426}. This scaling is much improved for measures like stabilizer R\'enyi entropy where the nonlinear function of the state can be computed using MPS\cite{haug_efficient_2023,tarabunga2023manybody,tarabunga2024nonstabilizerness} or enumerator-based tensor networks \cite{cao2023quantum}. However, the computation remains costly at high bond dimensions and for other measures.  
On the other hand, the bounds of magic from~\cref{sec:magicbounds} offer an entropic perspective into this otherwise hard-to-compute quantity by leveraging existing results. 

We now study nonlocal magic in CFTs in light of the general relations derived in~\cref{sec:magicbounds}.
Using the holographic dictionary and applying \cref{conj:cft}, we can predict the behavior of nonlocal magic in CFTs that are otherwise difficult to compute. Although the following examples essentially amounts to putting square roots on known holographic entanglement entropies, it is instructive to review their behaviors and analyze their implications for magic and, by extension, classical complexity and quantum resource needed for state preparation. At the same time, holographic calculations enable us to study magic dynamics under quantum quenches, for which existing results have been sparse and size-limited \cite{Sewell} due to prohibitive computational costs.

\subsubsection*{Static Configurations}
We now apply (\ref{eqn:smoothentBound}) to estimate the smoothed nonlocal magic in the CFT state. To illustrate, consider the thermal state $\rho_{AA^c}$ of a (1+1)D CFT which is purified by $B$, e.g. in a thermal field double state.
\begin{equation}
    |TFD\rangle \propto \sum_n \exp(-\beta E_n/2) |E_n\rangle_{AA^c} |E_n\rangle_{B}
\end{equation}

Bipartitioning the system into $A$ and $A^c\cup B$, the behavior of the nonlocal magic is given by
\begin{equation}
   \sma{|TFD\rangle_{A{A}^cB}}\sim \sqrt{\frac{c}{3}\log\left(\frac{\beta}{\pi\delta_{UV}}\sinh\left(\frac{\pi l}{\beta}\right)\right)},
\end{equation}
where $l$ is size of subregion $A$. The magic increases logarithmically with the subregion size 
$l$ for $l\ll \beta$. However, when the size surpasses the thermal correlation length, represented by $\beta=\frac{1}{T}$, it becomes proportional to $\sqrt{l}$. A similar result holds for a small subsystem $A$ of a pure state $|\psi\rangle_{AA^c}$ with that thermalizes under ETH such that $A$ has fixed temperature $T=1/\beta$.

Now instead consider the bipartition of the system in to $AA^c$ and $B$. It is known that for holographic CFTs, the system undergoes a confinement-deconfinement phase transition which corresponds to the Hawking-Page transition in the bulk at a critical temperature $T_c$\footnote{This is a simplified account of the transition, which for different theories there can be different phases as one dial up the temperature\cite{Aharony_2004}.}. 

It is known that 
\begin{equation}
    S(B)=S(AA^c) \sim \begin{cases}
        O(N^0)\quad T<T_c\\
        O(N^2)\quad T>T_c
    \end{cases}
\end{equation}
In the same way, we predict a magic phase transition where $M_{RS}^{(NL,\epsilon)}/N$ is discontinuous across $T_c$ in the $N\rightarrow \infty$ limit.

\subsubsection*{Local quench}
In the following sections, we consider several time-dependent scenarios and analyze their implications on the system dynamics.

For our first scenario, let us examine a CFT ground state that is been perturbed by a smeared local operator $O_{\alpha}(x,0)$ at $t=0$.  This is then subjected to time evolution governed by the CFT Hamiltonian. We can express the state as:
\begin{equation}
    |\psi(t)\rangle=\mathcal{N}e^{-iHt}e^{-\delta H}O_{\alpha}(x,t)|\Omega\rangle. 
\end{equation}

In the corresponding bulk dual, this equates to introducing an in-falling particle with mass $m$ into the initially vacuum anti-de Sitter-Vaidya spacetime. The energy-momentum tensor for this scenario can be characterized as:  
\begin{equation}
    T_{uu}=\frac{mR\alpha^2}{8\pi (u^2+\alpha^2)^2}. 
\end{equation}

Here, $\alpha$ denotes the size of the smeared operator. As 
$\alpha$ approaches 0, this converges to a delta function in $u$. The subsequent effect on the bulk spacetime is encapsulated by a shock-wave geometry, as illustrated below in \figref{figholo1}.

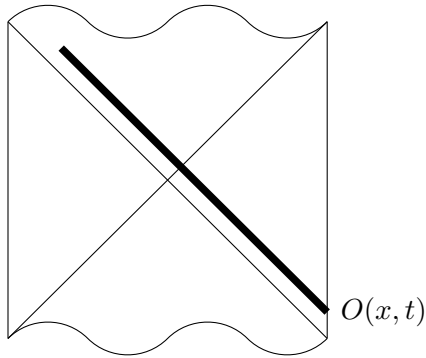
\begin{figure}[H]
\center
\begin{tikzpicture}[scale=0.7]
\draw (-3,-3)--(-3,3);
\draw (3,-3)--(3,3);
 \draw (-3,3) to[out=45,in=135] (-1.5,3) to[out=-45,in=-135] (0,3) to[out=45,in=135] (1.5,3) to[out=-45,in=-135] (3,3);
\draw (-3,-3) to[out=45,in=135] (-1.5,-3) to[out=-45,in=-135] (0,-3) to[out=45,in=135] (1.5,-3) to[out=-45,in=-135] (3,-3);
\draw (-3,3)--(3,-3);
\draw (-3,-3)--(3,3);
\draw[line width=1mm] (3,-2.5)node[right]{$O(x,t)$} --(-2,2.5);
\end{tikzpicture}
\caption{Penrose diagram depicting a shock wave in global coordinates. }
\label{figholo1}
\end{figure}

We aim to investigate the nonlocal magic of subsystem 
$A$ in relation to $A^c$. These subsystems are separated by the boundary $\partial A=\partial A^c$, a 
$d-2$ sphere of radius $l$. By solving the Einstein equation, Nozaki \textit{et al.}~\cite{Nozaki:2013wia} derived the leading-order change in entanglement entropy due to the injected energy. Specifically, for a (1+1)D holographic CFT, this change is expressed as:
\begin{equation}
\begin{split}
    \Delta S(t)=&\frac{2mRl\alpha+mR(l^2-\alpha^2-t^2)\arctan(\frac{2\alpha l}{t^2+\alpha^2-l^2})}{8l\alpha}\\
    &+O\left((mR)^2\right).
\end{split}
\end{equation}

We can then employ the lower bound to estimate the growth of the nonlocal magic as follows:
\begin{equation}
\begin{split}
    M^{(NL)}_{RS}(|\psi(t)\rangle)\sim& \sqrt{S(0)+\Delta S(t)}\\
    \approx& \sqrt{S(0)}+\frac{1}{2}\frac{\Delta S(t)}{\sqrt{S(0)}}. 
\end{split}
\end{equation}

In the early-time regime,  $t\ll \sqrt{l^2-\alpha^2}$, the magic exhibits quadratic growth with time, independent of the spacetime dimension. This can be expressed as: 
\begin{equation}
    \Delta M^{(NL)}_{RS}(t)\sim \kappa_d\frac{mR}{\sqrt{S_0}}(\frac{\alpha l}{l^2-\alpha^2})^2\frac{t^2}{l^2-\alpha^2}+O(\frac{t^4}{(l^2-\alpha^2)^2}).
\end{equation}

At $t=\sqrt{l^2-\alpha^2}$, the magic reaches its peak value of $\Delta M^{(NL)}_{RS}=\kappa_d\frac{mR}{\sqrt{S_0}}$, after which it declines to zero. In the long-term regime, it decays following a power-law pattern: 
\begin{equation}
    \Delta M^{(NL)}_{RS}(t)\sim \frac{mR}{\sqrt{S_0}}\left(\frac{\alpha l}{t^2}\right)^d\left(1+O(\frac{l^2-\alpha^2}{t^2})\right).
\end{equation}

For the (1+1)D CFT, another intriguing scenario arises when subsystem $A$ encompasses half of the space, signifying $l\rightarrow \infty$. In this context, there exists a range in which the magic grows logarithmically with $t$ \cite{Caputa:2014vaa}, specifically when $l\ll t\ll D^{1/mR}\alpha$,
\begin{equation}
    \Delta M^{(NL)}_{RS}(t)\sim \frac{mR}{\sqrt{S_0}}\log\frac{t}{\alpha},
\end{equation}
where $D$ is quantum dimension of the quench operator $O$. The value reaches a constant late-time limit of 
$\Delta M^{(NL)}_{RS}=\frac{\log D}{\sqrt{S_0}}$. This logarithmic growth can only be observed in system with large central charge due to the otherwise small value of $D$.
Note that holographic methods are at a distinct advantage here because magic dynamics for large systems over long periods of time is numerically intractable using existing methods.

\subsubsection*{Global Quench}
We also explore the global quench scenario wherein the perturbation is not confined to a localized region but influences the entire CFT state. Within the bulk dual, this corresponds to a spherically symmetric in-falling mass shell.

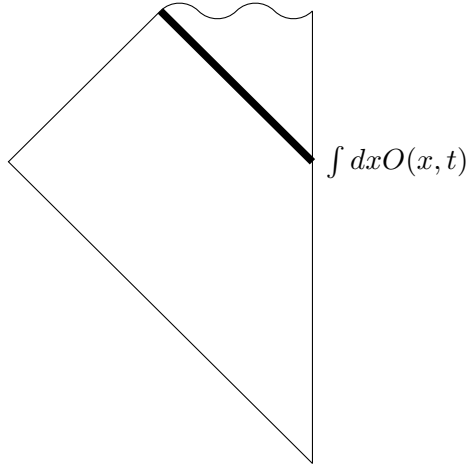
\begin{figure}[H]
\center
\begin{tikzpicture}
\draw (1,-5)--(1,1);
\draw (-1,1) to[out=45,in=135] (-0.5,1) to[out=-45,in=-135] (0,1) to[out=45,in=135] (0.5,1) to[out=-45,in=-135] (1,1);
\draw[line width=1mm] (1,-1)node[right]{$\int dx O(x,t)$} --(-1,1);
\draw (-1,1)--(-3,-1)--(1,-5);
\end{tikzpicture}
\caption{Vaidya geometry. Right side boundary denotes the asymptotic boundary of the anti-de Sitter-Vaidya spacetime. Outside the mass shell is the black hole geometry. Inside the mass shell is  Vacuum anti-de Sitter in Poincaré patch.}
\label{figholo2}
\end{figure}

The geometry impacted by the mass shell is characterized by the Vaidya metric. This is essentially the integration of pure AdS with an AdS-Schwarzschild black hole, aligned along the mass shell, as illustrated in~\cref{figholo2}.

The shell's descent into the bulk parallels the boundary CFT's thermalization following the global perturbation. The state transitions from the ground state and progressively thermalizes to a certain finite temperature. The entanglement entropy of subregion $A$ serves as a quantitative measure, increasing during this process. Correspondingly, in the bulk perspective, this entropy surge is represented by the expanding area of the minimal surface anchored to the boundary of $A$. 

In a (1+1)-dimensional CFT, it is feasible to precisely solve for the minimal surface \cite{Balasubramanian:2011ur}. The entropy at 
$t=0$ is equivalent to the CFT ground state entropy, given by $S(0)=\frac{c}{3}\log{\frac{l}{\delta_{UV}}}$. This aligns with the length of the geodesic fully contained within the pure AdS. Following the onset of the quench, the geodesic begins to intersect with the in-falling mass shell, causing its length to increase over time. Initially, this growth is quadratic with respect to $t$,
\begin{equation}
    \mathcal{L}(t)=2\log{\frac{l}{\delta_{UV}}}+2\frac{\pi^2 t^2}{\beta^2}+O(t^3). 
\end{equation}

As thermalization progresses, the geodesic's intersection with the mass shell delves deeper into the bulk. Once the subregion completes its thermalization at time $t=\frac{l}{2}$, the geodesic no longer intersects the in-falling shell, stabilizing its length to an equilibrium value,
\begin{equation}
    \mathcal{L}(t>\frac{l}{2})=2\log{\frac{ \beta}{\pi\delta_{UV}}\sinh{\frac{\pi l}{\beta}}}. 
\end{equation}

We also detail the behavior of the geodesic length in the late stages, prior to reaching full thermalization, as outlined below:
\begin{equation}
\begin{split}
    \mathcal{L}(t\lesssim\frac{l}{2})=&2\log{\left(\frac{ \beta}{\pi\delta_{UV}}\sinh{\frac{\pi l}{\beta}}\right)}-\frac{2}{3}\sqrt{2\tanh{\frac{\pi l}{\beta}}}\left(\frac{l}{2}-t\right)^{\frac{3}{2}}\\
    &+O\left(\left(\frac{l}{2}-t\right)^{2}\right).
\end{split}
\end{equation}

Based on the aforementioned results, the evolution of the smoothed nonlocal magic for a subregion in a 2D CFT can be characterized as follows: it increases according to, 
\begin{equation}
    \sma t \sim\sqrt{c\abs{\log\epsilon}}\left(\sqrt{S_0/c}+\frac{\pi^2 t^2}{6\sqrt{S_0/c}}+O(t^3)\right),
\end{equation}
during the initial stages, and as,
\begin{equation}
\begin{split}
    \sma{t}\sim&\sqrt{c\abs{\log\epsilon}}\left(\sqrt{S_T/c}-\frac{1}{18}\frac{\sqrt{2\tanh{\frac{\pi l}{\beta}}}}{\sqrt{S_T/c}}\left(\frac{l}{2}-t\right)^{\frac{3}{2}}\right.\\
    &+\left.O\left(\left(\frac{l}{2}-t\right)^{2}\right)\right),
\end{split}
\end{equation}
during the latter phases when the subregion is nearing full thermalization. This can be contrasted with the dynamics of total subsystem magic under thermalization\cite{Sewell} which decays after a quick initial rise.

\subsubsection*{Wormhole}
Lastly, we examine a thermalization process involving two copies of CFT states. This dynamic process corresponds to the evolution of an expanding wormhole in the bulk dual. 

\begin{figure}[H]
    \centering
    \begin{tikzpicture}[scale=0.6]
        \def\s{10}
        \draw (0,0) ellipse (1cm and 2cm);
        \draw (\s,0) ellipse (1cm and 2cm);
        \draw[red,line width=0.5mm] (-1,0) arc(180:360:1cm and 2cm) ;
        \draw[red,line width=0.5mm] (\s-1,0) arc(180:360:1cm and 2cm);
        \draw (0,2) ..controls (\s/2,1).. (\s,2);
        \draw (0,-2).. controls (\s/2,-1).. (\s,-2);
    \end{tikzpicture}
    \caption{Wormhole geometry}
    \label{figholo3}
\end{figure}
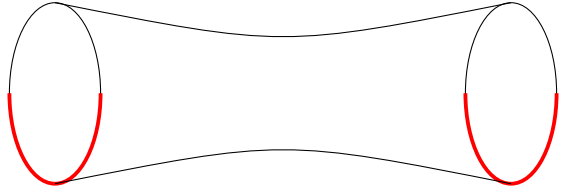

Let us revisit the thermal field double (TFD), 
\begin{equation}
    \ket{\text{TFD}}=\frac{1}{\sqrt{Z(\beta)}}\sum_n e^{-(\frac{\beta}{2}+2it) E_n}\ket{E_n}_L\ket{E_n}_R
\end{equation}

We designate our region of interest to encompass a section from both the left and right CFT states (illustrated in~\cref{figholo3}). The entanglement entropy of this composite region is probed by the extremal surface spanning the wormhole, connecting the left and right segments.

In this setup, we assume symmetry when exchanging the two CFT sides. Specifically, we mandate that the subregion $A$ on one side mirrors its counterpart on the other side. See red region in~\cref{figholo3}. Given this symmetry, the extremal surface occupies a plane defined by constant transverse spatial coordinates and is characterized solely by the relationship between time and the radial direction.

At the boundary time $t=0$, the area of extremal surface is given by
\begin{equation}
    \mathcal{A}(0)=\frac{\beta r_{\infty}}{\pi}V_{d-2}.
\end{equation}
where $r_{\infty}$ is the UV cutoff of radial coordinates. This extremal area is proportional to the volume of subregion boundary $\partial A$, reminiscent of the area law entanglement observed in gapped systems.  As time progresses, the extremal surface accrues additional contributions from regions beyond the horizon. As highlighted in Ref.~\cite{Hartman:2013qma}, this contribution exhibits a straightforward linear relationship with the boundary time, as illustrated below:
\begin{equation}
    \mathcal{A}(t)= \frac{4\pi t}{\beta}\alpha_d V_{d-2},   \qquad \text{for $t\gg \beta$}.
\end{equation}

The linear growth eventually ceases when the extremal surface traversing the wormhole is surpassed by another, more minimal configuration. A different set of competing extremal surfaces, anchored to the same entangling boundary but bypassing the wormhole, emerges. These surfaces are essentially combinations of the extremal surfaces corresponding to subregions within each individual thermal CFT. Their area is given by
\begin{equation}
    \mathcal{A}(\infty)-\mathcal{A}(0)=\frac{2\pi}{\beta}V_{d-1}.
\end{equation}

The transition of dominant extremal surface occurs around $t\sim R$, which corresponds to the size of the subregion under consideration. Consequently, we anticipate the nonlocal magic in this TFD state to scale as follows:
\begin{equation}
\begin{split}
\sma{t}&\sim\sqrt{\abs{\log{\epsilon}}}\sqrt{S_0+\frac{4\pi t}{\beta}\alpha_dV_{d-2}}, \qquad \text{for $\beta\ll t<R$}\\
&\sim \sqrt{\abs{\log{\epsilon}}S_T},  \qquad \text{for $t\geq R$}.
\end{split}
\end{equation}

\section{Discussion}
In this work, we explored the question: what dual boundary quantity enables gravitational backreaction in the bulk? The celebrated formula of Ryu and Takayanagi provides a fundamental observation of the AdS-CFT conjecture by showing that areas in AdS correspond to entanglement entropies in the CFT. 
In the greater context of spacetime and gravity emerging from quantum information, we ask: If entanglement builds geometry, then what builds gravity?
In this work we show that the strength of gravitational backreaction is connected to (nonlocal) magic in CFT. In other words, gravity is magical! Accordingly, both defining properties of quantumness admit holographic counterparts in AdS. 

To obtain this result, we 
studied the interplay between nonlocal magic and entanglement. We show that for any quantum state in a finite dimensional Hilbert space, this form of nonstabilizerness that can only live in the bipartite correlations is lower bounded by the antiflatness of the entanglement spectrum and upper bounded by the amount of entanglement in the system as defined by R\'enyi entropies. We then apply these results to CFTs and conclude that both the exact and smoothed nonlocal magic is proportional to various notions of antiflatness. However, they scale differently with entropy --- the exact nonlocal magic scales linearly with the von Neumann entropy of a CFT subregion while the smoothed magic only scales as the square root. Numerically we verify that nonlocal magic is sensitive to quantum phase transition in a way that is different from entanglement. We also examined its behavior under symmetry breaking. 

Finally, in the context of holographic CFTs, we derive a quantitative relation between nonlocal magic and the level of gravitational backreaction. Using the bulk gravity theory, smoothed nonlocal magic in the CFT can also be estimated holographically. As nonstabilizerness in quantum systems are generically hard to compute, our work also provides an important estimate on the practical level and constrain magic distributions using existing data and well-founded methods like tensor networks and DMRG. \textcolor{black}{An alternative connection was established by showing that a modified entanglement capacity, at $n=1$, is likewise a faithful nonlocal magic measure. This measure is related to entanglement capacity within constant offset, and therefore holographic calculations can be utilized directly for its computation. While general entanglement capacity calculations can be difficult, as they rely on identifying extremal surfaces in the backreacted geometries with cosmic branes, perturbative calculations are readily computable for small $\delta n =n-1$, e.g., Ref.~\cite{Nakaguchi:2016zqi}. }

{\color{black} While our core findings pertain to quantum information, they also carry important implications for the design of holographic toy models and the resource estimates necessary to simulate conformal field theories. Although it is expected that Clifford processes, such as those used to prepare holographic stabilizer codes~\cite{Pastawski_2015,ABSC,Pollack_2022}, are insufficient for realizing the emergence of gravity, our results indicate that it is likewise insufficient to arbitrarily distribute magic in the CFT. In fact, to the contrary, magic need be injected by the correct amount and nonlocally smeared across the state. This observation can be realized in random tensor networks which, while possessing magic in abundance, have virtually no nonlocal magic. Our finding is consistent with a separate line of investigation, based in operator algebra quantum error correction, that has shown a nonlocal form of nonstabilizerness is required to reproduce a nontrivial area operator, which is expected for physical spacetimes and emergent gravity~\cite{nogo}. 

Furthermore, our findings provide a quantitative estimate for precisely where, and how much, non-stabilizer resources need to be injected, thus constituting an important constraint for future modifications of holographic toy models that seek to realize emergent gravity. This same resource estimate also indicates that the nonstabilizerness needed for simulating conformal field theories is not as high as one might na\"ively expect. Our results therefore imply a lower limit on efficient quantum simulations of quantum field theories, which promise long term significance in the study of resource estimation and state preparation.}

There are several directions that are of interest for future work. The key constraints for nonlocal magic here are given in terms of inequalities. Part of the reason for bounds instead of a precise equality is that nonlocal magic requires extremization while the computation of magic itself is already nontrivial. However, given the universal behavior of nonlocal magic across multiple distinct measures of antiflatness, there is reason to believe that a unifying statement or even a precise equality exists between entanglement spectral properties and magic. In the particular case of quantum field theory, it is also crucial to generalize our observations to definitions of magic that is native to the infinite dimensional system, e.g. non-Gaussianity, as well as other different measures of spectral antiflatness. 

Approaching nonlocal magic from a different perspective, we can start with state $\rho_A$ from the usual stabilizer polytope and construct a purified state $\psi_{AB}$. One can also define a nonlocal magic as the minimal magic among all possible purifications.
In the same vein of connecting magic with entanglement, we ask whether it is possible to define instead magical entanglement, i.e., the entanglement that cannot be removed by any Clifford operation\footnote{We thank Kaifeng Bu for this suggestion.}. In this case, one can easily show from our entropy bounds that magical entanglement is an upper bound of nonlocal magic. However, it is yet unknown whether the two definitions are equivalent. Finally, recall that nonlocal magic can be generalized to systems with multipartite entanglement. This will be crucial in understanding the behavior of e.g. Haar random states, random tensor networks, and holographic states. As the type of multipartite entanglement is quite constrained for stabilizer states, nonlocal magic may be crucial in the classification of multipartite entanglement.

For CFTs specifically, several of our results rely on the assumption that the bipartite entanglement across a subregion $A$ and its complement $B$ in a pure state can be approximately converted into a tensor product of entangled pairs through unitaries that only act on the respective subregions. Although this assumption is well-supported by numerics and well-motivated by holographic tensor networks models, it is unclear the extent to which this holds for a single copy of (holographic) CFT in general. This assumption may also admit further modification in the case where $A$ consists of multiple disjoint regions. It is important that we understand the regime of validity for such assumptions and pave the way for proving \cref{conj:cft} and extending the generality of \cref{prop:flatsmoothNLM}. 

Just as various types of entanglement can admit different holographic interpretations, a similar situation may hold for magic. While we take a first step toward addressing the open question of what is the holographic dual of magic, much remains unknown. For instance, the connection we identify with antiflatness signals a link between nonlocal magic and gravitational backreaction. However, because we lack a systematic understanding of how the bulk duals should deform under a sequence of boundary theories that have increasing flat spectrum, the physical meaning of how the removal of magic turns off backreaction is unclear. {\color{black} Additionally, we only consider backreaction generated by the stress energy of a cosmic brane, instead of a planet or a star in the bulk. While it is possible that other gravitational phenomena such as a planet or star generates backreaction that emerge from a different magical origin on the boundary, it is important to recognize that this backreaction will similarly be absent without antiflatness (and thereby nonlocal magic) as discussed in Sec~\ref{sec:gravity}. Nevertheless, it is possible that important differences exist in how distinct backreactions manifest in the boundary theory.} If that is the case, we eventually wish to distinguish them from the consequence of bipartite nonlocal magic in the boundary theory. {\color{black} More broadly still, while the holographic connections established above provide many useful clues for understanding quantum many-body magic, the large variety among magic measures render the reverse implications difficult as it can be challenging to discern which provides the most natural gravitational probes. We do, however, expect multipartite magic to carry a lot more information about bulk physics, where existing entropy data is limited and where magic is required to sustain nontrivial bulk connectivity~\cite{Hayden:2021gno,Akers:2019gcv} without violating the holographic entropy inequalities.}

Although it has been suggested that boundary states with flat entanglement spectrum are dual to peculiar bulk states of fixed areas\cite{Akers:2018fow,Dong_2019}, exactly how these bulk states should be interpreted holographically remains to be understood. To this end, a more precise relation between magic and emergent gravity \cite{Faulkner_2013} in the bulk, one which does not rely on the distillation assumptions used in this work, is highly desirable. Furthermore, a connection between magic and a local function of curvature generated by more physical forms of stress energy instead of an extended conical singularity such as a cosmic brane may provide a more natural link with the Einstein's equations or the Hamiltonian constraint. A more comprehensive understanding of holographic magic through the lens of dynamics such as quantum chaos\cite{chaosbymagic} and (classical) complexity can also provide another unique perspective that is not captured by our current work.

Finally, this work calls for several important lines of investigation as we move toward establishing nonlocal magic as a key metric for characterizing quantum many-body systems. For instance,  the tensor product of random single-qubit states, the ground states of physical quantum many-body systems, and the Haar random states all have volume law magic scaling. Purely from the point of view of entanglement entropy, they can also be mimicked by stabilizer states. However, their nonlocal magic behaves very differently. Thus it provides a distinct indicator for the properties of the underlying quantum systems that are invisible to entanglement entropy or total nonstabilizerness alone. It would also be intriguing to study the role of nonlocal magic in quantum phase transition, in symmetry breaking, and in nonequilibrium systems.





\acknowledgments
We would like to thank Chris Akers, Vijay Balasubramanian, Ning Bao, Ed Barnes, Kaifeng Bu, Xi Dong, Sophia Economou, Monica Kang, Cynthia Keeler, Nick Mayhall, Jason Pollack, Howard Schnitzer, Brian Swingle, Christopher White, and Tianci Zhou for helpful comments, resource, and discussions. We are especially grateful to Christopher White and Daniele Iannotti for identifying the mistakes in the earlier version of this manuscript. C.C. and A.H. would like to thank the organizers of the Quantum Information and Quantum Matter Conference at NYU Abu Dhabi during which this work was first conceived.
 C.C.\ acknowledges the support by the National Science Foundation (No. PHY-1733907) and the Commonwealth Cyber Initiative. The Institute for Quantum Information and Matter is an NSF Physics Frontiers Center.
A.H. acknowledges support from PNRR MUR Project (No. PE0000023-NQSTI) and
PNRR MUR Project (No.~CN $00000013$-ICSC). W.M. is supported by the U.S. Department of Energy under Grant No.~DE-SC0019470 and by the Heising-Simons Foundation ``Observational Signatures of Quantum Gravity'' Collaboration Grant No.~2021-2818. S.F.E.O. acknowledges support from PNRR MUR Project (No.~PE0000023-NQSTI). L.L. is funded through the Munich Quantum Valley project (No.~MQV-K8) by Bayerisches Staatsministerium für Wissenschaft und Kunst and DFG (No.~CRC 183).

\section{Data availability}
The data that support the findings of this article are
openly available \cite{cheng2025dataset}.

\appendix
\section{Invariance of $\stab_0$}\label{app:stab0invariance}
In this section, we prove that $\stab_0$ is invariant under the following operations
\begin{enumerate}
	\item Clifford unitaries. $\rho\rightarrow U\rho U^{\dagger}$ with $U\in \mathcal{C}(d^n)$.
	\item Composition with stabilizer states, $\rho \rightarrow \rho \otimes \sigma $ with $\sigma $ a stabilizer state.
 	\item Partial trace of the first qudit, $\rho \rightarrow \Tr_{1}(\rho)$
	\item Computational basis measurement on the first qudit, $\rho \rightarrow (\st{i}\otimes \bbbone_{n-1}) \rho (\st{i}\otimes \bbbone_{n-1} )/\Tr(\rho \st{i}\otimes \bbbone_{n-1})$ with probability $\Tr(\rho \st{i}\otimes \bbbone_{n-1})$

\end{enumerate}
\begin{proposition}{Clifford Invariance.}
Given $\sigma\in\stab_0$ and $C\in\mathcal{C}(d^n)$, then $C\sigma C^{\dagger}\in\stab_0$
\begin{proof}
    \begin{equation}
    C\sigma C^\dagger = \frac{1}{d^n}\sum_{P\in G} C PC^\dagger = \frac{1}{d}\sum_{\tilde{P}\in\tilde{G}} \tilde{P}
\end{equation}
the latter is an element of $\stab_0$  since it is the equal-weighted sum of Pauli operators of a commuting set. This is since $C:P\mapsto \tilde{P}\in\tilde{\mathcal{P}}$ and the action of a unitary on a subgroup $G$ does not modify the commutation relations.
\end{proof}
\end{proposition}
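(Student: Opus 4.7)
The plan is to show that conjugation by a Clifford unitary $C$ transforms the defining group-sum representation of a $\stab_0$ state into another representation of exactly the same form. First I would expand $\sigma = \frac{1}{d^n}\sum_{P\in G} P$, where $G\subseteq \tilde{\mathcal{P}}_n$ is an abelian subgroup of the Pauli group, and by linearity obtain $C\sigma C^\dagger = \frac{1}{d^n}\sum_{P\in G} C P C^\dagger$. The entire argument then reduces to verifying that the image $\tilde G := \{C P C^\dagger : P\in G\}$ is again an abelian subgroup of $\tilde{\mathcal{P}}_n$ of the same cardinality as $G$.

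The verification proceeds in three short steps. By the defining normalizer property $C \tilde{\mathcal{P}}_n C^\dagger = \tilde{\mathcal{P}}_n$, every $CPC^\dagger$ lies in $\tilde{\mathcal{P}}_n$; closure, identity, and inverses in $\tilde G$ are immediate from $C P_1 C^\dagger \cdot C P_2 C^\dagger = C(P_1 P_2)C^\dagger$, $C \bbbone C^\dagger = \bbbone$, and $(CPC^\dagger)^{-1} = C P^{-1} C^\dagger$; and abelianness transfers through conjugation since $[P_1,P_2]=0$ gives $[CP_1 C^\dagger, CP_2 C^\dagger] = C[P_1,P_2]C^\dagger = 0$. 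Because the map $P\mapsto CPC^\dagger$ is injective (unitary conjugation is), the cardinality is preserved, $|\tilde G|=|G|$, so $C\sigma C^\dagger = \frac{1}{d^n}\sum_{\tilde P \in \tilde G}\tilde P$ satisfies the definition of $\stab_0^{(n)}$.

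I do not expect any substantive obstacle here; the proposition is essentially a direct unpacking of the definitions combined with the normalizer property that characterizes Clifford unitaries. The only minor point I would double-check is that no spurious phase factor $\tilde{\omega}^{k}\bbbone$ is introduced into $\tilde G$, which would spoil the trace-one normalization. This is automatic: unitary invariance of the trace gives $\Tr(C\sigma C^\dagger) = \Tr(\sigma) = 1$, so the only identity-type element of $\tilde G$ must be $\bbbone$ itself, matching the structure inherited from $G$.
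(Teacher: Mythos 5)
Your proposal is correct and follows essentially the same route as the paper: conjugate the group-sum representation term by term and use the normalizer property of the Clifford group to conclude that the image $\tilde G$ is again a commuting Pauli subgroup of the same cardinality. Your version is somewhat more explicit (checking closure, inverses, injectivity, and the phase/normalization point), but the underlying argument is identical.
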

\begin{proposition}
    Given $\rho\in\stab_0$ and $\tau\in\stab_0$ then $\rho\otimes\tau\in\stab_0$
\end{proposition}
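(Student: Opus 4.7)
The plan is to construct the abelian Pauli subgroup that stabilizes $\rho\otimes\tau$ directly from the groups associated with $\rho$ and $\tau$. By hypothesis, there exist abelian subgroups $G_\rho\subset\tilde{\mathcal{P}}_n$ and $G_\tau\subset\tilde{\mathcal{P}}_m$ of commuting Paulis, with $|G_\rho|=d^n$ and $|G_\tau|=d^m$, such that
\begin{equation}
    \rho=\frac{1}{d^n}\sum_{P\in G_\rho}P,\qquad \tau=\frac{1}{d^m}\sum_{Q\in G_\tau}Q.
\end{equation}
I would then introduce the candidate group
\begin{equation}
    G_{\rho\tau}\defeq\{P\otimes Q\ :\ P\in G_\rho,\ Q\in G_\tau\}\subset\tilde{\mathcal{P}}_{n+m}.
\end{equation}

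The first step is to verify that $G_{\rho\tau}$ is indeed an abelian subgroup of $\tilde{\mathcal{P}}_{n+m}$ of the right cardinality. Closure under multiplication follows from $(P_1\otimes Q_1)(P_2\otimes Q_2)=(P_1P_2)\otimes(Q_1Q_2)$ together with the fact that $G_\rho$ and $G_\tau$ are groups; the identity $\bbbone_n\otimes\bbbone_m$ is in $G_{\rho\tau}$; and inverses are obtained componentwise. Commutativity descends elementwise from the commutativity of $G_\rho$ and $G_\tau$. Finally, the tensor product pairing is injective on Pauli elements (up to the overall phase convention used in $\tilde{\mathcal{P}}$), so $|G_{\rho\tau}|=|G_\rho|\cdot|G_\tau|=d^{n+m}$, matching the normalization required by the definition of $\stab_0^{(n+m)}$.

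The second step is the direct computation
\begin{equation}
    \rho\otimes\tau=\Big(\frac{1}{d^n}\sum_{P\in G_\rho}P\Big)\otimes\Big(\frac{1}{d^m}\sum_{Q\in G_\tau}Q\Big)=\frac{1}{d^{n+m}}\sum_{R\in G_{\rho\tau}}R,
\end{equation}
which together with the previous step certifies that $\rho\otimes\tau\in\stab_0^{(n+m)}$.

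There is no substantial obstacle here: the statement reduces to the observation that tensor products of abelian Pauli subgroups are abelian Pauli subgroups with multiplicative cardinality, and that the defining sum splits as a product. The only bookkeeping care needed is to confirm that the prefactor phases in $\tilde{\mathcal{P}}_{n+m}$ are compatible with those of $\tilde{\mathcal{P}}_n$ and $\tilde{\mathcal{P}}_m$ under the natural embedding, which is immediate from the convention in \eqref{eq:Pauligroup}.
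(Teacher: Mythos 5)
Your proof is correct and follows essentially the same route as the paper's: both construct the product group $\{P\otimes Q : P\in G_\rho,\ Q\in G_\tau\}$, note that it remains an abelian Pauli subgroup of the right cardinality, and observe that the defining sum factorizes. Your version simply spells out the group-theoretic bookkeeping (closure, inverses, cardinality, phase compatibility) that the paper leaves implicit.
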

\begin{proof}
    \begin{equation}
        \rho\otimes\sigma = \frac{1}{d^{2n}}\sum_{P\in G_1, Q\in G_2}P\otimes Q=\frac{1}{d^2}\sum_{P\otimes Q\in G_1\times G_2} P \otimes Q. 
    \end{equation}
where the latter is an element of $\stab_0$ since the tensor product of Pauli operators is still a Pauli operator and the Cartesian product of a group is still a group, and since the tensor product does not affect the commutation relations of the $G_1$ or $G_2$, then $G_1\times G_2$ is a commuting group and so $\rho\otimes\sigma\in\stab_0$.
\end{proof}
\begin{proposition}
    Given a state $\rho\in\stab_0$ then $\Tr_1{\rho}\in\stab_0$
\end{proposition}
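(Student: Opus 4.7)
The plan is to unpack the definition $\rho=\frac{1}{d^n}\sum_{P\in G}P$ where $G\subset\tilde{\mathcal{P}}_n$ is a commuting subgroup, compute the partial trace term by term using the tensor product structure of each Pauli, and then repackage the resulting expression as an element of $\stab_0^{(n-1)}$. Since every $P\in G$ factorizes as $P=P_{(1)}\otimes P_{(2\ldots n)}$ across the first qudit versus the rest (up to an overall scalar phase that can be absorbed into either factor), linearity of the partial trace gives
\[
\Tr_1(\rho)=\frac{1}{d^n}\sum_{P\in G}\Tr(P_{(1)})\,P_{(2\ldots n)}.
\]
The single-qudit Pauli $P_{(1)}$ satisfies $\Tr(P_{(1)})=d$ when $P_{(1)}$ is proportional to $\bbbone$ and $\Tr(P_{(1)})=0$ otherwise, so only the subset $G_I:=\{P\in G:P_{(1)}\propto\bbbone\}$ survives, yielding $\Tr_1(\rho)=\frac{1}{d^{n-1}}\sum_{P\in G_I}P_{(2\ldots n)}$.

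The next step is to identify the image $H:=\{P_{(2\ldots n)}:P\in G_I\}\subset\tilde{\mathcal{P}}_{n-1}$ as a commuting subgroup. Closure of $G_I$ under multiplication follows because the tensor factor $\bbbone_{(1)}\otimes\cdot$ is preserved under products; the assignment $P\mapsto P_{(2\ldots n)}$ is a group homomorphism since all first-qudit factors are scalar and commute freely. Commutativity of $H$ is inherited from $G$: if $[P,P']=0$ for $P,P'\in G_I$, the scalar first-qudit pieces commute automatically, so the commutator forces $[P_{(2\ldots n)},P'_{(2\ldots n)}]=0$.

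The one subtle point, which I expect to be the main obstacle, is verifying that the projection $G_I\to H$ is injective, so that $|H|=|G_I|$ and the prefactor $1/d^{n-1}$ matches the required $\stab_0^{(n-1)}$ normalization. Two distinct elements of $G_I$ mapping to the same $Q\in H$ would differ by a nontrivial scalar multiple $\alpha\bbbone_n\in G$ with $\alpha\ne 1$. For $\rho=\frac{1}{d^n}\sum_{P\in G}P$ to be a bona fide density operator one needs $G\cap\{\alpha\bbbone_n:\alpha\in\mathbb{C}\}=\{\bbbone_n\}$, since otherwise either $\Tr\rho\ne 1$ or phase cancellations collapse $\rho$ to zero. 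Under this standard nondegeneracy --- which is baked into the defining property of $\stab_0$ --- the homomorphism is a bijection, and we obtain $\Tr_1(\rho)=\frac{1}{d^{n-1}}\sum_{Q\in H}Q$ with $H$ a commuting subgroup of $\tilde{\mathcal{P}}_{n-1}$, proving $\Tr_1(\rho)\in\stab_0^{(n-1)}$. By iteration, the same argument shows invariance of $\stab_0$ under tracing out any subset of qudits.
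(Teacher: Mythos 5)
Your proof is correct and follows essentially the same route as the paper's: expand $\rho$ over the stabilizer group, trace term by term so that only the elements with identity on the first qudit survive, and recognize the result as a uniform sum over a commuting Pauli subgroup on the remaining $n-1$ qudits. You are in fact more careful than the paper on one point — verifying that the restriction map $G_I\to H$ is injective so the $1/d^{n-1}$ normalization comes out right — which the paper's proof silently assumes.
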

\begin{equation}
    \Tr_1(\rho)=\frac{1}{d^n}\sum_{P\in G}\Tr(P_1)P_{2\ldots n}=\frac{1}{d^{n-1}}\sum_{P_{2\ldots n}\in\Tr_1(G)}P_{2\ldots n}
\end{equation}
where $P_1$ labels the Pauli operator on the first qudit of $P$. It is easy to observe that the only elements whose partial trace is different from $0$ are the ones with $P_1=\bbbone$. These elements that were in $G$ are still commuting Pauli operator in the traced group $\Tr_1{G}$. 
\begin{proposition}
    Given a state $\rho$ and $\{|i\rangle\}$ the 1-qudit computational basis, then $(\st{i}\otimes \bbbone_{n-1}) \rho (\st{i}\otimes \bbbone_{n-1} )/\Tr(\rho \st{i}\otimes \bbbone_{n-1})\in\stab_0$
\begin{proof}
For the sake of simplicity, let us consider the case for a multi-qubit system and $i=0$, it can be easily generalized for $i\neq 0$ and qudits.
    \begin{equation}
    \begin{split}
    &\frac{(\st{0}\otimes \bbbone_{n-1}) \rho (\st{0}\otimes \bbbone_{n-1} )}{\Tr(\rho \st{0}\otimes \bbbone_{n-1})}\\
    =&\frac{\sum_{P\in G}\Tr(\st{0}P_1) \st{0}\otimes P_{2\ldots n}}{\sum_{P\in G}\Tr(\st{0}P_1)\Tr(P_{2\ldots n})}\\
    =&\frac{1}{2^{n-1}}\frac{\sum_{P\in G}\Tr(P_1\st{0})\st{0}\otimes P_{2\ldots n}}{\sum_{P_1\in\Tr_{2\ldots n}G} \Tr(P_1\st{0})}\\
    =&\frac{1}{2^{n-1}}\frac{\sum_{P\in G|P_1\in\{\bbbone,Z\}}\st{0}\otimes P_{2\ldots n}}{\sum_{P_1\in{\bbbone,Z}\cap\Tr_{2\ldots n}G}}\\
    =&\st{0}\otimes\frac{1}{2^{n-1}}\frac{\sum_{P\in G|P_1\in\{\bbbone,Z\}}P_{2\ldots n}}{\sum_{P_1\in{\bbbone,Z}\cap\Tr_{2\ldots n}G}}\\
    \end{split}
    \end{equation}
    note that $\sum_{P_1\in{\bbbone,Z}\cap\Tr_{2\ldots n}G}$ can be either $1$ or $2$, due to the terms $\bbbone_{n}$ $Z\bbbone_{n-1}$. While on the numerator the only terms surviving have on the first qubit $\bbbone$ or $Z$. Now it is not difficult to see that for $G$ to be a commuting group if $\sum_{P_1\in{\bbbone,Z}\cap\Tr_{2\ldots n}G}=1$ then there will be no multiplying factor to the numerator, while in the other case, there will be a $2$ since each nonzero $P_{2\ldots n}$ has to repeat twice. Then it is not difficult to see that one has a stabilizer state, because $P_{2\ldots n} $ is still summing on a commuting Pauli subgroup. 
\end{proof}

\end{proposition}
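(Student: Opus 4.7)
My plan is to adopt the standard stabilizer-formalism viewpoint that a Pauli measurement of a stabilizer state yields a stabilizer state, and adapt it to the $\stab_0$ setting with explicit normalization. First, I would parameterize the input $\rho\in\stab_0$ as $\rho=\frac{1}{d^n}\sum_{P\in G}P$ for some abelian subgroup $G\subset\tilde{\mathcal{P}}_n$ avoiding nontrivial multiples of the identity, and rewrite the measurement projector as a Pauli average, $\st{i}\otimes\bbbone_{n-1}=\frac{1}{d}\sum_{k=0}^{d-1}\omega^{-ik}Z_1^{k}$, where $Z_1$ denotes $Z$ acting on the first qudit only. The entire proposition then reduces to identifying the conjugation $(\st{i}\otimes\bbbone)\rho(\st{i}\otimes\bbbone)$ as an equal-weight sum over some new abelian Pauli group.

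Next I would split $G$ into the centralizer $G_c:=\{P\in G:[P,Z_1]=0\}$, which is either all of $G$ or a subgroup of index $d$, and its complement. For $P\notin G_c$, the first-qudit factor of $P$ contains an off-diagonal (power of $X$) component, so $(\st{i}\otimes\bbbone)\,P\,(\st{i}\otimes\bbbone)=0$ and such terms contribute nothing. For $P\in G_c$, writing $P=Z_1^{k(P)}\otimes Q(P)$ with $Q(P)\in\tilde{\mathcal{P}}_{n-1}$, a direct computation gives $(\st{i}\otimes\bbbone)\,P\,(\st{i}\otimes\bbbone)=\omega^{ik(P)}\st{i}\otimes Q(P)$. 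Taking the trace and observing that only elements with $Q(P)=\bbbone$ survive yields $\Tr\!\bigl(\rho\,(\st{i}\otimes\bbbone)\bigr)=|G_c\cap\langle Z_1\rangle\otimes\bbbone|/d^n$, which accounts for the normalization factor appearing in the denominator of the proposition.

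Finally, I would re-expand the residual $\st{i}$ using $\st{i}=\frac{1}{d}\sum_{j}\omega^{-ij}Z_1^{j}$ and recognize the resulting sum as an equal-weight average over the group $G'$ generated by the phase-adjusted elements $\{\omega^{ik(P)}P\}_{P\in G_c}$ together with $\omega^{-i}Z_1$. Commutativity of $G'$ is automatic because every element of $G_c$ commutes with $Z_1$ by definition, and the cardinality counts to $|G'|=d\,|G_c|$ when $Z_1\notin G_c$ and $|G'|=|G_c|$ when $Z_1\in G_c$; either way the normalized post-measurement operator is exactly $\frac{1}{d^n}\sum_{P'\in G'}P'\in\stab_0$.

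The main obstacle I anticipate is the phase bookkeeping in the qudit case $d>2$: one must verify that the generators $\omega^{ik(P)}P$ and $\omega^{-i}Z_1$ close into a legitimate stabilizer group without spawning a nontrivial signed identity $\omega^{\ell}\bbbone$ (which would make $\frac{1}{d^n}\sum P'$ vanish rather than be a density matrix), and that the $\Tr(\rho\,\st{i}\otimes\bbbone)$ denominator precisely cancels the cardinality factor produced by the expansion. For qubits this collapses into the case distinction in the excerpt between $\Tr_{2\ldots n}G\cap\{\bbbone,Z\}$ having one or two elements; for general $d$ the argument is essentially the same after carefully tracking the $\omega^{ij}$ phases absorbed into the $Z_1$-sector of $G'$.
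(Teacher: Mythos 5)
Your overall strategy is sound and is essentially the paper's argument carried out in greater generality: both proofs observe that conjugation by $\st{i}\otimes\bbbone_{n-1}$ annihilates every $P\in G$ whose first tensor factor is off-diagonal, and that what survives is an equal-weight sum over a new abelian Pauli group containing a phase times $Z_1$. Where the paper restricts to qubits with $i=0$ and handles the normalization by an ad hoc case analysis on whether $Z\otimes\bbbone_{n-1}\in G$, your centralizer split $G=G_c\cup(G\setminus G_c)$ and the expansion $\st{i}=\frac{1}{d}\sum_j\omega^{-ij}Z_1^{j}$ treat general $d$ and general $i$ uniformly, and your cardinality count ($|G'|=d|G_c|$ or $|G_c|$) subsumes the paper's factor-of-two discussion.

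There is, however, a concrete error in your identification of $G'$: the correct generating set is $G_c$ with its \emph{original} phases together with $\omega^{-i}Z_1$, not the phase-adjusted elements $\omega^{ik(P)}P$. Substituting $j=k(P)+m$ into $\sum_{P\in G_c}\sum_j\omega^{ik(P)-ij}\,Z_1^{j}\otimes Q(P)$ gives $\omega^{-im}Z_1^{k(P)+m}\otimes Q(P)=P\cdot(\omega^{-i}Z_1)^{m}$, i.e.\ the prefactor $\omega^{ik(P)}$ cancels; your adjusted generators instead produce $\omega^{2ik(P)-ij}Z_1^{j}\otimes Q(P)$, which carries a spurious extra phase. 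A counterexample: for the Bell state with $G=\{\bbbone\bbbone,XX,ZZ,-YY\}$ and outcome $i=1$ on the first qubit, $G_c=\{\bbbone\bbbone,ZZ\}$ and the correct group $\langle ZZ,\,-Z\otimes\bbbone\rangle=\{\bbbone\bbbone,ZZ,-Z\otimes\bbbone,-\bbbone\otimes Z\}$ averages to the true post-measurement state $\st{1}\otimes\st{1}$, whereas your $\langle -ZZ,\,-Z\otimes\bbbone\rangle=\{\bbbone\bbbone,-ZZ,-Z\otimes\bbbone,\bbbone\otimes Z\}$ averages to $\st{1}\otimes\st{0}$. Since the adjusted group is still abelian, your argument would still exhibit \emph{some} element of $\stab_0$, just not one equal to the post-measurement state, so the identification step fails as written. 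The fix is a one-line change (drop the adjustment). The remaining phase bookkeeping you flag --- that $G'$ contains no nontrivial $\omega^{\ell}\bbbone$ --- reduces exactly to the assumption $\Tr(\rho\,\st{i}\otimes\bbbone_{n-1})\neq 0$: any $c\,Z_1^{a}\otimes\bbbone\in G$ with $c\,\omega^{ia}\neq 1$ would force that trace to vanish, and when the trace is nonzero every such element is already a power of $\omega^{-i}Z_1$.
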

\subsection{Proof of~\cref{prop:qrf}}\label{qrf}
\begin{proof}
   Let us start by expanding the relative entropy, we have 
   \begin{align}
   \mathcal{F}_R(\rho)=-\min_{\sigma\in \mathrm{FLAT}^{(n)}}\Tr[\rho\log\sigma]-S(\rho).
   \end{align}
   Since the elements of FLAT are all proportional to projection operators,  it is possible to choose $\sigma$ such that $\sigma=\bbbone_{ r}/r\oplus 0_{d- r}$ is diagonal in the same basis as $\rho$ where $r\equiv \rank(\rho)$, $\bbbone_r$ is the identity on a subspace of dimension $r$, and $0_{d-r}$ is the zero-matrix of dimension $(d-r)\times (d-r)$. Hence the first term becomes $\log r\,\Tr\rho = \log r=S_{max}(\rho)$ and $\mathcal F_R(\rho)\leq S_{max}(\rho)-S(\rho)$. 

    To show that the minimum is attained for a rank $r$ density operator $\rho$ when $\mathcal F_R(\rho)= S_{max}(\rho)-S(\rho)$, suppose on the contrary that there exists $\sigma=\Pi_k/k$, where $\Pi_k$ is a projection operator of rank $k$ such that the first term is less than $\log r$ and making $\mathcal{F}_R(\rho)<S_{\rm max}(\rho)-S(\rho)$. Let $M=\rho\log\sigma$; in the diagonal basis of $\sigma$, where we denote the diagonal element by $\lambda_i$, one has  
    
  \begin{align}
  \Tr M=\sum_i M_{ii} = \sum_i \rho_{ij}\delta_{ji}\log\lambda_i = \sum_i \rho_{ii}\log\lambda_i.
  \end{align}
Let us note that when $i>k$, $\log\lambda_{i>k}=-\infty$. Then in order for the trace of $M=\rho\log\sigma$ to be finite, we need $\rho_{ii}=0$ for any $\lambda_i\ne 0$ to be $0$.
    On the other hand, we know that if the diagonal of a positive semidefinite matrix has a zero on the diagonal, then the corresponding rows and columns must be all $0$s. Then it implies that up to rearranging the rows and columns for the sake of clarity, 
    \begin{align}
    \rho = \begin{pmatrix}
        A & 0\\
        0 & 0
    \end{pmatrix}
    \end{align}
    where $A$ is a $k\times k$ block matrix of rank at most $k$. Therefore, $-\Tr M = \log k\, \Tr[AI] = \log k<\log r$ by assumption, we must have $k<r$. Because $\dim A\geq \rank(A)$, it follows that $\rank(A)=\rank(\rho)\leq k<r$, which is a contradiction. 
\end{proof}
\section{Proof of~\cref{th:magicdist}}\label{app:proofthmd}
In this section, we prove~\cref{th:magicdist}. Let us start from the upper bound. Being defined through two minima, we can arbitrarily choose a state $\psi$ and a stabilizer $\sigma$ to upper bound $M_{\text{dist}}^{(NL)}$. Consider the state $\ket{\psi_{AB}}$, whose Schmidt decomposition can be written as $\ket{\psi_{AB}}=\sum_{i}^{D}\lambda_i\ket{\lambda_i^{A}\lambda_i^{B}}$, where $\lambda_i$ are the Schmidt coefficients and $D$ its Schmidt rank. Due to the minimization over $U=U_{A}\otimes U_B$, the basis $\ket{\lambda_{i}^{A/B}}$ can be brought in the computational basis (or another complete stabilizer basis)
\be 
U\ket{\lambda_{i}^{A/B}}=\ket{s_i^{A/B}}
\ee
Then, we choose $\ket{\sigma}=\sum_{i}^{d^{\lfloor \log_d D \rfloor}}\frac{1}{d^{\lfloor \log_d D \rfloor/2}} \ket{s_i^As_i^B}$, where $\lfloor \cdot \rfloor$ labels the floor function. Let us then compute the upper bound to $M_{\text{dist}}^{(NL)}(\psi_{AB})$. 
\begin{equation}
\begin{split}
M_{\text{dist}}^{(NL)}(\psi_{AB})\le &\frac{1}{2}\left \Vert \sum_{ij}^{D}\lambda_i\lambda_j\ketbra{s_i^A s_i^B}{s_j^A s_j^B}\right.\\
&\left.-\sum_{ij}^{d^{\lfloor \log D \rfloor}}d^{-\lfloor \log_d D \rfloor}\ketbra{s_i^A s_i^B}{s_j^A s_j^B}\right \Vert\\ 
=\Big(1-\sum_{ij}^{D}\sum_{kl}^{d^{\lfloor \log D \rfloor}}&d^{-\lfloor \log D \rfloor}\lambda_{i}\lambda_{j}\braket{s_l^A s_l^B}{s_i^A s_i^B}\braket{s_j^A s_j^B}{s_k^As_k^B}\Big)^{\frac{1}{2}}
\end{split}
\end{equation}
where we first used the Fuchs-Van der Graaf inequality, where the equality comes by $\psi_{AB}$ and $\sigma$ being pure states, and then rewritten the states in their Schmidt decomposition. Now without loss of generality, since $d^{\lfloor \log_d D \rfloor}\le D$, due to our degrees of freedom in the choice of $\sigma$ and $\psi_{AB}$ we can order the basis states such that only the first $d^{\lfloor \log_d D \rfloor}$ have nonzero overlap, and so it follows: 
\begin{equation}
\begin{aligned}
M_{\text{dist}}^{(NL)}(\psi_{AB})&\le \sqrt{1-\sum_{ij}^{d^{\lfloor \log_d D \rfloor}}d^{-\lfloor \log_d D \rfloor}\lambda_{i}\lambda_{j}}\\ 
&\le \sqrt{1-\sum_{ij}^{d^{\lfloor \log_d D \rfloor}}D^{-1}\lambda_{i}\lambda_{j}}\\
&\le \sqrt{1-\sum_{i}^{d^{\lfloor \log_d D \rfloor}}D^{-1}\lambda_{i}^2}\\ 
&=\sqrt{1-\frac{1}{D}+\sum_{i=d^{\lfloor \log D \rfloor}}^{D}D^{-1}\lambda_{i}^2}\\ 
&=\sqrt{1-\frac{1}{D}+\lambda_{\max}^{2}
\left(1-\frac{d^{\lfloor \log_d D \rfloor}}{D}\right)}\\ 
=\Big(1-e^{S_{max}(A)}\Big.&+e^{S_{\infty}(A)}\left(1-\frac{e^{\log d \lfloor S_{max}(A)/\log d \rfloor}}{e^{S_{max}(A)}}\right)\Big.\Big)^{1/2}
\end{aligned}
\end{equation}
where we first utilized the inequality $D^{-1}\leq d^{-\log_d D}$, and since $\lambda_i>0$ by definition, we can upper bound $M_{\text{dist}}^{(NL)}(\psi_{AB})$ by simply considering the diagonal terms. Next, we employed $\sum_{i}^{D}\lambda_i^2=1$ to rewrite our inequality, and finally, we utilized $S_{\max}(A)=\log{D}$ and $S_{\infty}=\log \lambda_{\text{max}}^2$.

Let us now focus on the lower bound. To prove it let us first provide a bound between $\mathcal{F}(\psi)$ and $ M_{dist}(\psi)$.
\begin{lemma}\label{lemmaflatmdist}
Let $\psi$ be a state then its flatness $\mathcal{F}(\psi)$ is upper bounded by $M_{\text{dist}}$ as follows
\begin{equation}
    \mathcal{F}(\psi)\le 8M_{\text{dist}}(\psi).
\end{equation}
\begin{proof}
Starting from the flatness one can add a zero term to it; take a flat state $\sigma\in\stab_0$
\begin{equation}
  \mathcal{F}(\psi)=\mathcal{F}(\psi)-\mathcal{F}(\sigma).
  \label{eq:flatsigma}
\end{equation}
We can then bound the flatness as follows:
\begin{equation}
\begin{split}
  \mathcal{F}(\psi)&=\Tr(\psi^3-\sigma^3)-\Tr\left((\psi^2)^{\otimes 2}-(\sigma^2)^{\otimes 2}\right)\\
                     &=\left| \Tr(\psi^3-\sigma^3) \right| + \left| \Tr\left((\psi^2)^{\otimes 2}-(\sigma^2)^{\otimes 2}\right) \right| \\
                     &\le \left| \Tr(\psi^3-\sigma^3) \right|+ 2 \left|\Tr\left((\psi^2)-(\sigma^2)\right) \right|\\
                     &\le 1- (1-T)^3 +2 - 2(1-T)^2\\
                     &\le T^3+ 7T\le 8T
\end{split}
\end{equation}
where $T=1/2 \left\| \psi -\sigma \right\|_1 $. In the second line we made use of the triangular inequality, in the third line, we used the following inequality
\begin{equation}
\begin{split}
    &|\Tr(\psi^2)\Tr(\psi^2)-\Tr(\sigma^2)\Tr(\sigma^2)|\\
    \le&|\Tr(\psi^2)\left(\Tr(\psi^2)-\Tr(\sigma^2)\right)|+|\left(\Tr(\psi^2)-\Tr(\sigma^2)\right)\Tr(\sigma^2)|\\
    \le& 2|\Tr(\psi^2)-\Tr(\sigma^2)|
\end{split}
\end{equation}
while in the fourth line we used \cite[ Lemma 1.2]{chen_sharp_2016} and then $T^3\le T$, since $0\le T\le 1$. 
By minimizing over $\sigma\in\stab$ we prove the lower bound with $M_{dist}(\psi)$. 
\end{proof}
\end{lemma}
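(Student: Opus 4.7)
The plan is to leverage the fact that every state in $\stab_0^{(n)}$ has a uniform spectrum on its support, so $\mathcal{F}(\sigma)=0$ for any $\sigma\in\stab_0^{(n)}$. For any such $\sigma$ one can then write $\mathcal{F}(\psi)=\mathcal{F}(\psi)-\mathcal{F}(\sigma)$ and try to control this difference purely in terms of $T\defeq \tfrac12\|\psi-\sigma\|_1$, after which minimizing over $\sigma\in\stab_0^{(n)}$ converts the bound into a bound on $M_{\text{dist}}(\psi)$. Concretely, expanding the definition gives
\begin{equation*}
\mathcal{F}(\psi)=\bigl[\Tr(\psi^3)-\Tr(\sigma^3)\bigr]-\bigl[\Tr(\psi^2)^2-\Tr(\sigma^2)^2\bigr],
\end{equation*}
and the triangle inequality reduces the task to estimating each bracket separately.

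For the quadratic bracket I would use the factorization $a^2-b^2=(a+b)(a-b)$ with $a=\Tr(\psi^2)$, $b=\Tr(\sigma^2)\in[0,1]$, which upper-bounds it by $2|\Tr(\psi^2)-\Tr(\sigma^2)|$. Both the cubic and quadratic moment differences can then be controlled by the sharp state-moment inequality $|\Tr(\psi^k)-\Tr(\sigma^k)|\le 1-(1-T)^k$ for density operators. Substituting $k=2,3$ and assembling,
\begin{equation*}
\mathcal{F}(\psi)\le \bigl[1-(1-T)^3\bigr]+2\bigl[1-(1-T)^2\bigr]=T^3-5T^2+7T\le T^3+7T\le 8T,
\end{equation*}
where the final step uses $T\in[0,1]$. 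Minimizing the right-hand side over $\sigma\in\stab_0^{(n)}$ and recognizing $M_{\text{dist}}(\psi)=\min_\sigma T$ yields $\mathcal{F}(\psi)\le 8M_{\text{dist}}(\psi)$.

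The main obstacle I expect is obtaining the sharp inequality $|\Tr(\psi^k)-\Tr(\sigma^k)|\le 1-(1-T)^k$ rather than a cruder Lipschitz-type estimate such as $|\Tr(\psi^k)-\Tr(\sigma^k)|\le k\,T$. It is intuitively clear that the moments cannot drift apart faster than the trace distance itself, but the tight form is what pins the overall constant at exactly $8$; a looser estimate would still reproduce the scaling $\mathcal{F}(\psi)=O(M_{\text{dist}}(\psi))$ with a slightly larger prefactor. A standard route to the sharp form proceeds by diagonalizing $\psi-\sigma$ and bounding the contributions of its positive and negative parts separately, or by appealing directly to a known sharp inequality on state moments. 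Once the moment bound is in hand, the rest of the argument is a short combinatorial estimate in $T$.
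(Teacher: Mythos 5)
Your proposal is correct and follows essentially the same route as the paper: subtract the (vanishing) flatness of a free state, split via the triangle inequality, factor the purity-squared difference to get the factor of $2$, and invoke the sharp moment bound $|\Tr(\psi^k)-\Tr(\sigma^k)|\le 1-(1-T)^k$ (which the paper cites as Lemma 1.2 of chen\_sharp\_2016) before minimizing over $\sigma\in\stab_0^{(n)}$. The arithmetic $T^3-5T^2+7T\le 8T$ matches the paper's as well.
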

Using Lemma~\ref{lemmaflatmdist}, we can thus write,  
\begin{equation}
    \mathcal{F}(\psi_A)\le 8 \min_{U_A}M_{\text{dist}}(U_A\psi_A U_A^{\dag})\label{cscsc}
\end{equation}
where $\psi_A=\tr_B{\psi_{AB}}$ and we used that $\mathcal{F}(\psi_A)$  is invariant under the action of global unitaries. 
Now, let us show that $ \min_{U_A}M_{\text{dist}}(U_A\psi_A U_A^{\dag})\le M_{\text{dist}}^{(NL)}(\psi_{AB})$. First recall that given $\psi_A=\Tr_B(\psi_{AB}) $ due to the monotonicity of $M_{\text{dist}}$ one has $M_{\text{dist}}(\psi_A)\le M_{\text{dist}}(\psi_{AB})$. Now let us prove the statement by contradiction. First, let $U_A$ be the unitary attaining the minimum in Eq.~\eqref{cscsc}. Let us suppose that there exists a bipartite unitary $U\equiv V_A\otimes V_B$ obeying
\begin{equation}
    M_{\text{dist}}(U_A \psi_A U_A^{\dag})> M_{\text{dist}}(U \psi_{AB} U^{\dag})
\end{equation}
Then we have the following chain of inequalities
\begin{equation}
\begin{split}
    M_{\text{dist}}(U_A \psi_A U_A^{\dag})&> M_{\text{dist}}(U \psi_{AB} U^{\dag})\\
    &\ge  M_{\text{dist}}(\tr_B U \psi_{AB} U^{\dag}) =  M_{\text{dist}}(V_A \psi_{A} V_A^{\dag}) 
\end{split}
\end{equation}
and this is a contradiction to the statement that $U_A$ attains the minimum. Therefore, one obtains that $ \min_{U_A}M_{\text{dist}}(U_A\psi_A U_A^{\dag})\le M_{\text{dist}}^{(NL)}(\psi_{AB})$. This result combined with ~\cref{lemmaflatmdist} concludes the proof.

\section{Stabilizer relative entropies}\label{app:nlentropies}
\subsection{Proof of~\cref{th:relstab}}\label{proofth2}
Let us start by proving the upper bound to $M_{RS}^{(NL)}$. We choose $\sigma_{AB}=\bbbone_{d^{\lceil \log_d  D  \rceil}}/d^{\lceil\log_d  D  \rceil}\oplus 0_{n-\lceil \log_d D  \rceil}$ where $D$ is the Schmidt rank of $\rho_A$.  Then expanding the relative entropy expansion one obtains the following bound
\begin{equation}
\begin{split}
M_{RS}^{(NL)}(\psi_{AB})&\le-\Tr[\psi_{AB} \log\sigma_{AB}] \\
&= \Tr[\sum_{i,j=1}^D\lambda_i\lambda_j |s_i\rangle\langle s_j| \times \\
&\qquad \sum_{k=1}^{d^{\lceil\log_d  D  \rceil}} |s_k\rangle\langle s_k|\log d \lceil \log_d D  \rceil] \\
&= \lceil \log_d  D  \rceil \log d\,   \Tr[\psi_{AB}]\\
&=\log d \lceil   S_{max}(A)/\log d  \rceil,
\end{split}
\end{equation}
%
Concluding the proof for the upper bound. Shifting our focus on the lower bound instead, let us note that for any $\rho$, $M_{RS}(\rho)\geq \mathcal{F}(\rho)$. This is a simple consequence of $\stab_0^{(n)}\subset \mathrm{FLAT}^{(n)}$. Because $\mathcal{F}(\rho)$ is isospectral under any unitary conjugation, it must follow that $M_{RS}(U \rho U^{\dagger})\geq F(U\rho U^{\dagger})=F(\rho)$. Therefore, 
\begin{align}
\mathcal F_R(\rho_A)\leq \min_{U_A} M_{RS}(U_A\rho_A U_A^{\dagger}).
\end{align}

    On the other hand, for any $\rho_A$, from monotonicity it follows that $M_{RS}(\rho_{AB})\geq M_{RS}(\rho_A)$ where $\rho_A=\Tr_B[\rho_{AB}]$. Therefore, we must have 
    \begin{equation}
    \begin{split}
       \min_{U_A} M_{RS}(U_A\rho_A U_A^{\dagger})&\leq \min_{U=V_A\otimes V_B}M_{RS}(U\rho_{AB}U^{\dagger}) \\
       &\equiv M_{RS}^{(NL)}(\rho_{AB}).
    \end{split}
    \end{equation}
    We can see that this is true from a proof by contradiction. Suppose there exists some $U_{A},U$ which attains the respective minima but has $$M_{RS}(U_A\rho_A U_A^{\dagger})> M_{RS}(U\rho_{AB} U^{\dagger}),$$ then from monotonicity, we must have 
    \begin{equation}
    \begin{split}
    M_{RS}(U_A\rho_A U_A^{\dagger})&>M_{RS}(U\rho_{AB}U^{\dagger})\\
    &\geq M_{RS}(\Tr_B [U\rho_{AB} U^{\dagger}]) \\
    &= M_{RS}(V_A\rho_A V_A^{\dagger})
    \end{split}
    \end{equation}
    for some local unitary $V_A$ which yields a lower distance than $U_A$. Since we assumed that $U_A$ attains the minimum, this violates our assumption, concluding the proof for the lower bound. 

    \subsection{Proof of~\cref{prop:RelativeEnt}}\label{proofprop4}

To bound the nonlocal magic of a pure state $\rho_{AB}$, consider a pure state $\psi_{AB}= U \rho_{AB}U^{\dagger}$ where $U=U_A\otimes U_B$ and $\psi_A$, $\psi_B$ are isospectral (up to truncation of 0 eigenvalues) to that of a subsystem $\rho_A,\rho_B$. Suppose they are states where we have removed the local magic such that both $\psi_A,\psi_B$ are diagonal in the computational (or another complete stabilizer basis). Again, this can be done by first rewriting the state $\rho_{AB}$ in the Schmidt basis, which is orthonormal. Then we replace the Schmidt basis with an orthonormal stabilizer basis to get $\psi_{AB}$. Since the mixture of stabilizer states is in the convex hull of stabilizer states, each $\psi_A,\psi_B$ must have zero local magic. Note that there are also other basis choices such that the basis state need not be a stabilizer, such states can also be in the convex hull of the stabilizer group as long as they are not pure states. 

By definition, $M^{(NL)}_{R}(\psi_{AB})\leq M_{R}(\psi_{AB})$ because we have chosen a particular instance of the local unitary $U_A\otimes U_B$ on the right-hand side whereas the left hand side is minimized over all possible instances.  Now we evaluate the relative entropy of magic $M_R(\psi_{AB})=-S(\psi_{AB})-\min_{\sigma \in \mathrm{STAB}}\Tr[\psi_{AB}\log \sigma_{AB}]$. Since $\psi_{AB}$ is pure, $S(\psi_{AB})=0$. If $\sigma_{AB}$ is pure, then the relative entropy is either $0$ when $\sigma=\psi$ or $\infty$ for any other $\sigma$ that is mixed.

We pick a stabilizer state $\sigma_{AB} = \sum_i\lambda_i^2 |s_i\rangle\langle s_i|_{AB}$ where $\lambda_i$ are the Schmidt coefficients of $\psi_{AB} = \sum_i\lambda_i |s_i\rangle_{AB}$ where $|s_i\rangle$ are the stabilizer basis we chose. Then for the second term, we write

\begin{align}
   M_R(\psi_{AB})&= -\Tr[\psi_{AB}\log\sigma_{AB}] \\
   &= -\Tr[\sum_{ij}\lambda_i\lambda_j |s_i\rangle\langle s_j|\sum_k \log(\lambda_k^2)|s_k\rangle\langle s_k|]\\
    &=-\sum_{i,j,k}\delta_{ij}\delta_{ik} \lambda_i\lambda_j \log(\lambda_k^2) \\
    &=-\sum_k p_k \log p_k
\end{align}
where we have set $\lambda_k^2=p_k$ because each Schmidt coefficient is real. $\delta_{ij}$ are Kronecker deltas because we have chosen the basis $\{|s_k\rangle\}$ to be orthonormal. Note that $\sum_k p_k=1$.
In this case, the second term is nothing but $S(A)=S(B)$ which is the von Neumann entropy of a subsystem. 

Since we have chosen a particular stabilizer state $\sigma_{AB}$, this serves as an upper bound of the relative entropy of magic. Hence 
\begin{align}
M^{(NL)}_{R}(\rho_{AB})\leq M^{(NL)}_{R}(\psi_{AB})\leq S(A)=S(B).
\end{align}

\subsection{Proof of~\cref{th:smoothed}}\label{proofth3}

    Let $\rho_{AB}^{\epsilon}$ represent the state that minimizes the nonlocal magic.  Therefore $\sma{\rho_{AB}}=M_{RS}^{(NL)}(\rho^{\epsilon}_{AB})$.  Drawing from \cref{th:relstab}, we understand that: 
\begin{equation}
\begin{split}
    \sma{\rho_{AB}}\geq & S_{max}(\rho^{\epsilon}_A)-S(\rho^{\epsilon}_A)\\
    \geq & \min_{\Vert\chi-\rho_A\Vert<\epsilon}\left(S_{max}(\chi)-S(\chi)\right).
\end{split}
\end{equation}

On the right-hand side, our goal is to identify a state $\chi$ within the $\epsilon$-ball of $\rho_A$ that minimizes the difference between $S_{max}(\chi)$ and $S(\chi)$. Interestingly, the state that minimizes this difference also reduces $S_{max}(\chi)$ to its lowest value $S_{max}^{\epsilon}$.  To illustrate, denote $\chi_A^{\epsilon}$ as the state that minimizes $S_{max}$ within the $\epsilon$-ball. Then consider increasing $S_{max}$ by modifying one eigenvalue of $\chi_A^{\epsilon}$ from zero to $\delta$.  This adjustment results in an increase  $\Delta S_{max}=e^{-S_{max}}$, while the change in entropy is capped at  $\Delta S\leq \delta\abs{\log\delta}$. Such a modification invariably elevates the entropy gap, i.e. $\Delta (S_{max}-S) \geq e^{-S_{max}}-\delta\abs{\log\delta}>0$, since $\delta$ can be arbitrarily small. 

To evaluate the von Neumann entropy of the state  $\chi_A^{\epsilon}$, as a modification from $\rho_A$ by dropping some eigenvalues whose total contribution to the trace is smaller than $\epsilon$. Let’s denote their contribution to the von Neumann entropy as  $S_{\epsilon}$. Then the entropy of the new state $\chi_A^{\epsilon}$ is given by $S(\chi_A^{\epsilon})= \frac{S(\rho_A)-S_{\epsilon}}{1-\epsilon}\leq \frac{S(\rho_A)}{1-\epsilon}$. Therefore, we get the following inequality:

\begin{equation}
    \begin{split}
         \sma{\rho_{AB}}\geq & S_{max}(\chi_A^{\epsilon})-S(\chi_A^{\epsilon})\\ 
        \geq & S_{max}^{\epsilon}(\rho_A)-(1-\epsilon)^{-1}S(\rho_A).
    \end{split}
\end{equation}

Regarding the upper bound, since $\chi_A^{\epsilon}$ minimizes the maximal entropy, it satisfies the following condition:
\begin{equation}
    S_{max}^{\epsilon}(\rho_A)=S_{max}(\chi_A^{\epsilon}).
\end{equation}

While this condition specifies the spectrum of $\chi_A^{\epsilon}$, we retain the flexibility to select a purification  $\chi_{AB}^{\epsilon}$, ensuring its deviation from $\rho_{AB}$ remains within an $\epsilon$ bound. Consequently, the process of minimizing the nonlocal magic leads us to the following inequality: 
\begin{equation}
\begin{split}
    \sma{\rho_{AB}}\leq & M_{RS}^{(NL)}(\chi_{AB}^{\epsilon})\\
    \leq &(\log d) \lceil \log_d\rank{\chi_{AB}^{\epsilon}} \rceil\\
    =&\log d \lceil   S_{max}^{\epsilon}(A)/\log d  \rceil.
\end{split}
\end{equation}
where the second step is a result from~\cref{th:relstab}. 


\section{Estimate by stabilizer R\'enyi entropy}\label{app:estimate}

\subsection{Proof of~\cref{thm:nlSRE}}

In this section, we provide an estimation of the second stabilizer R\'enyi entropy measure of the nonlocal magic. It is defined in \cite{stabrenyi} as the second R\'enyi entropy of a probability distribution, $p_a=\frac{1}{d}|\langle\psi|P_a|\psi\rangle|^2$, over all the Pauli-string basis $P_a$.
\begin{equation}
    \mathcal{M}_2(\ket{\psi}):=-\log(\sum_a p_a^2)-\log{d}.
\end{equation}

Given the entanglement spectrum $\{\lambda_i\}$, we construct a state $|\psi'\rangle$ with small local magic,   
\begin{equation}
    |\psi'\rangle_{AB} = \sum_{i=0}^{r-1}\sqrt{\lambda_i}|s_i\rangle_A|s_i\rangle_B. 
\end{equation}
where the rank $r$ is taken to be $2^n$ for integer $n$. The Pauli operators on the Hilbert space $\mathcal{H}_{AB}=\mathcal{H}_A\otimes\mathcal{H}_B$ can be  factorized as product of Pauli operators on $\mathcal{H}_A$ and $\mathcal{H}_B$ respectively, $P^{ab}=P^a\otimes P^b$. We denote their matrix elements as $P^{a,b}_{ij}:=\bra{s_i}P^{a,b}\ket{s_j}$, and compute the magic measure $\mathcal{M}_2$ as follows, 
\begin{equation}
\begin{split}
    &\mathcal{M}_2(\ket{\psi'})=-\log\left(\sum_{a=1}^{r^2}\sum_{b=1}^{r^2}\left|\sum_{i,j=0}^{r-1}\sqrt{\lambda_i}\sqrt{\lambda_j}P^a_{ij}P^b_{ij}\right|^4\right).
\end{split}
\end{equation}

The result is complicated and depends on specific choice of the basis $|s_i\rangle$'s. We simplify the analysis by assuming that the orthonormal basis $|s_i\rangle$'s are common eigenstates of a stabilizer group $\mathcal{S}=\{S_1,S_2,\cdots,S_n\}$.  This condition allows us to write the Pauli matrices $P^a_{ij}$ in computational basis. Substituting the matrix representation  of Pauli operators, we find that 

\begin{equation}\label{M2Estimate}
\begin{split}
    \mathcal{M}_2=&-\log\left(\sqrt{\lambda_{i_1}}\sqrt{\lambda_{i_2}}\sqrt{\lambda_{i_3}}\sqrt{\lambda_{i_4}}\sqrt{\lambda_{i_5}}\sqrt{\lambda_{i_6}}\sqrt{\lambda_{i_7}}\sqrt{\lambda_{i_8}}\right.\\
    &\left.\times (\sum_a P^a_{i_1i_2}P^a_{i_3i_4}P^a_{i_5i_6}P^a_{i_7i_8})^2\right)\\
    =&-\log\left(\sum_{i_1,i_2,i_3,i_4=0}^{r-1}\sqrt{\lambda_{i_1}\lambda_{i_2}\lambda_{i_3}\lambda_{i_4}\lambda_{i_3\wedge i_2\wedge i_1}}\right.\\
    &\left.\times \sqrt{\lambda_{i_4\wedge i_2\wedge i_1}\lambda_{i_1\wedge i_3\wedge i_4}\lambda_{i_2\wedge i_3\wedge i_4}}\right).
\end{split}
\end{equation}
where $\wedge$ denotes the bitwise $\XOR$ operation. This expression depends on order of eigenvalues and takes minimum when the eigenvalues are ordered, $\lambda_i>\lambda_j$ for  $i<j$. If we take all the eigenvalues to be the same, then each term in the summation is equal to $\frac{1}{r^4}$. The number of terms is $r^4$ since we are summing over four indices. The argument is equal to 1 in this case. Therefore, the nonlocal stabilizer R\'enyi entropy vanishes when the spectrum is flat. 

\cref{M2Comparison} gives a comparison of the direct SRE calculation against the estimation given by  \eqref{M2Estimate}. As can be observed in the plots, the approximation in  \eqref{M2Estimate} is correct up to numerical imprecision.
\begin{figure}[h]
\begin{center}
    \begin{subfigure}[b]{0.49\textwidth}
        \includegraphics[width=8cm]{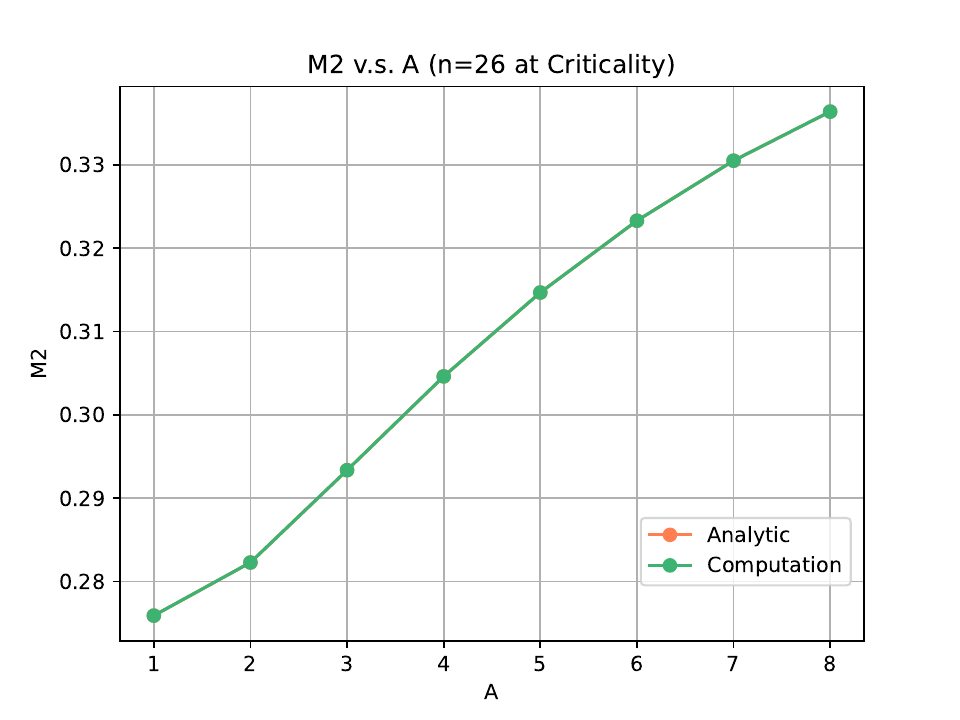}
        \caption{$\mathcal{M}_2$ computed from stabilizer R\'enyi entropy, compared to the estimation in \eqref{M2Estimate}.}
    \end{subfigure}
    \hfill
    \begin{subfigure}{0.49\textwidth}
        \includegraphics[width=8cm]{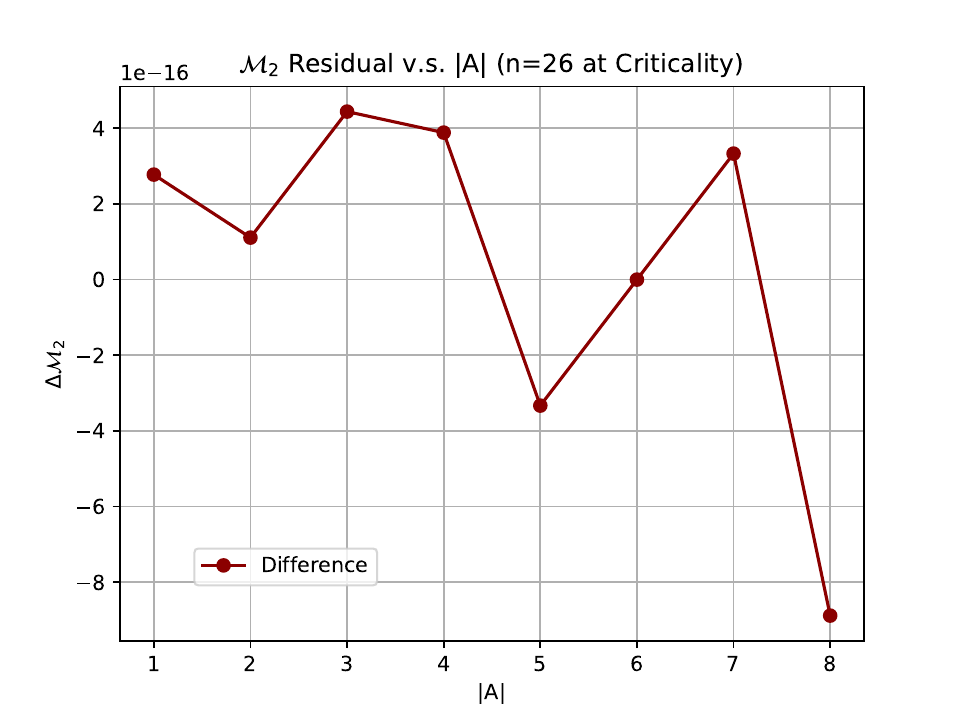}
    \caption{Residual from computational and analytic calculations of $\mathcal{M}_2$, accurate to one part in $10^{16}$.}
    \end{subfigure}
    \caption{}
    \label{M2Comparison}
\end{center}
\end{figure}
%


We can derive an upper bound for $\mathcal{M}_2$, by averaging over the permutations of eigenvalues, this gives us the expression,
\begin{equation}\label{M2UpperBound}
\begin{split}
    \mathcal{M}_2\leq\overline{\mathcal{M}}_2=-\log\left(\sum_{i=0}^{r-1}\lambda_i^4+7\sum_{0\leq i\neq j\leq r-1}\lambda_i^2\lambda_j^2\right.\\
    +\frac{7}{r-3}\sum_{0\leq i\neq j\neq k\neq l\leq r-1}\lambda_i\lambda_j\lambda_k\lambda_l\\
    \left. +\frac{\sum_{0\leq i_1\neq i_2\neq \cdots\neq i_8\leq r-1}\prod_{a=1}^8\sqrt{\lambda_{i_a}}}{(r-3)(r-5)(r-6)(r-7)}\right).
\end{split}
\end{equation}

In \eqref{M2UpperBound}, the sum inside the logarithm is taken over products of distinct eigenvalues. Computing this sum explicitly, and expressing the result in terms of different R\'enyi entropies $S_{\alpha}$, we obtain
\begin{equation}\label{eq:aveM}
\begin{split}
    \overline{\mathcal{M}}_2=&-\log \left(7e^{-2S_2}-6e^{-3S_4}
    +7e^{-S_0}(1-6e^{-S_2}\right.\\
    &+8e^{-2S_3}+3e^{-2S_2})+ e^{-4S_0}(e^{4S_{1/2}}+105e^{-3S_4}\\
    &\left.-420e^{S_{1/2}}+\cdots)\right)\\
    =& -\log\left(7e^{-2S_2}-6e^{-3S_4}+e^{4S_{1/2}-4S_0}\right)+O(e^{-S_{1/2}}).
\end{split}
\end{equation}
The averaged magic  $\overline{\mathcal{M}}_2$ is a complicated combination of R\'enyi entropies, ranging from $S_{1/2}$ to $S_4$. However, in the large Hilbert dimension limit, where $S_{1/2}\gg 1$, the averaged magic $\overline{\mathcal{M}}_2$ simplifies to the final expression in \eqref{eq:aveM}. It provides a straightforward estimate of $\mathcal{M}_2$ based on a few Rényi entropy terms. 

To establish a rigorous bound for $\mathcal{M}_2$, we start with \eqref{M2UpperBound}, leading to:
\begin{equation}\label{eq:bound1}
\begin{split}
    \overline{\mathcal{M}}_2&\leq -\log\left(\sum_{i=0}^{r-1}\lambda_i^4+7\sum_{0\leq i\neq j\leq r-1}\lambda_i^2\lambda_j^2\right)\\
    &\leq -\log\left(\left(\sum_{i}\lambda_i^2\right)^2\right)=2S_2.
\end{split}
\end{equation}

This holds for all spectrum distributions. Rewriting \eqref{M2UpperBound} in terms of the entropy difference $\delta_{1/2}=S_0-S_{1/2}$, we obtain the following expansion; 
\begin{equation}
\begin{split}
     \overline{\mathcal{M}}_2=&-\log\left( e^{-4\delta_{1/2}}+e^{-S_0}(7+21e^{-4\delta_{1/2}}-28e^{-3\delta_{1/2}})\right.\\
     &\left.+O(e^{-2S_0})\right)\leq 4\delta_{1/2}.
\end{split}
\end{equation}

Note that the coefficient associated with $e^{-S_0}$ in the expansion remains non-negative for any value of  $\delta_{1/2}$ and vanishes when $\delta_{1/2}=0$. Verifying that these coefficients are non-negative for every order of  $e^{-S_0}$ supports the inequality. Finally, combining this with the previously established bound finishes our proof that:
\begin{equation}
    \mathcal{M}_2(\{\lambda_i\})\leq \overline{\mathcal{M}}_2\leq \min\{2S_2,4(S_0-S_{1/2})\}.
\end{equation}

\subsection{Proof of~\cref{pp:branebound}}\label{app:branebound}

Let $\ket{\phi}$ denotes an entangled pair of qubits, with the entanglement spectrum given by $\{\lambda,1-\lambda\}$. We show that the nonlocal stabilizer R\'enyi entropy $\mathcal{M}_2(\lambda)$ of $\ket{\phi}$ is bounded by the nonflatness $\partial_n\tilde{S}_n$. 

From \eqref{eq:analyticalM}, we find that $\mathcal{M}_2(\lambda)$ is equal to, 
\begin{equation}
    \mathcal{M}_2(\lambda)=-\log\left(1-4\lambda+20\lambda^2-32\lambda^3+16\lambda^4\right).
\end{equation}

By definition  \eqref{eqn:holononflat}, the nonflatness is 
\begin{equation}
    -\partial_n\tilde{S}_n=n\frac{\lambda^n(1-\lambda)^n\left(\log\frac{\lambda}{1-\lambda}\right)^2}{\left(\lambda^n+(1-\lambda)^n\right)^2}.
\end{equation}

Both functions are zero at $\lambda=0,\ \frac{1}{2},\ 1$. So let us make a Taylor expansion around these value. Around $\lambda=\frac{1}{2}$, we have that 

\begin{equation}
    \begin{split}
    \mathcal{M}_2(\lambda)&=4(\lambda-1/2)^2-8(\lambda-1/2)^4+O((\lambda-1/2)^5)\\
    -\partial_n\tilde{S}_n\vert_{n=1}&=4(\lambda-1/2)^2-\frac{16}{3}(\lambda-1/2)^4+O((\lambda-1/2)^5)\\
    -\frac{1}{2}\partial_n\tilde{S}_n\vert_{n=2}&=4(\lambda-1/2)^2-\frac{160}{3}(\lambda-1/2)^4+O((\lambda-1/2)^5).
\end{split}
\end{equation}

Therefore for $\lambda$ close to $1/2$, the following inequality holds:
\begin{equation}
   -\frac{1}{2}\partial_n\tilde{S}_n\vert_{n=2} \leq \mathcal{M}_2(\lambda)\leq -\partial_n\tilde{S}_n\vert_{n=1}.
\end{equation}

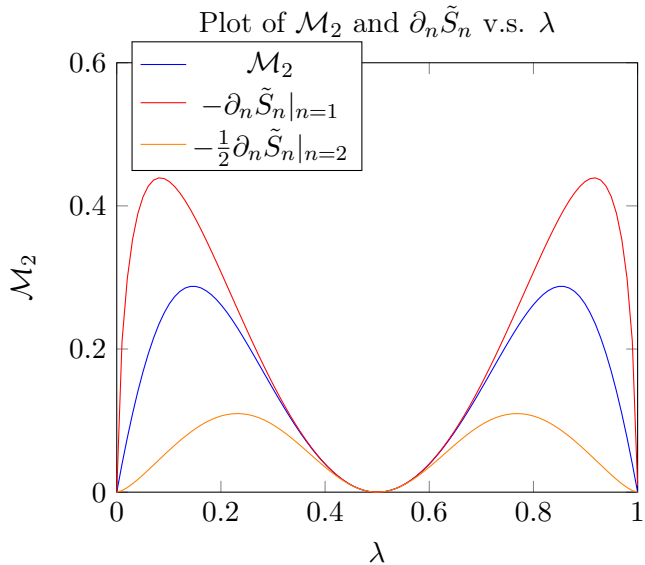
\begin{figure}
    \centering
    \begin{tikzpicture}
    \begin{axis}[
        title={Plot of $\mathcal{M}_2$ and $\partial_n\tilde{S}_n$ vs $\lambda$},
        xlabel={$\lambda$},
        ylabel={$\mathcal{M}_2$},
        xmin=0, xmax=1,
        ymin=0, ymax=0.6,
        legend style={at={(0.25,1.05)},anchor=north},
        grid style=dashed,
    ]
    
    \addplot[
        domain=0:1, 
        samples=100, 
        color=blue,
    ]{-ln(1-4*x+20*x^2-32*x^3+16*x^4)};
    \addlegendentry{$\mathcal{M}_2$}
    
    \addplot[
        domain=0:1, 
        samples=100, 
        color=red,
    ]{x*(1-x)*(ln(x/(1-x)))^2};
    \addlegendentry{$-\partial_n\tilde{S}_n\vert_{n=1}$}

    \addplot[
        domain=0:1, 
        samples=100, 
        color=orange,
    ]{x^2*(1-x)^2*(ln(x/(1-x)))^2/(x^2+(1-x)^2)^2};
    \addlegendentry{$-\frac{1}{2}\partial_n\tilde{S}_n\vert_{n=2}$}
    
    \end{axis}
    \end{tikzpicture}
    \caption{The nonlocal stabilizer R\'enyi entropy $\mathcal{M}_2$ is bounded by the antiflatness $\partial_n\tilde{S}_n$ }
    \label{fig:singlequbit}
\end{figure}

Similarly, one can show that this inequality holds for $\lambda$ close to $0$ and $1$, where the functions are,

\begin{equation}
\begin{split}
    \mathcal{M}_2(\lambda)&=4\lambda-12\lambda^2+O(\lambda^3)\\
    -\partial_n\tilde{S}_n\vert_{n=1}&=\lambda\log^2 \lambda+(2\log \lambda-\log^2 \lambda)\lambda^2+O(\lambda^3)\\
    -\frac{1}{2}\partial_n\tilde{S}_n\vert_{n=2}&=\lambda^2\log^2 \lambda+2(\log \lambda+\log^2 \lambda)\lambda^3+O(\lambda^4).
\end{split}
\end{equation}

For other value of $\lambda$, we justify this inequality by the plot in \cref{fig:singlequbit}.

Both the stabilizer R\'enyi entropy and  antiflatness are additive. Therefore for state $\ket{\psi}$ that can be distilled into product of entangled pairs $U_A\otimes U_B\ket{\psi}_{AB}=\otimes_{i=1}^k\ket{\phi}_{a_ib_i}$, we have,
\begin{equation}\label{eq:branebound}
    \frac{1}{2}\left\vert\frac{\partial_n\mathcal{A}_n|_{n=2}}{4G}(\ket{\psi}_{AB})\right\vert\leq\mathcal{M}_2(\ket{\psi}_{AB})\leq\left\vert\frac{\partial_n\mathcal{A}_n|_{n=1}}{4G}(\ket{\psi}_{AB})\right\vert.
\end{equation}

\subsection{Distillation of Matrix Product State}\label{app:MPS}
We further elaborate our discussions from~\cref{section:MERA}. Building on the MERA representation of CFT, we transform the state on the boundary of the past light-cone, $\partial A$, into a MPS using local unitaries, as defined in \eqref{eq:MPS} and illustrated in \cref{fig:MPS}. In this section, we further contend that this MPS state can approximately be distilled into a tensor product of entangled pairs:
\begin{equation}
    \ket{\chi}_{AB}\approx U_A\otimes U_B\left(\otimes_{i=1}^k \ket{\phi_i}_{a_ib_i}\right).
\end{equation}
where $k$ is the size of MPS state. it is clear that this approximation does not hold in general due to the disparity in the number of free parameters between the most general entanglement spectrum (contains $2^{k-1}$ parameters) and that of the tensor product of entangled pairs ($k$ parameters). However, for translationally invariant MPS states characterized by short correlation lengths, this approximation is valid.

\begin{figure}
    \centering
    \begin{tikzpicture}
    \begin{axis}[
        title={Plot of $\log\epsilon_k$ vs $k$},
        xlabel={$k$},
        ylabel={$\log{\epsilon_k}$},
        xmin=3, xmax=12,
        ymin=-6, ymax=0,
        xtick={4,5,6,7,8,9,10,11},
        ytick={-6,-5,-4,-3,-2,-1,0},
        legend pos=north east,
        grid={both},
    ]
    
    \addplot[
        color=red,
        mark=square,
        ]
        coordinates {
        (4, -2.1892564076870427)
        (5, -2.24431618487007)
        (6, -2.5383074265151158)
        (7, -2.645075401940822)
        (8, -2.9187712324178627)
        (9, -3.101092789211817)
        (10, -3.3242363405260273)
        (11, -3.506557897319982)
        };
        \addlegendentry{Identical EPR}
    
    \addplot[
        color=blue,
        mark=o,
        ]
        coordinates {
        (4, -3.3524072174927233)
        (5, -3.619353391465326)
        (6, -4.074541934925921)
        (7, -4.509860006183766)
        (8, -4.815891217303744)
        (9, -5.149897361429764)
        (10, -5.322610059117081)
        (11, -5.572754212249797)
        };
        \addlegendentry{Non-Identical EPR}
    
    \end{axis}
    \end{tikzpicture}
    \caption{Scaling of error with system size.}
    \label{fig:error}
\end{figure}
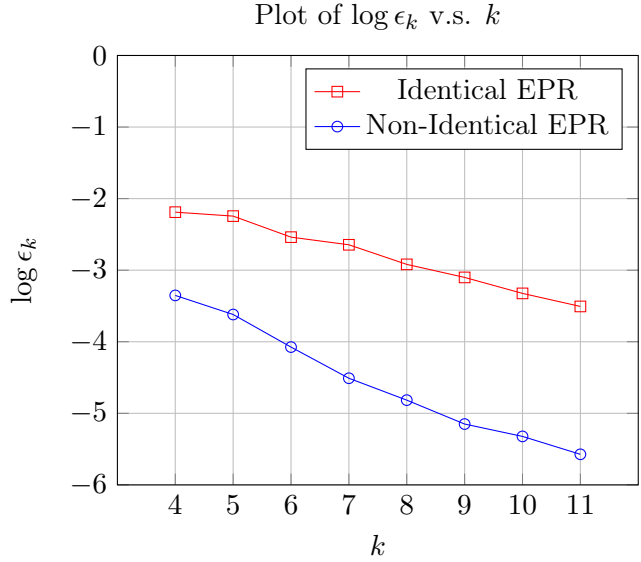

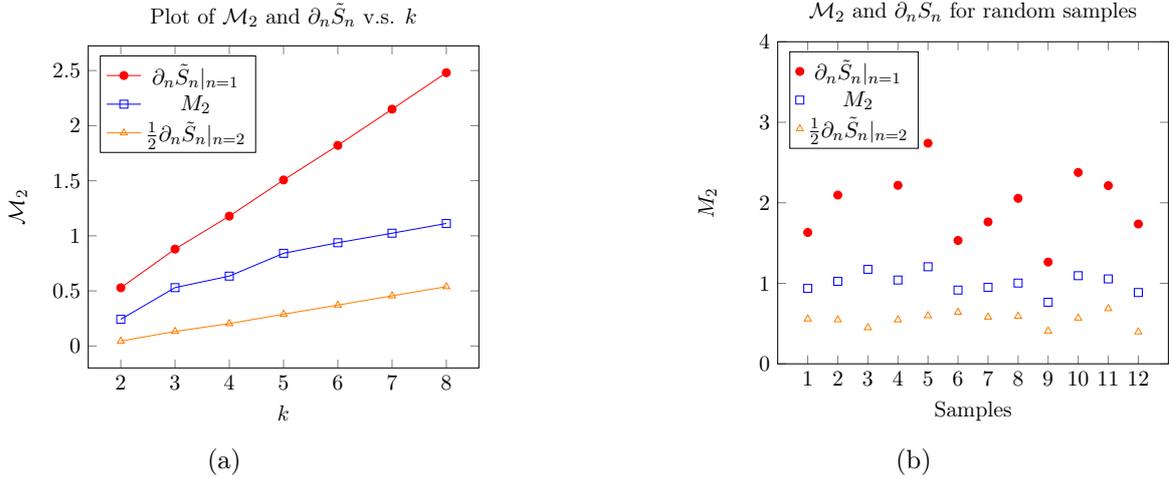
\begin{figure}
    \centering
    \begin{subfigure}[b]{0.4\textwidth}
           \scalebox{0.75}{
        \begin{tikzpicture}
        \begin{axis}[
            title={Plot of $\mathcal{M}_2$ and $\partial_n\tilde{S}_n$ vs $k$},
            xlabel={$k$},
            ylabel={$\mathcal{M}_2$},
            legend pos=north west,
            grid style=dashed,
        ]

        \addplot[
            color=red,
            mark=*,
            ]
            coordinates {
            (2, 0.5284476869093326)
            (3, 0.8796597152606531)
            (4, 1.1786243687213358)
            (5, 1.5067020734480323)
            (6, 1.8209208299973332)
            (7, 2.149153547543458)
            (8, 2.4804444445207428)
            };
            \addlegendentry{$\partial_n \tilde{S}_n\vert_{n=1}$}

            \addplot[
            color=blue,
            mark=square,
            ]
            coordinates {
            (2, 0.2420997928053625)
            (3, 0.5296291745870216)
            (4, 0.6341573657488669)
            (5, 0.8412419311548289)
            (6, 0.9369987551747572)
            (7, 1.0234907616064521)
            (8, 1.1120380681237716)
            };
            \addlegendentry{$M_2$}
        
        \addplot[
            color=orange,
            mark=triangle,
            ]
            coordinates {
            (2, 0.044403073081369274)
            (3, 0.13215785067590613)
            (4, 0.2031747844453768)
            (5, 0.28858671321308377)
            (6, 0.3708788831847463)
            (7, 0.4552538910164904)
            (8, 0.5382145816013133)
            };
            \addlegendentry{$\frac{1}{2}\partial_n \tilde{S}_n\vert_{n=2}$}
        \end{axis}
        \end{tikzpicture}
           }
        \caption{}
    \end{subfigure}
    \hfill
    \begin{subfigure}[b]{0.4\textwidth}
        \centering
        \scalebox{0.75}{
        \begin{tikzpicture}
            \begin{axis}[
                title={$\mathcal{M}_2$ and $\partial_nS_n$ for random samples},
                xlabel={Samples},
                ylabel={$M_2$},
                xmin=0, xmax=13,
                ymin=0, 
                ymax=4, 
                xtick={1,2,...,12},
                ytick={0,1,...,4}, 
                legend pos=north west,
                grid style=dashed,
                scatter/classes={
                    a={mark=*,red},
                    b={mark=square,blue},
                    c={mark=triangle,orange}
                }
            ]
            
            \addplot[scatter,only marks,scatter src=explicit symbolic]
                coordinates {
                (1,1.6313001268796445)[a]
                (2,2.0960085432431907)[a]
                (3,2.8261617899455858)[a]
                (4,2.21701795524457)[a]
                (5,2.740026056362322)[a]
                (6,1.5321860145108637)[a]
                (7,1.7622913869001764)[a]
                (8,2.055347541498999)[a]
                (9,1.2638202464353943)[a]
                (10,2.3765282087280477)[a]
                (11,2.2131910133599826)[a]
                (12,1.7362921659012711)[a]
                };
                \addlegendentry{$\partial_n \tilde{S}_n\vert_{n=1}$}

            \addplot[scatter,only marks,scatter src=explicit symbolic]
                coordinates {
                (1,0.9368407085714446)[b]
                (2,1.0243860301630408)[b]
                (3,1.1735551691638577)[b]
                (4,1.0407553593874668)[b]
                (5,1.2056303937669703)[b]
                (6,0.9159318631290625)[b]
                (7,0.9496265212961039)[b]
                (8,1.0030008626048081)[b]
                (9,0.7647095619498042)[b]
                (10,1.0945398160220088)[b]
                (11,1.0546323613045863)[b]
                (12,0.8853456721851094)[b]
                };
                \addlegendentry{$M_2$}
            

            \addplot[scatter,only marks,scatter src=explicit symbolic]
                coordinates {
                (1,0.5567483087609976)[c]
                (2,0.54726892014064)[c]
                (3,0.44819896725805314)[c]
                (4,0.5455860132789314)[c]
                (5,0.597299095385845)[c]
                (6,0.6400977858963219)[c]
                (7,0.5802832928523738)[c]
                (8,0.5899668363746525)[c]
                (9,0.40702575591041695)[c]
                (10,0.5677481440401517)[c]
                (11,0.6851334147660781)[c]
                (12,0.39684244389082163)[c]
                };
                \addlegendentry{$\frac{1}{2}\partial_n \tilde{S}_n\vert_{n=2}$}
            \end{axis}
            \end{tikzpicture}
        }
        \caption{}
    \end{subfigure}
    \caption{(a) Scaling of $\mathcal{M}_2$ and $\partial_n\tilde{S}_n$ with state size $k$ for a particular sample of random matrix in MPS. (b) For randomly generated samples of MPS states with a fixed size $k=7$, $\mathcal{M}_2$ is bounded by antiflatness $\partial_n\tilde{S}_n$.  }
    \label{fig:mpsmagic}
\end{figure}

To substantiate this approximation, we simulate several MPS states using $k$ number of identical random matrices to construct the reduced state $\rho_A=\tr_B(\ket{\chi}\bra{\chi})$ and evaluate its entanglement spectrum. We then approximate this spectrum by fitting it to the tensor product of individual entangled pair spectra:
\begin{equation}
    \min_{\{\lambda_i\}}\left\vert\mathrm{Spec}(\rho_A)- \bigotimes_{i=1}^k \begin{pmatrix}
        \lambda_i & 0\\
        0 & 1-\lambda_i
    \end{pmatrix}\right\vert=\epsilon_k
\end{equation}
where $\epsilon_k$ quantifies the approximation error. Our numerical analysis up to $k=11$ reveals an exponential decrease in $\epsilon_k$ with increasing $k$. We present two distinct scenarios in \cref{fig:error}:  In the first scenario, we require all EPR pairs in the tensor product to be identical, yielding an error trend of $\epsilon_k\sim 0.1 \times 1.2^{-k}$. In the second scenario, we relax this constraint, allowing for variability among the EPR pairs, which results in a more pronounced error reduction, following $\epsilon_k\sim 0.05 \times 1.4^{-k}$.

With the distillation assumption justified we expect the inequality~\eqref{eq:branebound} to be true for general MPS state and therefore for a CFT. We plot the magic and the R\'enyi entropy (dual to brane area) for a set of randomly generated samples of MPS states in \cref{fig:mpsmagic} and verify the validity of the bound \eqref{eq:branebound}.

\section{Validity of various bound for magic}\label{app:bound}
In the main text, we introduced several approximations for nonlocal magic, noting its proportional relationship to antiflatness in certain regimes and its closeness to entropy in others. This section delineates the conditions under which these approximations hold true.
\subsection*{Near-flat limit}
 We begin by examining the approximation between nonlocal magic and antiflatness, specifically
\begin{equation}
    \mathcal{M}_2(|\psi\rangle_{AB})\approx \frac{\mathcal{F}(\rho_A)}{\pur ^2(\rho_A)},
\end{equation}

which is applicable primarily in the near-flat limit of the entanglement spectrum. This is because the LHS is additive and scales linearly with $n$, while the RHS can be expressed as
\begin{equation}
    \frac{\mathcal{F}(\rho_A)}{\pur^2(\rho_A)}=\frac{\tr(\rho_A^3)-\tr(\rho_A^2)^2}{\tr(\rho_A^2)^2}=e^{2(S_2(A)-S_3(A))}-1,
\end{equation}
which becomes additive only at the linear order of the Taylor expansion in the entropy difference. Hence, the condition  $S_2(A)-S_3(A)\ll \frac{1}{2}$ must be met, indicating an almost flat spectrum or very weak entanglement.

Additionally, this regime aligns with where the two antiflatness measures defined previously converge, particularly when:
\begin{equation}\label{eq:E3}
    \langle(\delta\log\rho)^2\rangle_{\rho}\approx \frac{\langle(\delta\rho)^2\rangle_{\rho}}{\langle\rho\rangle_{\rho}^2}
\end{equation}
{\color{black}To demonstrate this approximation,  we rewrite the LHS in terms of the spectrum, 
\begin{equation}
\begin{split}
    \langle(\delta\log\rho)^2\rangle_{\rho}=&\sum_i p_i \left(\log\lambda_i-\overline{\log\lambda}\right)^2\\
\end{split}
\end{equation}
where $p_i=\lambda_i$ is the probability distribution given by the density matrix, and $\overline{x}=\sum_i p_i x$ is the average over this distribution. When the spectrum is nearly flat, we make the Taylor expansion, 
\begin{equation}
    \begin{split}
        \log\lambda_i=&\log(\bar \lambda+\delta \lambda_i)\\
        =&\log\bar \lambda+\frac{\delta \lambda_i}{\bar{\lambda}}+\frac{1}{2}(\frac{\delta \lambda_i}{\bar{\lambda}})^2+O((\frac{\delta{\lambda_i}}{\bar{\lambda}})^3),\\
        \overline{\log \lambda}=& \sum_ip_i \log(\bar{\lambda}+\delta \lambda_i)\\
        =&\log\bar \lambda+\frac{1}{2}\sum_i p_i(\frac{\delta \lambda_i}{\bar \lambda})^2+O(\frac{\overline{\delta \lambda^3}}{\bar \lambda^3}). 
    \end{split}
\end{equation}
Hence the LHS of \eqref{eq:E3} becomes 
\begin{equation}
\begin{split}
    \langle(\delta\log\rho)^2\rangle_{\rho}=&\sum_i p_i\left(\frac{\delta \lambda_i}{\bar \lambda}+\frac{1}{2}(\frac{\delta \lambda_i^2-\overline{\delta \lambda^2}}{\bar{\lambda}^2})+O((\frac{\delta{\lambda_i}}{\bar{\lambda}})^3)\right)^2\\
    =& \frac{\overline{\delta \lambda^2}}{\bar\lambda^2}+O(\frac{\overline{\delta \lambda^3}}{\bar \lambda^3})\\
    =& \frac{\langle(\delta\rho)^2\rangle_{\rho}}{\langle\rho\rangle_{\rho}^2}+O\left(\frac{\langle(\delta\rho)^3\rangle_{\rho}}{\langle\rho\rangle_{\rho}^3}\right)
    \end{split}
\end{equation}
Therefore this approximation is valid under the condition $\overline{\delta\lambda^3}\ll \bar{\lambda}\overline{\delta\lambda^2}$. This also corresponds to near-flat regime (small $\delta\lambda_i$) or very weak entanglement (large $\bar\lambda$). }

\subsection*{Far from flat limit}
In contrast, for quantum states with a far-from-flat entanglement spectrum, where the entropy differences across Rényi indices are comparable to the entropy itself, the scenario changes. Referring to Theorem \ref{thm:nlSRE}, the upper bound for the second stabilizer Rényi entropy measure of nonlocal magic is

\begin{equation}
\begin{split}
    \mathcal{M}_2^{NL}(|\psi\rangle_{AB})&\leq\mathcal{M}_2(\{\lambda_i\})\\
    &\leq\min\{2S_2(A),4(S_0(A)-S_{1/2}(A))\},
\end{split}
\end{equation} 
indicating a transitional crossover around $S_0(A)-S_{1/2}(A)\sim \frac{1}{2}S_2(A)$. Beyond this point, nonlocal magic transitions from being proportional to antiflatness to being proportional to entropy. Our numerical analyses within the Ising model confirm this transition: in the disordered phase and at critical points, nonlocal magic correlates with entropy $S$ both when varying the model parameter and the subsystem size. However, in the symmetry-breaking phase (refer to \cref{app:symmbreak}), it deviates and becomes anticorrelated with entropy, as shown in \figref{fig:MvsA}.

\section{Supplemental results for Ising Model}
\subsection{Symmetry-breaking phase}\label{app:symmbreak}

 In the $g<0$ regime, the Ising model enters the symmetry-breaking phase in the thermodynamic limit. However, our analysis is conducted on a finite-size lattice, where the ground state remains symmetric to spin flipping. Heuristically, we can think of this ground state being approximated by something similar to the GHZ state:

 \begin{equation}\label{eq:symG}
     \ket{G}_{sym}\approx\frac{1}{\sqrt{2}}(\ket{00\cdots0}+\ket{11\cdots1}).
 \end{equation}

 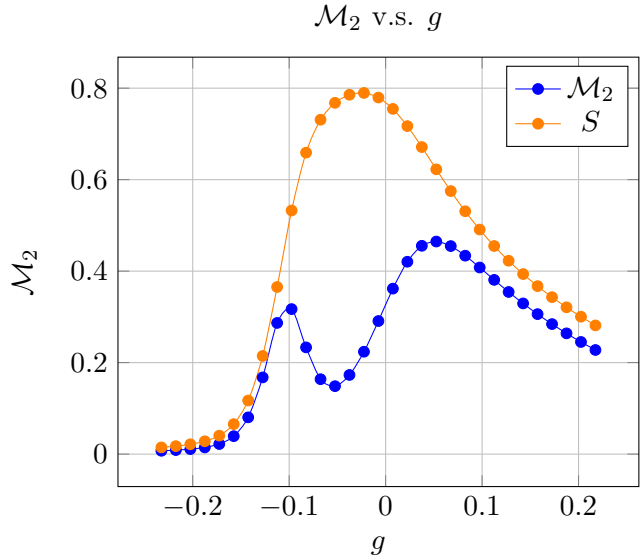
\begin{figure}
    \centering
    \begin{tikzpicture}
        \begin{axis}[
        xlabel={$g$},
        ylabel={$\mathcal{M}_2$},
        title={$\mathcal{M}_2$ vs $g$},
        grid={both}
    ]
    \addplot[color=blue,smooth,mark=*] coordinates {
        (-0.2324768366025517, 0.007223787236453515)
        (-0.2174768366025518, 0.008748730695846136)
        (-0.20247683660255178, 0.010926710694541172)
        (-0.18747683660255177, 0.01458468304470301)
        (-0.17247683660255175, 0.021989759114994827)
        (-0.15747683660255174, 0.03917264971677005)
        (-0.14247683660255173, 0.08035700169072464)
        (-0.1274768366025517, 0.16764680998652184)
        (-0.11247683660255181, 0.2867334739089084)
        (-0.09747683660255169, 0.3170054821476055)
        (-0.08247683660255178, 0.233293650708921)
        (-0.06747683660255177, 0.16375202920327683)
        (-0.05247683660255176, 0.14849097486137633)
        (-0.037476836602551744, 0.1730143829660695)
        (-0.02247683660255173, 0.2237686039807405)
        (-0.0074768366025518285, 0.2908969759535281)
        (0.007523163397448296, 0.3617439850129772)
        (0.022523163397448198, 0.42049487003948655)
        (0.03752316339744832, 0.45535876881380544)
        (0.052523163397448225, 0.4646698074563559)
        (0.06752316339744824, 0.4547522488064993)
        (0.08252316339744825, 0.43376417359303615)
        (0.09752316339744815, 0.40791779310947524)
        (0.11252316339744828, 0.38087234049479696)
        (0.12752316339744818, 0.3544456797479232)
        (0.1425231633974483, 0.32940432557212895)
        (0.1575231633974482, 0.30598887725715823)
        (0.17252316339744822, 0.28420249005417625)
        (0.18752316339744823, 0.2639536107301466)
        (0.20252316339744825, 0.24512150642745928)
        (0.21752316339744826, 0.22758407535400627)
    };    
    \addlegendentry{$\mathcal{M}_2$}
    \addplot[color=orange,mark=*,smooth] coordinates {
        (-0.2324768366025517, 0.01461178390556179)
        (-0.2174768366025518, 0.017416796403754247)
        (-0.20247683660255178, 0.0214297514406032)
        (-0.18747683660255177, 0.028014722238020845)
        (-0.17247683660255175, 0.04036083757912194)
        (-0.15747683660255174, 0.06549872277150799)
        (-0.14247683660255173, 0.11707699227960004)
        (-0.1274768366025517, 0.21456086879545458)
        (-0.11247683660255181, 0.3653251204975156)
        (-0.09747683660255169, 0.5325930140867744)
        (-0.08247683660255178, 0.6589732880629487)
        (-0.06747683660255177, 0.7310116119917656)
        (-0.05247683660255176, 0.7679080628052476)
        (-0.037476836602551744, 0.7855963300861846)
        (-0.02247683660255173, 0.7895360222061816)
        (-0.0074768366025518285, 0.7794869557462782)
        (0.007523163397448296, 0.75469489883342)
        (0.022523163397448198, 0.7170592299429965)
        (0.03752316339744832, 0.6712356927065845)
        (0.052523163397448225, 0.6225427177920325)
        (0.06752316339744824, 0.5750215914538588)
        (0.08252316339744825, 0.5308729056604539)
        (0.09752316339744815, 0.49086029922811086)
        (0.11252316339744828, 0.45492904778342713)
        (0.12752316339744818, 0.422669944398889)
        (0.1425231633974483, 0.3935800090837694)
        (0.1575231633974482, 0.3671815763943674)
        (0.17252316339744822, 0.3430640729680953)
        (0.18752316339744823, 0.32088993615140493)
        (0.20252316339744825, 0.3003866225713615)
        (0.21752316339744826, 0.28133492698065415)
    };
    \addlegendentry{$S$}
    \end{axis}
    \end{tikzpicture}
    \caption{Plot of $\mathcal{M}_2$ vs $g$, at $b=10^{-4}$ and $|A|=9$. }
    \label{fig:valley}
\end{figure}

\begin{figure}
    \centering
    \begin{subfigure}[b]{0.4\textwidth}
    \scalebox{0.75}{
    \begin{tikzpicture}
    \begin{axis}[
    title={$\mathcal{M}_2$ vs $|A|$},
    xlabel={$|A|$},
    ylabel={$\mathcal{M}_2$},
    legend style={at={(0.9,1.2)},anchor=north},
    grid=both,
    minor tick num=1,
]

\addplot coordinates {
    (1, 0.2689501026534531)
    (2, 0.24634756602268076)
    (3, 0.23742530624036945)
    (4, 0.2341664183399669)
    (5, 0.23313755401323172)
    (6, 0.23297365188896818)
    (7, 0.233092371527223)
    (8, 0.23323569080131712)
    (9, 0.233293650708921)
};
\addlegendentry{$g=-0.09$}

\addplot coordinates {
    (1, 0.24384767003162575)
    (2, 0.19876334120774347)
    (3, 0.1795866756827645)
    (4, 0.17091478801827334)
    (5, 0.16686418543136147)
    (6, 0.16499089767821182)
    (7, 0.16416477899486343)
    (8, 0.16383662079327974)
    (9, 0.16375202920327683)
};
\addlegendentry{$g=-0.08$}

\addplot coordinates {
    (1, 0.23839783569239717)
    (2, 0.18955310403593092)
    (3, 0.16809913606613852)
    (4, 0.15791302724907258)
    (5, 0.15286686130812485)
    (6, 0.15036294406760037)
    (7, 0.14916089721150316)
    (8, 0.14863627875794155)
    (9, 0.14849097486137633)
};
\addlegendentry{$g=-0.06$}

\addplot coordinates {
    (1, 0.24587523917318302)
    (2, 0.20549824561143945)
    (3, 0.18798522386416197)
    (4, 0.17986867395697714)
    (5, 0.17599510113906647)
    (6, 0.17418589317601396)
    (7, 0.17339509465078115)
    (8, 0.17309029627445577)
    (9, 0.1730143829660695)
};
\addlegendentry{$g=-0.04$}

\addplot coordinates {
    (1, 0.25826482031576664)
    (2, 0.23387389829982486)
    (3, 0.22518745256634734)
    (4, 0.2224731577645139)
    (5, 0.22206669463778722)
    (6, 0.22250574062437792)
    (7, 0.22312413998687008)
    (8, 0.22359615874074287)
    (9, 0.2237686039807405)
};
\addlegendentry{$g=-0.02$}

\end{axis}
    \end{tikzpicture}
    }
    \caption{}
    \label{fig:MvsAa}
    \end{subfigure}
    \hfill
    \begin{subfigure}[b]{0.4\textwidth}
    \scalebox{0.75}{
        \begin{tikzpicture}
        \begin{axis}[
    title={$S$ vs $|A|$},
    xlabel={$|A|$},
    ylabel={$S$},
    legend style={at={(0.85,0.44)}, anchor=north},
    grid=both,
    minor tick num=1,
]

\addplot coordinates {
    (1, 0.4882591491810362)
    (2, 0.5732661692916436)
    (3, 0.6110127980169322)
    (4, 0.6316082158129683)
    (5, 0.643830287493819)
    (6, 0.6513098074317467)
    (7, 0.6558042836506663)
    (8, 0.6582136549531553)
    (9, 0.6589732880629487)
};
\addlegendentry{$g=-0.09$}

\addplot coordinates {
    (1, 0.5251676197278689)
    (2, 0.6218593298388037)
    (3, 0.667228502555712)
    (4, 0.6933234776238961)
    (5, 0.7095691714738573)
    (6, 0.7199287515551955)
    (7, 0.7263602411156028)
    (8, 0.7298868121301564)
    (9, 0.7310116119917656)
};
\addlegendentry{$g=-0.08$}

\addplot coordinates {
    (1, 0.5315614343935706)
    (2, 0.6363808878074139)
    (3, 0.6881586158913452)
    (4, 0.7193994273394969)
    (5, 0.7396787045457957)
    (6, 0.7530676361495676)
    (7, 0.7616056520333332)
    (8, 0.7663733432572958)
    (9, 0.7679080628052476)
};
\addlegendentry{$g=-0.06$}

\addplot coordinates {
    (1, 0.5226859724979404)
    (2, 0.6334301601541248)
    (3, 0.6906057714327959)
    (4, 0.7265118349936612)
    (5, 0.7506177919230987)
    (6, 0.7669719090196478)
    (7, 0.7776168549095465)
    (8, 0.7836431182868497)
    (9, 0.7855963300861846)
};
\addlegendentry{$g=-0.04$}

\addplot coordinates {
    (1, 0.5059548186719761)
    (2, 0.6207548641952213)
    (3, 0.6820164453523454)
    (4, 0.7216323203848032)
    (5, 0.748869817389466)
    (6, 0.7676981101667162)
    (7, 0.7801247556788512)
    (8, 0.7872244076630824)
    (9, 0.7895360222061816)
};
\addlegendentry{$g=-0.02$}

\end{axis}
    \end{tikzpicture}
    }
    \caption{}
    \label{fig:MvsAb}
    \end{subfigure}
    \caption{(a)nonlocal magic $\mathcal{M}_2$ vs $|A|$. (b) Entropy $S$ vs $|A|$, at $b=10^{-5}$ and $g>-0.1$. }
    \label{fig:MvsA}
\end{figure}
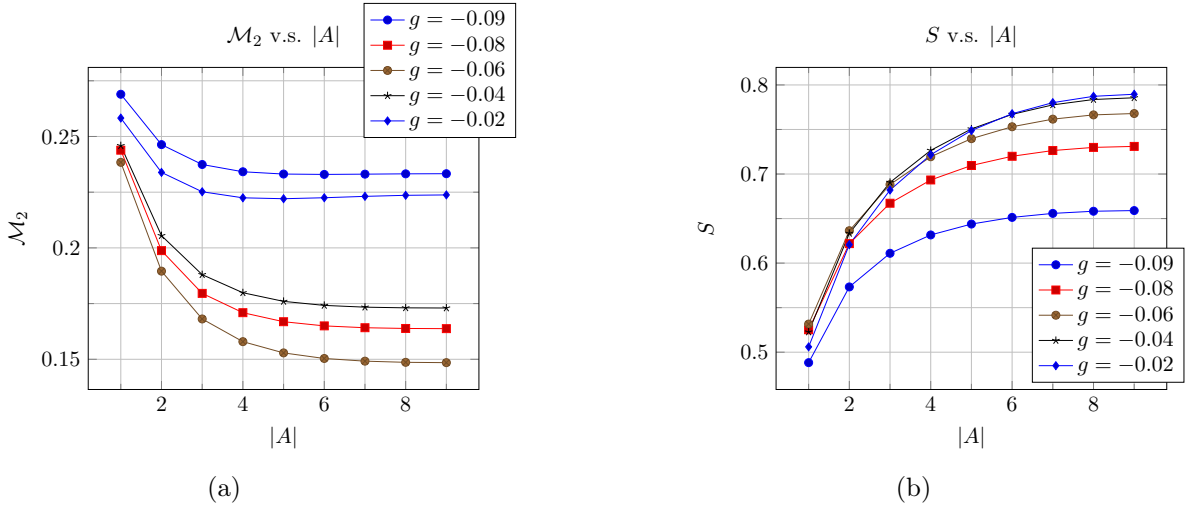

To approximate the true ground state achievable in the thermodynamic limit within our finite lattice model, we introduce a small bias field in the
$z$-direction: 
\ba
H=H_{\rm Ising}(g)+b\sum_i Z_i. 
\ea

As the bias $b$ increases,  the ground state transitions toward one of the two symmetry-broken states:
\ba
    &\ket{G}_{\uparrow}&=\ket{\uparrow\uparrow\cdots \uparrow}\\
    &\ket{G}_{\downarrow}&=\ket{\downarrow\downarrow\cdots \downarrow}.
\ea

Exploring how nonlocal magic $\mathcal{M}_2$ behaves as we adjust different parameters led to some fascinating results that are particularly noticeable when a nonzero bias field is applied. As shown in \cref{fig:valley}, a distinctive ``valley'' emerges in the  $\mathcal{M}_2$ plot within the $g<0$ regime.  We juxtapose entropy and nonlocal magic in our plots to underscore their divergent behaviors and the unique information conveyed by nonlocal magic.

This valley can be understood as arising from the competition between two types of ground states. Within the valley, the system's ground state approximates the symmetric GHZ-like state $\ket{G}_{sym}$, as defined in~\eqref{eq:symG}. In this region, nonlocal magic values are minimized because the reduced density matrix of  $\ket{G}_{sym}$ resembles that of a maximally mixed single-qubit state, leading to a flat spectrum and, consequently, lower $\mathcal{M}_2$ follows from \cref{th:srebound}. Additionally, we observe diminished $\mathcal{M}_2$ values in regions far from the critical point, where $|g|$ is sufficiently large, as indicated by the plateau beyond $g<-0.2$ in \cref{fig:valley}.  Here, the ground state transitions to a symmetry-broken state $\ket{G}_{\uparrow/\downarrow}$, which lacks nonlocal magic due to its tensor product structure.

Despite the low nonlocal magic values associated with both  $\ket{G}_{sym}$ and $\ket{G}_{\uparrow/\downarrow}$, the transition between these states has to past through a regime of nontrivial nonlocal magic. This occurs because continuous parameter changes cannot be approximated by discrete Clifford transformations, resulting in a notable increase in nonlocal magic.   The $\mathcal{M}_2$ measure captures this as a pronounced peak, delineating the transition between the two ground states near $g\sim -0.1$ in \cref{fig:valley}.  

An additional noteworthy aspect of nonlocal magic inside the valley is its counterintuitive decrease with increasing subregion size $|A|$, as depicted in  \cref{fig:MvsAa}.  This phenomenon is unique to the valley. In contrast, entropy consistently increases with $|A|$. This unusual trend in $\mathcal{M}_2$ is also linked to the proximity to the symmetric state $\ket{G}_{sym}$, which results in an almost flat entanglement spectrum within the valley.  Consequently, $\mathcal{M}_2$ aligns more closely with the entropy differential $S_0-S$  rather than the entropy itself, as discussed in \cref{section:estimate}, offering an explanation for the inverse relationship observed between $S$ and $\mathcal{M}_2$ in this region.  

it is also important to note that the competition between  $\ket{G}_{sym}$ and $\ket{G}_{\uparrow/\downarrow}$  is a manifestation of finite-size effects. As demonstrated in  \cref{fig:size},   the valley tends to diminish with increasing lattice size $n$. Specifically, when we set $b=10^{-4}$ (see \cref{fig:sizea}), the peak of nonlocal magic shifts closer to $g=0$ with larger lattice sizes. Similarly, with $g=-0.11$ (see \cref{fig:sizeb}), the peak moves toward $b=0$ as the lattice size expands. This suggests that the parameter space favoring the symmetric state narrows in both dimensions with increasing lattice size.


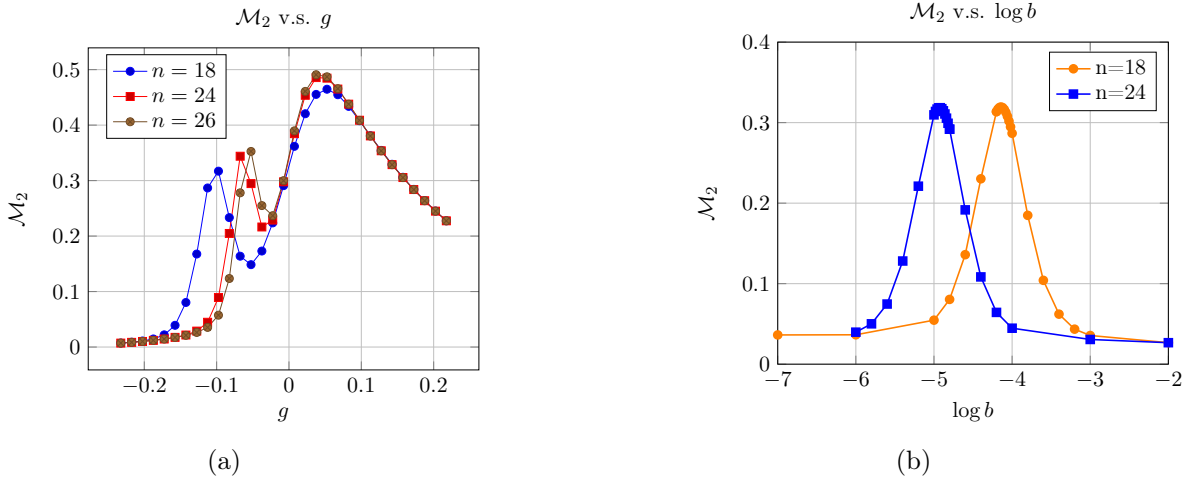
\begin{figure}
    \centering
    \begin{subfigure}[b]{0.4\textwidth}
        \scalebox{0.75}{
        \begin{tikzpicture}
            \begin{axis}[
        xlabel={$g$}, 
        ylabel={$\mathcal{M}_2$},
        title={$\mathcal{M}_2$ vs $g$},
        legend style={at={(0.2,0.98)},anchor=north},
        grid={both}
        ]
    \addplot coordinates {
        (-0.2324768366025517, 0.007223787236453515)
        (-0.2174768366025518, 0.008748730695846136)
        (-0.20247683660255178, 0.010926710694541172)
        (-0.18747683660255177, 0.01458468304470301)
        (-0.17247683660255175, 0.021989759114994827)
        (-0.15747683660255174, 0.03917264971677005)
        (-0.14247683660255173, 0.08035700169072464)
        (-0.1274768366025517, 0.16764680998652184)
        (-0.11247683660255181, 0.2867334739089084)
        (-0.09747683660255169, 0.3170054821476055)
        (-0.08247683660255178, 0.233293650708921)
        (-0.06747683660255177, 0.16375202920327683)
        (-0.05247683660255176, 0.14849097486137633)
        (-0.037476836602551744, 0.1730143829660695)
        (-0.02247683660255173, 0.2237686039807405)
        (-0.0074768366025518285, 0.2908969759535281)
        (0.007523163397448296, 0.3617439850129772)
        (0.022523163397448198, 0.42049487003948655)
        (0.03752316339744832, 0.45535876881380544)
        (0.052523163397448225, 0.4646698074563559)
        (0.06752316339744824, 0.4547522488064993)
        (0.08252316339744825, 0.43376417359303615)
        (0.09752316339744815, 0.40791779310947524)
        (0.11252316339744828, 0.38087234049479696)
        (0.12752316339744818, 0.3544456797479232)
        (0.1425231633974483, 0.32940432557212895)
        (0.1575231633974482, 0.30598887725715823)
        (0.17252316339744822, 0.28420249005417625)
        (0.18752316339744823, 0.2639536107301466)
        (0.20252316339744825, 0.24512150642745928)
        (0.21752316339744826, 0.22758407535400627)
    };
    \addlegendentry{$n=18$}
    \addplot coordinates {
        (-0.2324768366025517, 0.007138323500646095)
        (-0.2174768366025518, 0.008478137420016225)
        (-0.20247683660255178, 0.010087538081067002)
        (-0.18747683660255177, 0.012036072476847123)
        (-0.17247683660255175, 0.014429732020143615)
        (-0.15747683660255174, 0.01747103174356087)
        (-0.14247683660255173, 0.021682284953683427)
        (-0.1274768366025517, 0.02875667639269592)
        (-0.11247683660255181, 0.04468008032379104)
        (-0.09747683660255169, 0.08927084067997512)
        (-0.08247683660255178, 0.20474773189791878)
        (-0.06747683660255177, 0.343935624133068)
        (-0.05247683660255176, 0.29458702358256)
        (-0.037476836602551744, 0.21639555782368375)
        (-0.02247683660255173, 0.2288576762236713)
        (-0.0074768366025518285, 0.29780205322798725)
        (0.007523163397448296, 0.3845718675314944)
        (0.022523163397448198, 0.4537579542896862)
        (0.03752316339744832, 0.4856460489390489)
        (0.052523163397448225, 0.4846369370554632)
        (0.06752316339744824, 0.4649908048348451)
        (0.08252316339744825, 0.43779072468821545)
        (0.09752316339744815, 0.4088117511954463)
        (0.11252316339744828, 0.38049969574410225)
        (0.12752316339744818, 0.35372882542412704)
        (0.1425231633974483, 0.3287227825545529)
        (0.1575231633974482, 0.30545833795560945)
        (0.17252316339744822, 0.2838286043166396)
        (0.18752316339744823, 0.26370581053604863)
        (0.20252316339744825, 0.2449641774289902)
        (0.21752316339744826, 0.22748741229171957)
    };
    \addlegendentry{$n=24$}
    \addplot coordinates {
        (-0.2324768366025517, 0.007138227912831164)
        (-0.2174768366025518, 0.008477694356282846)
        (-0.20247683660255178, 0.010085541655881018)
        (-0.18747683660255177, 0.012027317911429997)
        (-0.17247683660255175, 0.014392344398262721)
        (-0.15747683660255174, 0.01731547711723815)
        (-0.14247683660255173, 0.02105222614873403)
        (-0.1274768366025517, 0.026281070467837425)
        (-0.11247683660255181, 0.03536996261958408)
        (-0.09747683660255169, 0.05742683034275903)
        (-0.08247683660255178, 0.12352601687391683)
        (-0.06747683660255177, 0.2779640900307487)
        (-0.05247683660255176, 0.3525557793174458)
        (-0.037476836602551744, 0.25496987082659067)
        (-0.02247683660255173, 0.2367824080235811)
        (-0.0074768366025518285, 0.29989716620332707)
        (0.007523163397448296, 0.38971078542149157)
        (0.022523163397448198, 0.4607632643053709)
        (0.03752316339744832, 0.49084971817892564)
        (0.052523163397448225, 0.4871000147650576)
        (0.06752316339744824, 0.4656894698078343)
        (0.08252316339744825, 0.43774367409216186)
        (0.09752316339744815, 0.40856786113387106)
        (0.11252316339744828, 0.38026941416269877)
        (0.12752316339744818, 0.35356398349312906)
        (0.1425231633974483, 0.32861871489030337)
        (0.1575231633974482, 0.30539735413686264)
        (0.17252316339744822, 0.2837946474498429)
        (0.18752316339744823, 0.2636876197956447)
        (0.20252316339744825, 0.2449547347721947)
        (0.21752316339744826, 0.22748264215185654)
    };
    \addlegendentry{$n=26$}
    \end{axis}
        \end{tikzpicture}
        }
        \caption{}
        \label{fig:sizea}
    \end{subfigure}
    \hfill
    \begin{subfigure}[b]{0.4\textwidth}
        \scalebox{0.75}{
        \begin{tikzpicture}
\begin{axis}[
    title={$\mathcal{M}_2$ vs $\log b$},
    xlabel={$\log b$},
    ylabel={$\mathcal{M}_2$},
    xmin=-7, xmax=-2,
    ymin=0, ymax=0.4,
    legend pos=north east,
    grid={both},
    every axis plot/.append style={thick}
]

\addplot[
    color=orange,
    mark=*,
    ]
    coordinates {
    (-8, 0.035634345760970211646198095199)
    (-7, 0.03631833084924441)
    (-6, 0.03650516400246957)
    (-5.0, 0.05465961086097081)
    (-4.8, 0.08045085735915408)
    (-4.599999999999999, 0.13598974833468974)
    (-4.3999999999999995, 0.23023981318059344)
    (-4.2, 0.31365841803612904)
    (-4.199970640755866, 0.31366418738740776)
    (-4.179973576680279, 0.31694128531956933)
    (-4.159976512604692, 0.318873262956139)
    (-4.139979448529106, 0.31941765657095933)
    (-4.119982384453518, 0.3185608183049997)
    (-4.0999853203779315, 0.31631858715807953)
    (-4.079988256302346, 0.3127355487012461)
    (-4.059991192226759, 0.30788294744476635)
    (-4.039994128151172, 0.3018554297073823)
    (-4.019997064075586, 0.294766884738027)
    (-3.999999999999999, 0.2867457070623583)
    (-3.7999999999999994, 0.18484455356621748)
    (-3.599999999999999, 0.1041265640094637)
    (-3.3999999999999995, 0.062134224406794954)
    (-3.1999999999999997, 0.04349412266078626)
    (-2.9999999999999996, 0.03563434576097021)
    (-2, 0.02670719766200395)
    };
    \addlegendentry{n=18}

\addplot[
    color=blue,
    mark=square*,
    ]
    coordinates {
    (-5.999999999999999, 0.039754416471600426)
    (-5.799999999999999, 0.05011609228992938)
    (-5.6, 0.07469590800766968)
    (-5.4, 0.12813141910521825)
    (-5.199999999999999, 0.22107426054724227)
    (-5.0, 0.30964631505179396)
    (-4.979997073344623, 0.3138833533429611)
    (-4.959994146689245, 0.31681435786145246)
    (-4.9399912200338685, 0.31837814663203196)
    (-4.919988293378491, 0.3185412810675044)
    (-4.899985366723115, 0.3172996623181483)
    (-4.879982440067738, 0.3146787332854024)
    (-4.85997951341236, 0.31073226831681294)
    (-4.8399765867569835, 0.3055398559595673)
    (-4.819973660101606, 0.29920328721229517)
    (-4.8, 0.2918535976034281)
    (-4.79997073344623, 0.29184214223277216)
    (-4.599999999999999, 0.19165102401443654)
    (-4.3999999999999995, 0.10838340169117754)
    (-4.2, 0.06431979738533082)
    (-3.999999999999999, 0.04468314071476549)
    (-3, 0.030687732641793277)
    (-2, 0.026683527440224727)
    };
    \addlegendentry{n=24}
\end{axis}
\end{tikzpicture}
        }
        \caption{}
        \label{fig:sizeb}
    \end{subfigure}
    \caption{(a) Plot of $\mathcal{M}_2$ vs $g$, at $b=10^{-4}$.  The lattice size takes $n=18$, $n=24$ and $n=26$; (b) Plot of $\mathcal{M}_2$ vs $\log b$, at $g=-0.11$, with lattice size taking $n=18$ and $n=24$ }
    \label{fig:size}
\end{figure}



Expanding our analysis, \cref{fig:Mvsbrane} explores the nonlocal magic across a broader range of the bias field $b$. We find that beyond $b>0.01$ the valley disappears, and the  $g<0$ phase transitions to being governed by the symmetry-broken ground state $\ket{G}_{\uparrow/\downarrow}$. As $b$ decreases toward zero, the peak is pushed to the left where the valley widens, signifying the growing significance of the symmetric ground state $\ket{G}_{sym}$, which becomes dominant for all $g<0$ in the absence of $b$.



\figref{fig:NonLocalMagicSurface2} depicts the $\mathcal{M}_2$ surface as a function of subregion size $|A|$ and critical angle $\theta$. As we decrease the magnitude of the bias field, the symmetry-breaking peak is pushed toward lower and lower $\theta$ values.
\begin{figure}[H]
\begin{center}
	\includegraphics[width=9cm]{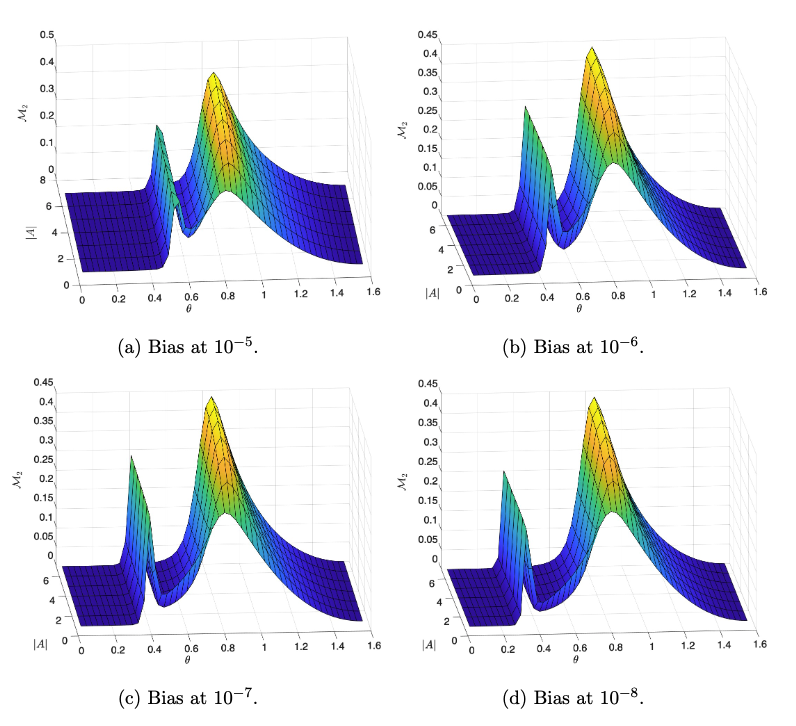}
    \caption{Nonlocal magic $\mathcal{M}_2$ surface, as a function of critical angle $\theta$ and subregion size $|A|$, for different bias offset fields. The bias magnetic field decreases, the peak indicating a symmetry-breaking effect in the system is pushed further away from criticality.}
    \label{fig:NonLocalMagicSurface2}
\end{center}
\end{figure}

\onecolumngrid
\begin{figure}[H]
    \centering
    \includegraphics[scale=0.55]{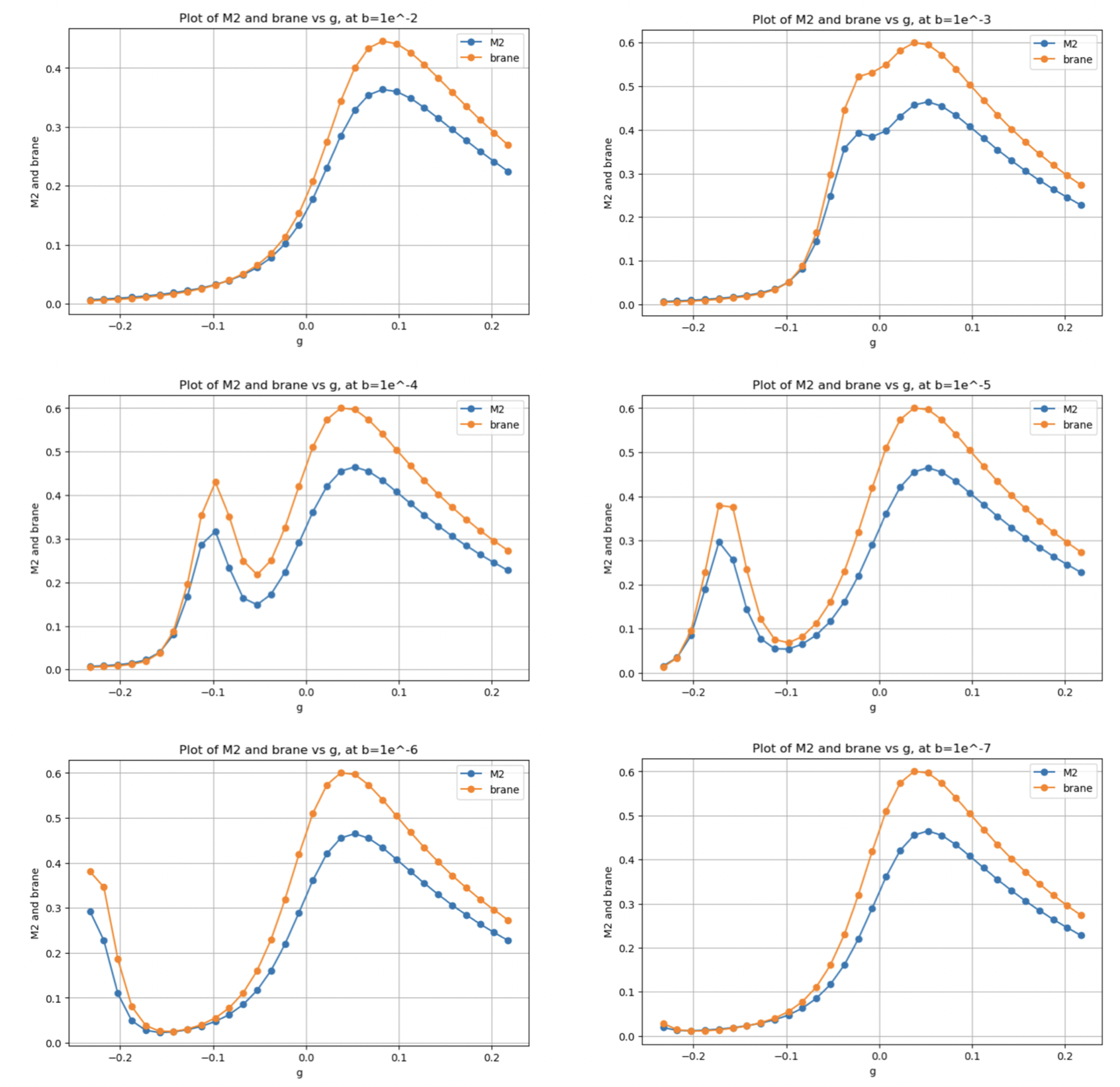}
    \caption{Comparison between $\mathcal{M}_2$ and $|\partial_n\tilde{S}_n|$ (labeled as brane) at various magnetic field $b$ and model parameter $g$. The small peak separates and is pushed to the left as $b$ decreases.}
    \label{fig:Mvsbrane}
\end{figure}

\twocolumngrid

\bibliographystyle{bibliost}
\bibliography{ref,bib} 
\end{document}